\pgfplotsset{compat=1.18}
\definecolor{darkgreen}{rgb}{0.0, 0.5, 0.0}
\newcounter{warnings}
\renewcommand\H{h}
\renewcommand\L{l}
\newcommand\lp{\L\PE}
\newcommand\lo{\L\OP}
\newcommand\hp{\H\PE}
\newcommand\ho{\H\OP}
\newcommand\st{\text{s.t.}}
\newtheorem{assumption}{Assumption}
\newcounter{subassumption}[assumption]
\renewcommand{\thesubassumption}{\theassumption.\alph{subassumption}}
\newcommand{\subassumption}[1]{
    \refstepcounter{subassumption}
    \thesubassumption\ \label{#1}
}
\newtheorem{theorem}{Theorem}
\newtheorem*{theorem*}{Theorem}
\newtheorem{proposition}[theorem]{Proposition}
\newtheorem*{proposition*}{Proposition}
\newtheorem*{lemma*}{Lemma}
\newtheorem{corollary}[theorem]{Corollary}
\newtheorem*{corollary*}{Corollary}
\newtheorem*{claim*}{Proposition}
\theoremstyle{definition}
\newcommand\Util{U}
\newcommand\ELoss{J}
\newcommand\Welf{L}
\newcommand\V{v}
\newcommand\F{f}
\newcommand\OP{2}
\newcommand\PE{1}
\newcommand\Prof{P}
\DeclareMathOperator{\argmin}{argmin}
\journal{Energy Policy}
\begin{document}
\usetikzlibrary{arrows.meta, positioning, calc}
\begin{frontmatter}



\title{When Do Consumers Lose from Variable Electricity Pricing?}


\author[inst1]{Nathan Engelman Lado\corref{cor1}}
\cortext[cor1]{Corresponding author}
\ead{engelm25@mit.edu}

\affiliation[inst1]{organization={Institute for Data Systems and Society, MIT},
            addressline = {77 Massachusetts Ave}, 
            city={Cambridge},
            postcode={02139}, 
            state={MA},
            country={USA}}

\author[inst2]{Richard Chen}
\ead{rachen@mit.edu}
\author[inst3]{Saurabh Amin}
\ead{amins@mit.edu}
\affiliation[inst2]{organization={Civil and Environmental Engineering, MIT},
            addressline = {77 Massachusetts Ave}, 
            city={Cambridge},
            postcode={02139}, 
            state={MA},
            country={USA}}
\affiliation[inst3]{organization={Laboratory for Information and Decision Systems, MIT},
            addressline = {77 Massachusetts Ave}, 
            city={Cambridge},
            postcode={02139}, 
            state={MA},
            country={USA}}
            

\begin{abstract}
Time-varying electricity pricing better reflects the varying cost of electricity compared to flat-rate pricing. Variations between peak and off-peak costs are increasing due to weather variation, renewable intermittency, and increasing electrification of demand. Empirical and theoretical studies suggest that variable pricing can lower electricity supply costs and reduce grid stress. However, the distributional impacts, particularly on low-income consumers, remain understudied. This paper develops a theoretical framework to analyze how consumer heterogeneity affects welfare outcomes when electricity markets transition from flat-rate to time-varying pricing, considering realistic assumptions about heterogeneous consumer demand, supply costs, and utility losses from unmet consumption.

We derive sufficient conditions for identifying when consumers lose utility from pricing reforms and compare welfare effects across consumer types. Our findings reveal that consumer vulnerability depends on the interaction of consumption timing, demand flexibility capabilities, and price sensitivity levels. Consumers with high peak-period consumption and inflexible demand, characteristics often associated with low-income households, are most vulnerable to welfare losses. Critically, we demonstrate that demand flexibility provides welfare protection only when coincident with large price changes, a condition more easily met by higher-income consumers in most empirical studies. Our equilibrium analysis reveals that aggregate flexibility patterns generate spillover effects through pricing mechanisms, with peak periods experiencing greater price changes when they have less aggregate flexibility, potentially concentrating larger price increases among vulnerable populations that have a limited ability to respond. These findings suggest that variable pricing policies designed to support clean energy transitions risk exacerbating energy inequality unless accompanied by targeted policies ensuring equitable access to demand response capabilities and pricing benefits.
\end{abstract}

\begin{graphicalabstract}

\begin{figure}[H]
    \centering
    \begin{tikzpicture}[scale=1.4]
        
        \draw[thick, blue, fill=blue!10] (-6, 3.5) rectangle (-1.1, 5.8);
        \node at (-3.75, 5.4) {\textbf{Vulnerability Factors}};
        \node[align=left] at (-3.75, 4.3) {
            • High peak consumption\\[0.25cm]
            • Limited flexibility\\[0.25cm]
            • High price sensitivity
        };
        
        \draw[thick, orange, fill=orange!10] (1.1, 3.5) rectangle (6, 5.8);
        \node at (3.75, 5.4) {\textbf{Key Insights}};
        \node[align=left] at (3.5, 4.5) {
            • Flexibility has the greatest impact \\in periods of large price shifts\\
            • Peak period inflexibility \\increases peak price shift
        };
        
        \draw[thick, green, fill=green!10] (-6, 0.5) rectangle (-1.1, 2.8);
        \node at (-3.75, 2.4) {\textbf{Theory Contribution}};
        \node[align=left] at (-3.75, 1.6) {
            Sufficient conditions identifying\\when consumers lose from\\pricing transitions.
        };
        
        \draw[thick, red, fill=red!10] (1.1, 0.5) rectangle (6, 2.8);
        \node at (3.75, 2.4) {\textbf{Distributional Pattern}};
        \node[align=left] at (3.75, 1.4) {
            Benefits concentrate among\\flexible consumers with\\less peak-dominated usage and \\high peak period flexibility.
        };
        
        \draw[->, very thick] (-.9, 4.65) -- (.9, 4.65);
        
        \draw[->, very thick] (-3.75, 3.3) -- (-3.75, 3.0);
        \draw[->, very thick] (3.75, 3.3) -- (3.75, 3.0);
        
        \node[align=center] at (0, -0.3) {
            \textbf{Contribution:} Theoretical framework identifying conditions\\
            for consumer welfare losses from the transition to variable pricing
        };
        
    \end{tikzpicture}
    \caption{Theoretical framework for analyzing consumer vulnerability to variable electricity pricing transitions.}
    \label{fig:graphical_abstract}
\end{figure}

\end{graphicalabstract}

\begin{highlights}
\item We derive sufficient conditions identifying when consumers lose utility from variable electricity pricing transitions based on consumption patterns, demand flexibility, and price sensitivity.
\item Demand flexibility provides welfare protection only when coincident with large price changes, explaining why higher-income consumers typically benefit more from time-varying rates.
\item Low-income consumers face amplified welfare effects from pricing reforms due to higher price sensitivity, experience larger utility changes even when their consumption and flexibility patterns are identical.
\item Our results suggest variable pricing policies require complementary measures to prevent disproportionate harm to vulnerable consumers.
\end{highlights}

\begin{keyword}
Variable electricity pricing  \sep Consumer heterogeneity \sep Demand flexibility \sep Distributional effects
\end{keyword}

\end{frontmatter}


\section{Introduction}
The high demand for electricity during peak hours drives up supply costs, making reliable production and delivery more expensive. Flat rates obscure these temporal costs, pushing demand up during peaks and down during off-peak hours \citep{borenstein2022two}. 
Time-varying electricity pricing better aligns consumption with supply costs by providing price signals that can lower system costs and increase reliability. Yet variable pricing also creates distributional concerns. 
Consumers with flexible demand or new technologies can adapt, but low-income households often face higher bills and welfare losses from foregone consumption. We develop a theoretical framework to identify which consumers gain or lose from this transition, and under what conditions low-income households are disproportionately harmed.

In practice, most regions experience pronounced peaks during evening hours and lower demand overnight and mid-day \citep{eia2020electricity}.
Climate change intensifies these fluctuations by increasing cooling demand during hot afternoons, while renewable integration creates additional variability as solar generation peaks mid-day but drops during evening demand peaks. Electrification of transportation and heating further concentrates demand during specific hours when people return home and charge vehicles. Together, these trends widen the gap between peak and off-peak supply costs, as grid operators must maintain expensive backup generation and transmission capacity to meet increasingly sharp demand spikes \citep{Auffhammer_Baylis_Hausman_2017}.  
Grid operators can respond by adding storage or backup generation, expanding transmission, or implementing demand-side measures that shift consumption to lower cost periods \citep{schittekatte2023reforming}. 

Even when flexibility is technically possible, this might come at a significant cost. Low-income households often keep electricity use at subsistence levels for health and safety needs such as heating, cooling, and lighting \citep{Anderson_White_Finney_2012, Cong_Nock_Qiu_Xing_2022, Kwon_2023}. Evidence from multiple regions shows that cost pressures already force compromises: in Britain, households reduce heating at the expense of health \citep{Anderson_White_Finney_2012}; in the U.S., poor insulation makes basic cooling unaffordable \citep{Best_Sinha_2021}, and in Arizona, low-income households delay air conditioning until homes are $4.7$–$7.5^{\circ}$F hotter than those of higher-income households \citep{Cong_Nock_Qiu_Xing_2022}. Such constraints lead to elevated rates of heat-related hospitalization and mortality \citep{Kwon_2023}.

This creates a double burden for low-income households under variable pricing. Budget pressures may force them to be highly price-responsive \citep{Brannlund_Vesterberg_2021}. However, their flexibility is unlikely to be uniform across periods: off-peak discretionary uses may be highly elastic, while peak-period services tied to health and safety may remain relatively inelastic due to health constraints. This asymmetric flexibility pattern suggests that low-income households could appear responsive in aggregate demand studies, yet remain less able to adjust consumption during the peak periods when variable pricing raises costs.

Our theoretical analysis reveals that demand flexibility increases the change in welfare when aligned with large price changes: consumers who are flexible during periods with small price adjustments gain little benefit, while those who can adjust consumption during periods with substantial price changes receive significant welfare protection. This helps explain why variable pricing benefits might be concentrated among higher-income consumers who have greater access to high-quality housing, flexible work schedules, and other demand response capabilities. Additionally, our equilibrium analysis demonstrates important spillover effects: individual flexibility contributes to aggregate demand responsiveness, which affects the magnitude of price adjustments across periods. A peak period with higher aggregate flexibility experiences smaller price changes than one in which consumers are flexible. This creates interactions between consumers in which one consumer’s flexibility influences the welfare of the other.

This paper contributes to the literature by developing a theoretical framework that identifies conditions under which consumers lose utility from variable pricing reforms and when low-income consumers are disproportionately harmed. We derive sufficient conditions based on three key consumer characteristics: consumption patterns across periods, flexibility, and price sensitivity. We show that consumers with high peak use, limited flexibility, and high price sensitivity face the greatest risk from a switch to variable pricing. By incorporating heterogeneous consumer utility functions and realistic assumptions about price responsiveness, we provide analytical tools for evaluating the distributional impacts of pricing reforms before implementation. This theoretical foundation connects empirical work on energy poverty and consumption patterns with the growing literature on electricity market design, offering insights essential for designing variable pricing policies that achieve both efficiency and equity objectives in the transition to more flexible, renewable-powered electricity systems.

\subsection{Contributions}

\textit{Theoretical Framework for Consumer Heterogeneity:} We develop a framework that explicitly incorporates three dimensions of consumer heterogeneity in electricity pricing: consumption patterns across time, flexibility, and price sensitivity (marginal disutility of expenditure). While prior studies often assume constant elasticity across consumers and periods, our framework allows these traits to vary and shows how they affect welfare outcomes.

\textit{Sufficient Conditions for Identifying Vulnerable Consumers:} We derive tractable sufficient conditions that identify which consumers lose from variable pricing transitions, both in absolute terms and relative to other groups. For quadratic loss functions, these conditions are necessary and sufficient; for general demand functions, they provide conservative screening tools. These analytical tools enable ex-ante assessment of distributional impacts.

\textit{Timing of Flexibility Benefits:} We demonstrate that demand flexibility provides welfare protection only when coincident with large price changes. Because welfare benefits scale with the square of price changes, consumers who are flexible during small shifts gain little. This timing-dependent relationship explains why higher-income consumers often benefit more from variable pricing.

\textit{Equilibrium Spillover Effects:} We show how individual flexibility aggregates to shape market prices. High aggregate peak flexibility dampens peak price increases, while low peak flexibility allows larger price adjustments that disproportionately harm inflexible consumers. These spillovers create cross-consumer interactions where one household's flexibility directly affects others through equilibrium price formation.

\textit{Distributional Impact Mechanisms:} Price sensitivity ($A_i$) amplifies welfare changes without affecting their direction. Even with identical consumption patterns and flexibility, consumers with higher marginal disutility of expenditure (e.g., low-income households with $A_\L > A_\H$) experience larger magnitude welfare swings from variable pricing.

Together, these contributions connect empirical research on energy poverty and consumption with theoretical work on electricity market design, providing analytical foundations for designing variable pricing policies that balance efficiency with equity in demand-responsive systems.

\subsection{Related Literature}

\noindent\textbf{Previous Work on Time-Varying Pricing:} The economic theory underlying variable electricity pricing dates back at least to Mohring’s (1970) peak-load pricing model, which established that efficiency requires marginal-cost pricing in each period \citep{Mohring_1970}. Early work by Feldstein (1972) identified key equity concerns, noting that lower-income consumers have less flexibility to adapt to peak pricing due to constraints in appliance ownership and housing quality, while also benefiting more from low prices for essential goods due to higher marginal utility of income \citep{Feldstein_1972}. This analysis highlighted that pricing policies ignoring heterogeneity in flexibility and electricity's relative importance across periods risk disproportionately impacting vulnerable households.

More recent theoretical work has quantified the efficiency gains from time-varying pricing. \citep{Borenstein_longrun} demonstrates how real-time pricing reduces reliance on expensive peaking plants and improves capacity cost recovery, while \cite{borenstein2022two} shows that flat-rate pricing creates larger deadweight losses than other common pricing distortions, including emissions externalities and volumetric recovery of fixed costs. These studies establish the strong efficiency case for variable pricing but give limited attention to distributional consequences.

\noindent\textbf{Empirical Evidence from Pricing Experiments:} Early empirical studies of time-of-use (TOU) tariffs in the United States, reviewed in \citep{aigner1985residential}, found significant behavioral responses, with consumers primarily reducing peak demand rather than shifting consumption to off-peak periods. \citep{caves1984consistency} provided crucial insights into consumer heterogeneity, demonstrating that responsiveness to time-varying prices depends on appliance ownership, weather conditions, and climate control needs. Importantly, they found that households with fewer appliances, typically lower income—showed limited responsiveness to TOU tariffs, reducing their ability to benefit from time-varying prices compared to households with more appliances. More recent studies of critical peak pricing (CPP) programs confirm these patterns, with greater load reductions than TOU but continued evidence that benefits vary across income groups based on appliance flexibility \citep{Faruqui_Sergici_2010}.

\noindent\textbf{Simulation Studies and Distributional Analysis:} When experimental data is unavailable, researchers use historical consumption data to simulate variable pricing impacts, typically assuming constant elasticity across consumers and periods. Results from these studies are mixed regarding distributional effects. \cite{Horowitz_Lave_2014} find that low-income consumers often perform worse under real-time pricing due to smaller, less flexible loads and limited appliance ownership. Conversely, \cite{simshauser2016inequity} conclude that lower-income households with flatter load profiles can benefit from TOU and CPP rates. \cite{burger2020efficiency} shows that while low-income consumers may face higher bills under time-varying tariffs, those with demand flexibility can gain consumer surplus due to lower marginal electricity costs.
These conflicting empirical findings reflect a key limitation in the existing literature: the assumption of homogeneous elasticity across consumers and time periods obscures the heterogeneity in flexibility that our theoretical analysis shows is crucial for understanding distributional outcomes. Our work addresses this gap by developing a framework that explicitly incorporates variation in demand responsiveness across consumer types and time periods, providing theoretical foundations for understanding when and why variable pricing creates winners and losers.

\begin{table}[H]
\centering
\caption{Key Notation and Definitions}
\label{tab:notation}
\begin{tabular}{cl}
\toprule
\textbf{Symbol} & \textbf{Definition} \\
\midrule
\multicolumn{2}{l}{\textit{Periods and Consumers}} \\
$t \in \{\text{PE}, \text{OP}\}$ & Time periods: peak (PE) and off-peak (OP) \\
$i \in \{\L, \H\}$ & Consumer types: low-income ($\L$) and high-income ($\H$) \\
Superscripts $^{\V}$, $^{\F}$ & Variable pricing and flat pricing regimes \\
\multicolumn{2}{l}{\textit{Prices and Consumption}} \\
$\pi^{\F}$ & Flat-rate electricity price (uniform across periods) \\
$\pi_t^{\V}$ & Variable price in period $t$ \\
$d_{it}^{\F}$, $d_{it}^{\V}$ & Consumption by consumer $i$ in period $t$ under flat/variable pricing \\
$d_{it}^*$ & Optimal consumption by consumer $i$ in period $t$ \\
$\bar{d}_{it}$ & Maximum consumption capacity for consumer $i$ in period $t$ \\
\multicolumn{2}{l}{\textit{Consumer Preferences}} \\
$\Util_i$ & Total utility of consumer $i$ across periods \\
$\ELoss_{it}(d)$ & Loss function from consuming $d$ units in period $t$  \\
$A_i$ & Price sensitivity (marginal disutility of expenditure) \\
$A_{\L} > A_{\H}$ & Low-income consumers more price-sensitive \\
$\hat{\ELoss}_{it}(d) = \ELoss_{it}(d)/A_i$ & Normalized loss function\\
\multicolumn{2}{l}{\textit{Supply and Welfare}} \\
$C(d_t)$ & Cost of supplying $d_t$ units in period $t$ \\
$\Welf_t$ & Social loss in period $t$ \\
$\Welf = \sum_t \Welf_t$ & Total social loss \\
\multicolumn{2}{l}{\textit{Flexibility and Elasticity}} \\
$\left|\frac{\partial d_{it}^*}{\partial \pi_t}\right|$ & Consumer $i$'s demand flexibility in period $t$ \\
$\left|\frac{\partial d_t}{\partial \pi_t}\right|$ & Aggregate flexibility in period $t$ \\
$\epsilon_{it}$ & Price elasticity of consumer $i$ in period $t$ \\
\multicolumn{2}{l}{\textit{Key Differences}} \\
$\Delta \Util_i$ & Change in consumer $i$'s utility from flat to variable pricing \\
$\Delta \pi_t = |\pi_t^{\V} - \pi^{\F}|$ & Absolute price change in period $t$ \\
$\Delta \Util_{\H} - \Delta \Util_{\L}$ & Difference in utility changes between consumer types \\
\bottomrule
\end{tabular}
\end{table}

\pagebreak
\section{Model} \label{sec:Model}
\subsection{Consumer Model}

We model a two-period electricity market with peak ($\PE$) and off-peak ($\OP$) periods, denoted $t \in \mathcal{T} \coloneqq \{\PE,\OP\}$. This market serves two consumer types $i \in \mathcal{N} \coloneqq \{\L,\H\}$, representing low-income ($\L$) and high-income ($\H$) households. Let $d_{it}$ denote consumer $i$'s electricity consumption in period $t$, constrained by maximum consumption $\bar{d}_{it}$, reflecting physical or technological limits. Taking the per-kWh price $\pi_t$ as given, consumer $i$'s utility is:
\begin{align}
\Util_i \coloneqq M_i - \sum_{t \in \mathcal{T}}(\ELoss_{it}(d_{it})+A_{i}\pi_t d_{it})
\label{eqn:ConsumUtil},\end{align}

\noindent where $\ELoss_{it}(d_{it})$ represents the loss from consuming $d_{it} < \bar{d}_{it}$, $M_i$ is consumer $i$'s maximum attainable utility, and $A_i > 0$ captures consumer $i$'s marginal disutility of expenditure. We assume $A_{\L} > A_{\H}$ to reflect the greater marginal utility of income for low-income consumers. We normalize utility so that consuming zero electricity yields zero net utility: when $d_{it} = 0$, we have $\Util_i = M_i - \ELoss_{it}(0) = 0$, which implies $M_i = \ELoss_{it}(0)$. We use the hat notation $\hat{\ELoss}_{it} \coloneqq \frac{\ELoss_{it}}{A_{i}}$ to denote electricity loss normalized by the marginal disutility of expenditure. Consumer surplus, measuring net benefit relative to cost, equals $\Util_i/A_i$. 

We focus on a two-period model to capture the fundamental distinction between high-demand periods (when air conditioning or heating needs are critical) and low-demand periods. This stylized framework isolates the key distributional mechanisms while remaining simple for exposition. As shown in Section \ref{sec:extensions}, the form of the results holds for a more general $N$ consumer, $T$ period model. 

For any feasible consumption level $d \in [0,\bar{d}_{it}]$, we assume the consumer model satisfies:
\begin{assumption} \label{Assumption:Consumer}
The electricity loss function $\ELoss_{it}(d)$ satisfies:
\begin{enumerate}
    \item[\subassumption{Assumption:ELoss_Properties}] \textbf{Regularity:} $\ELoss_{it}$ is three times differentiable with $\ELoss_{it} \geq 0$, $\ELoss_{it}' < 0$, $\ELoss_{it}'' > 0$, and $\ELoss_{it}''' < 0$.
    
    \item[\subassumption{Assumption: PeriodLoss_order}] \textbf{Peak period criticality:} Loss from under-consumption is greater during peak periods: $\ELoss_{i\OP}(d) \leq \ELoss_{i\PE}(d)$ for all $d$.
    
    \item[\subassumption{Assumption: PeriodFW_order}] \textbf{Marginal loss ordering:} The marginal reduction in loss from additional consumption satisfies $|\ELoss_{i\PE}'(d)| \geq |\ELoss_{i\OP}'(d)|$ for all $d$.
    
    \item[\subassumption{Assumption: PeriodSW_order}] \textbf{Curvature ordering:} The rate of change in marginal loss is greater during peak periods: $\ELoss''_{i\PE}(d) \geq \ELoss''_{i\OP}(d) > 0$ for all $d$.
\end{enumerate}
\end{assumption}

Assumption \ref{Assumption:Consumer} provides the foundation for our consumer model. The regularity conditions in \ref{Assumption:ELoss_Properties} are standard for non-appliance-based models of electricity consumption \citep{jiang2011multi, Jordehi_optDR_review, deng2015survey}. The decreasing loss function ($\ELoss_{it}' < 0$) reflects that consuming more electricity reduces loss, while convexity ($\ELoss_{it}'' > 0$) ensures diminishing marginal loss reduction: initial units serve critical needs (lighting, basic appliances), while later units provide smaller benefits (comfort, non-essential uses). The third derivative condition ($\ELoss_{it}''' < 0$) implies convex demand curves (see \eqref{clm:J_triple_demand_convex}). This means that demand becomes more price-responsive at higher prices, consistent with models where non-essential uses are eliminated first and consumers reduce more critical consumption as prices continue to rise \citep{burger2020efficiency}. Note that the consumer utility function \eqref{eqn:ConsumUtil} is concave in $d_{it}$ since $\frac{\partial^2 \Util_i}{\partial d_{it}^2} = -\ELoss_{it}''(d_{it}) < 0$.

We are especially concerned with the energy demand of heating or cooling devices during critical high-demand periods. The period-specific assumptions \ref{Assumption: PeriodLoss_order}--\ref{Assumption: PeriodSW_order} capture systematic differences between peak and off-peak electricity demand during weather extremes. Assumption \ref{Assumption: PeriodLoss_order} reflects that peak periods involve additional critical electricity uses (e.g., air conditioning during heat waves), leading to greater loss from forgone consumption at any given level compared to off-peak periods. Assumption \ref{Assumption: PeriodFW_order} implies that each unit of electricity results in less loss during peak periods, since appliances work harder to maintain temperature under extreme conditions, and more appliances are in use. Assumption \ref{Assumption: PeriodSW_order} indicates that marginal benefits of consumption decrease more quickly during peak periods—the first units address critical health and safety needs, but once satisfied, additional consumption yields rapidly diminishing benefits compared to the more gradual decline in the off-peak period.

These assumptions enable analysis using standard electricity loss functions, including the quadratic specifications commonly employed in demand response models \citep{burger2020efficiency, fell2014new}, while ensuring our results reflect realistic consumption patterns across time periods and income groups.

\subsection{Optimal Demand}

Formally, for a given pricing scheme $\pi_{t}, t \in \mathcal{T}$, consumer $i$'s decision problem is equivalent to:
\begin{subequations}\label{eqn:consumer_prob}
\begin{align}\label{eqn:consumer_opt}
    \underset{d_{it}}{\min} &\hspace{0.1cm}\sum_{t \in \mathcal{T}}\left(\ELoss_{it}(d_{it})+A_{i}\pi_{t} d_{it}\right)\\
    \st\quad 
    & 0 \leq d_{it} \leq \bar{d}_{it} \quad \forall t \in \mathcal{T}\label{eqn:cons_constraint2}
\end{align}
\end{subequations}

For interior solutions where $0 < d_{it}^{*} < \bar{d}_{it}$, the first-order condition is necessary and sufficient:

\begin{align}\frac{\partial\ELoss_{it}}{\partial d_{it}} \biggr \rvert_{d_{it}^*} + A_{i}\pi_{t} = 0 \implies -\frac{\partial \hat{\ELoss}_{it}}{\partial d_{it}} \biggr \rvert_{d_{it}^*} = \pi_{t}\quad\forall i \in \mathcal{N}, t \in \mathcal{T}.\label{eqn:FOC_cons}\end{align}

That is, when optimal consumption lies in the interior of the interval $[0, \bar{d}_{it}]$, the marginal avoided loss, or willingness to pay, equals the price. We focus on interior solutions throughout this work, which allows us to use the first-order condition \eqref{eqn:FOC_cons} to characterize a consumer's optimal demand as a function of price (full derivation in \ref{subsec:consumer_kkt}). Note that since $\ELoss_{it}' < 0$ and $\ELoss_{it}'' > 0$ from Assumption \ref{Assumption:Consumer}, the function $\ELoss_{it}'$ is strictly decreasing and therefore invertible on the relevant domain.

\begin{proposition} \label{prop:demand_curve}
Optimal demand as a function of price is:
    \begin{align}\label{eqn:x_opt} 
    d_{it}^*(\pi_{t}) &= \min\{\max\{(\hat{\ELoss}_{it}')^{-1}(-\pi_{t}), 0\}, \overline{d}_{it}\}\quad \forall i, t.
    \end{align}
The $\min\{\max\{\cdot\}\}$ construction ensures that demand respects both the non-negativity constraint ($d_{it} \geq 0$) and the capacity constraint ($d_{it} \leq \bar{d}_{it}$).
\end{proposition}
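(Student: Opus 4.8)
The plan is to exploit the separable structure of problem \eqref{eqn:consumer_prob} together with the strict convexity guaranteed by Assumption \ref{Assumption:ELoss_Properties}. Since the objective in \eqref{eqn:consumer_opt} is a sum of period-specific terms $\ELoss_{it}(d_{it}) + A_i \pi_t d_{it}$ and the box constraints \eqref{eqn:cons_constraint2} couple no two periods, the minimization decouples into independent scalar problems, one for each $t \in \mathcal{T}$. I would therefore fix a period and study the map $g_{it}(d) \coloneqq \ELoss_{it}(d) + A_i \pi_t d$ on the interval $[0, \bar{d}_{it}]$.

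Next I would establish existence and uniqueness of the minimizer and reduce to the stationarity condition. Because $\ELoss_{it}'' > 0$, the function $g_{it}$ is strictly convex, so it attains a unique minimizer on the compact convex set $[0,\bar{d}_{it}]$ and the KKT conditions are both necessary and sufficient. Stationarity reads $g_{it}'(d) = \ELoss_{it}'(d) + A_i \pi_t = 0$; dividing by $A_i > 0$ and using $\hat{\ELoss}_{it} = \ELoss_{it}/A_i$ rewrites this as $\hat{\ELoss}_{it}'(d) = -\pi_t$, which is exactly \eqref{eqn:FOC_cons}. Since $\ELoss_{it}'' > 0$ makes $\hat{\ELoss}_{it}'$ strictly decreasing and hence invertible on the relevant domain (as noted just before the statement), the unique unconstrained stationary point is $d_{it}^{\circ} = (\hat{\ELoss}_{it}')^{-1}(-\pi_t)$.

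Finally I would convert this interior stationary point into the clamped formula via a standard projection argument. Strict convexity makes $g_{it}'$ strictly increasing, so $g_{it}$ decreases to the left of $d_{it}^{\circ}$ and increases to its right. Three cases result: if $d_{it}^{\circ} \in [0,\bar{d}_{it}]$, the constrained minimizer is $d_{it}^{\circ}$ itself; if $d_{it}^{\circ} < 0$, then $g_{it}$ is increasing throughout the feasible set, forcing the minimizer to the lower boundary $0$; and if $d_{it}^{\circ} > \bar{d}_{it}$, then $g_{it}$ is decreasing throughout, forcing the minimizer to the upper boundary $\bar{d}_{it}$. These three outcomes are exactly summarized by the nested projection $\min\{\max\{d_{it}^{\circ},0\},\bar{d}_{it}\}$, yielding \eqref{eqn:x_opt}.

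I do not anticipate a genuine obstacle, as the result is a routine consequence of convexity; the only point requiring care is the treatment of the non-interior cases, where the stationarity condition \eqref{eqn:FOC_cons} admits no feasible solution and one must instead argue from monotonicity of $g_{it}'$ to locate the minimizer at the appropriate boundary. The remaining KKT bookkeeping is deferred to \ref{subsec:consumer_kkt}, consistent with the authors' cross-reference.
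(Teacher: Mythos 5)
Your proposal is correct and follows essentially the same route as the paper's own proof in \ref{subsec:consumer_kkt}: verify convexity so the KKT conditions are necessary and sufficient, obtain the interior solution $(\hat{\ELoss}_{it}')^{-1}(-\pi_t)$ from stationarity, and handle the boundary cases to arrive at the clamped formula. If anything, your monotonicity argument for locating the minimizer at $0$ or $\bar{d}_{it}$ when the stationary point falls outside the feasible interval is slightly more explicit than the paper's multiplier-sign bookkeeping, but the substance is identical.
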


We refer to the optimal consumption function \eqref{eqn:x_opt} as the demand curve; see Figure \ref{fig:pi_vs_d}. The normalized utility formulation ensures that $\hat{\ELoss}_{it}(d_{it}^*) + \pi_{t} d_{it}^*$ produces the same willingness to pay and demand curve, given $\hat{\ELoss}_{it}(d_{it}^*)$, regardless of the consumer's exact price sensitivity $A_i$ or loss $\ELoss_{it}(d_{it}^*)$. The first-order condition represents the trade-off between the marginal benefit from avoided electricity loss and the marginal cost of consumption. Consequently, although a low-income consumer may have greater marginal avoided loss from electricity consumption (due to higher $A_i$), they do not necessarily demand more electricity than high-income consumers, as this depends on the interaction between $A_i$, $\bar{d}_{it}$, and the loss function $\ELoss_{it}(\cdot)$.

\begin{figure}[H]
    \centering
    \begin{tikzpicture}[x=0.5cm]
        \draw[-Latex] (0,0) -- (11,0) node [anchor=north] {$\pi_{t}$};
        \draw[-Latex] (0,0) -- (0,3) node [anchor=east] {$d_{it}$};

        \draw[TealBlue, ultra thick, domain=0:6.34, smooth] plot (\x, {3*(\x+1)^(-0.55)-1});
        \draw[TealBlue, ultra thick] (6.3,0) -- (10,0) node [anchor=south] {$d^*_{it}(\pi_{t})$};
        \draw[Mulberry, very thick, dashed, domain=10:0, smooth] plot (\x, {3*(\x+1)^(-0.55)-1}) node [anchor=west] {$(\hat{\ELoss}_{it}')^{-1}(-\pi_{t})$};
        \draw (-0.1, {3*1^(-0.55)-1}) -- (0.1, {3*1^(-0.55)-1}) node [anchor=east] {$\bar{d}_{it}$};
        \draw (1, -.06) -- (1, .06) node [anchor=north] {$-\hat{\ELoss}_{it}'(\bar{d}_{it})$};
        \draw (6.34, -.06) -- (6.34, .06) node [anchor=north] {$-\hat{\ELoss}_{it}'(0)$};
    \end{tikzpicture}
    \caption{Consumption $d^*_{it}$ as a function of price $\pi_{t}$; $d_{it}$ is bounded below by $0$ and above by $\bar{d}_{it}$.}
    \label{fig:pi_vs_d}
\end{figure}
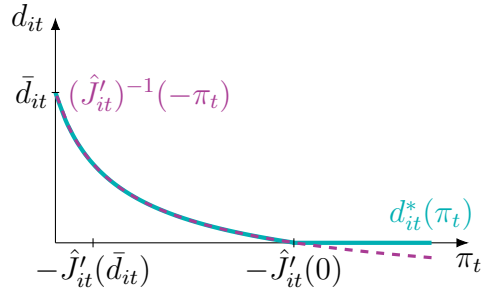

Having characterized optimal demand, we now derive how demand responds to price changes—a key determinant of distributional outcomes under variable pricing. Consumer flexibility—the responsiveness of demand to price changes—is given by the derivative of the demand function. Taking the derivative of \eqref{eqn:x_opt}, we obtain:

\begin{align}\frac{\partial d_{it}^{*}(\pi_{t})}{\partial \pi_{t}} = \begin{cases}
    0 & d_{it}^{*}(\pi_{t}) = \overline{d}_{it},\\
    \frac{-A_i}{\ELoss_{it}''(d_{it}^{*}(\pi_{t}))}  = -\frac{1}{\hat{\ELoss}_{it}''(d_{it}^{*}(\pi_{t}))}  & d_{it}^{*}(\pi_{t}) \in (0, \overline{d}_{it}),\\
    0 & d_{it}^{*}(\pi_{t}) = 0. \\
\end{cases}\label{clm:nondec_consumption}\end{align}

Therefore, for interior solutions, consumer flexibility is $\left|\frac{\partial d_{it}^{*}}{\partial \pi_{t}}\right| = \frac{1}{\hat{\ELoss}_{it}''(d_{it}^{*}(\pi_{t}))}$. For identical loss functions $\ELoss_{\L t}=\ELoss_{\H t}$, the condition $A_{\L} > A_{\H}$ implies $\hat{\ELoss}_{\L t}'' < \hat{\ELoss}_{\H t}''$, so low-income consumers are more flexible than high-income consumers. From Assumption \ref{Assumption:Consumer}, we know that $\hat{\ELoss}''_{it}>0$ and $\hat{\ELoss}''_{it}$ is decreasing in demand, so flexibility increases with demand. Furthermore, since $\hat{\ELoss}''_{it}(d_{it}^{*}(\pi_{t}))$ is the composition of two decreasing functions $\hat{\ELoss}_{it}''(d_{it})$ and $d_{it}^{*}(\pi_{t})$, flexibility also decreases as price increases.
 
Finally, we examine how consumer utility changes with price. We apply the chain rule to \eqref{eqn:ConsumUtil} and use \eqref{eqn:FOC_cons} and \eqref{clm:nondec_consumption} to show that the decrease in utility is proportional to optimal consumption:
\begin{align}
\frac{\partial \Util_{it}^{*}(\pi_{t})}{\partial \pi_{t}} = 
\begin{cases}0 & d_{it}^{*}(\pi_{t}) = \overline{d}_{it}, \\
-A_{i}d_{it}^{*}(\pi_{t}) & d_{it}^{*}(\pi_{t}) \in (0, \overline{d}_{it}),\\
 0 & d_{it}^{*}(\pi_{t}) = 0. \label{clm:dec_U}\\
\end{cases}
\end{align}

For the same level of demand, the low-income consumer experiences larger changes in utility than the high-income consumer because $A_{\L} > A_{\H}$; the low-income consumer gains more utility from a price decrease but loses more from a price increase.

Finally, we introduce the aggregate consumer to analyze market-level behavior. The aggregate consumer represents the sum of all individual consumers with aggregate consumption $d_{t} = \sum_{i\in \mathcal{N}} d_{it}$ in period $t$. We construct an aggregate loss function such that the aggregate consumer's optimization problem mirrors that of individual consumers. In \ref{Appendix:Agg_Cons}, we prove that under Assumption \ref{Assumption:Consumer}, an appropriately constructed aggregate consumer satisfies the same optimality conditions as individual consumers: the aggregate first-order condition $-\hat{\ELoss}_{t}'(d_t^*) = \pi_{t}$, the aggregate demand characterization $d_{t}^* = (\hat{\ELoss}_{t}')^{-1}(-\pi_{t})$, and aggregate flexibility $|\frac{\partial d_{t}^*}{\partial \pi_{t}}| = \frac{1}{\hat{\ELoss}_{t}''(d_t^*)}$. We also establish that $\hat{\ELoss}_{t}''$ is increasing in the curvature of individual loss functions, so aggregate flexibility falls when consumers are less flexible individually.

\subsection{Consumer Heterogeneity}

Our modeling approach addresses two critical gaps identified above. First, despite empirical evidence of substantial heterogeneity in price responsiveness \citep{Faruqui_Sergici_2010, aigner1985residential}, existing studies typically assume uniform elasticity when analyzing variable pricing impacts \citep{burger2020efficiency, Horowitz_Lave_2014}. Second, while empirical studies document varying distributional outcomes \citep{simshauser2016inequity, borenstein2012redistributional}, the literature lacks a unified theoretical framework for predicting when low-income consumers will be harmed by pricing reforms \citep{joskow2012dynamic}.

We address these gaps through two key modeling innovations: heterogeneous demand flexibility and parameterized consumption patterns.

\textbf{Demand flexibility.} From \eqref{clm:nondec_consumption}, consumer flexibility is $\left|\frac{\partial d_{it}^{*}}{\partial \pi_{t}}\right| = \frac{1}{\hat{\ELoss}_{it}''(d_{it}^{*}(\pi_{t}))}$. Existing studies of variable pricing typically assume either price-insensitive consumers or price-sensitive consumers with isoelastic demand (constant proportional response to price changes) \citep{burger2020efficiency, Horowitz_Lave_2014, simshauser2016inequity}. In these studies, elasticity is constant across consumers and periods. However, experimental studies reveal substantial heterogeneity in price responsiveness. \citet{Faruqui_Sergici_2010} document different elasticities between peak and off-peak periods, while \citet{aigner1985residential} and \citet{caves1984consistency} show that elasticity varies with temperature, appliance use, and local climate—factors that differ systematically between periods and across consumer groups. Our framework captures this heterogeneity by allowing flexibility to vary through the parameters $A_i$ and $\ELoss_{it}''$, enabling analysis of how different responsiveness patterns affect distributional outcomes. We use flexibility (quantity response per unit price change) rather than elasticity (proportional response) because flexibility directly determines the magnitude of consumption adjustments and their welfare implications.

\textbf{Consumption patterns.} A consumer's consumption pattern—encompassing demand levels and timing across periods—largely determines how they are affected by variable pricing transitions. Empirical studies reveal that low-income households exhibit different patterns than higher-income groups, but these differences vary significantly across contexts. In Australia, \citet{simshauser2016inequity} find that low-income households have larger overall consumption with more consistent demand across peak and off-peak periods. Conversely, \citet{Horowitz_Lave_2014} document that such households in their sample have more peak-dominated usage, while \citet{borenstein2012redistributional} find that high-income households are the largest overall electricity users. These differences translate directly into distributional outcomes: \citet{simshauser2016inequity} find that low-income households benefit most from tariff reform due to their flatter profiles, while \citet{Horowitz_Lave_2014} and \citet{borenstein2012redistributional} conclude that low-income households face higher bills under variable pricing due to their peak-heavy or low overall usage patterns, respectively. Our framework captures this empirical heterogeneity by parameterizing maximum consumption ($\bar{d}_{it}$) and loss curvature ($\ELoss_{it}''$), allowing analysis of how consumption patterns and pricing structures interact to determine distributional impacts.

Our parameterization generates heterogeneity in consumer flexibility $\left(\left|\frac{\partial d_{it}^*}{\partial \pi_{t}}\right| = \frac{A_i}{\ELoss_{it}''(d_{it}^*(\pi_{t}))}\right)$ and consumption patterns under flat pricing. 
We use this framework to derive general properties of optimal consumption and characterize how flexibility and consumption patterns determine consumer welfare changes. Using no assumptions regarding the differences between consumers beyond $A_{\L} > A_{\H}$ (reflecting higher marginal disutility of expenditure for low-income consumers), we identify conditions under which pricing changes disproportionately impact low-income consumers, providing theoretical foundations for understanding the distributional consequences of electricity rate design.

\subsection{Social Welfare}
We model a social welfare-maximizing planner who sets electricity prices and procures supply to serve consumer demand. Following \citet{deng2015survey}, this setup simplifies the two-stage market structure of \citet{Joskow_Tirole_2006} by combining wholesale procurement and retail pricing decisions into a single optimization problem. The planner faces a cost function $C(d_t)$ representing the cost of procuring aggregate electricity $d_t$ in period $t$, where the same cost technology applies across both peak and off-peak periods. The cost function satisfies:

\begin{assumption} \label{Assumption:Supply}
The electricity procurement cost function is non-negative, non-decreasing, and convex:
    $C(d) \geq 0$,
    $C'(d) \geq 0$, and
    $C''(d) \geq 0$ for all $d \geq 0$.
\end{assumption}

These conditions reflect standard properties of electricity supply: non-negative costs, non-decreasing marginal costs (as cheaper generation units are dispatched first), and convex costs (reflecting capacity constraints and the merit order of generation). We assume identical cost functions across periods to isolate the role of demand-side heterogeneity in driving price differences and distributional outcomes. This assumption isolates the effect of heterogeneity in consumer demand on price variation rather than mixing these effects with those of supply-side factors. 

The planner maximizes total economic surplus by maximizing aggregate consumer surplus minus electricity procurement costs. Equivalently, we formulate this as minimizing total social loss:
\begin{subequations}\label{eqn:operator_prob}
\begin{align}
    \Welf^* = & \underset{\{\pi_{t}\}}{\min} \sum_{t \in \mathcal{T}} \Welf_t(\pi_t) \label{eqn:operator_opt}\\
    \text{where } \Welf_t(\pi_t) &= C(d_{t}^*(\pi_t)) + \hat{\ELoss}_{t}(d_{t}^*(\pi_t)) \label{eqn:welfare_def}\\
    \text{and } d_{t}^*(\pi_t) &= \underset{d \in [0,\bar{d}_t]}{\argmin} \{\hat{\ELoss}_{t}(d) + \pi_td\} \label{eqn:consumer_response}\\
    \st \quad \pi_t &\geq 0 \quad\forall t \in \mathcal{T} \label{eqn:price_constraint}
\end{align}
\end{subequations}

Here, $\Welf_t(\pi_t)$ represents the social loss in period $t$, comprising procurement costs and aggregate consumer loss from under-consumption (where $\hat{\ELoss}_{t}(d_{t}^*(\pi_t))=\sum_{i \in \mathcal{N}} \ELoss_{it}(d_{it}^*(\pi_t))$. The constraint \eqref{eqn:consumer_response} reflects that consumers optimally respond to prices by minimizing their individual loss functions, where $\bar{d}_t = \sum_{i \in \mathcal{N}} \bar{d}_{it}$ represents aggregate maximum consumption in period $t$. Under Assumptions \ref{Assumption:Consumer} and \ref{Assumption:Supply}, both $C(d)$ and $\hat{\ELoss}_{t}(d)$ are convex, and the constraint set is compact, ensuring that $\Welf_t$ is convex in $d_t$ and that the planner's problem has a unique solution.

Under flat pricing, we add the constraint:
\begin{align}\label{eqn:flat_equal_price}
    \pi_t = \pi^{\F} \text{ for all } t \in \mathcal{T}
\end{align}

We use superscript $^{\V}$ to denote variable pricing equilibrium outcomes and $^{\F}$ for flat pricing equilibrium outcomes. The key difference is that variable pricing allows period-specific prices that reflect marginal costs, while flat pricing uses a single price across all periods.

Since variable pricing does not include the constraint \eqref{eqn:flat_equal_price}, the planner achieves a weakly lower social loss under variable pricing: $\Welf^{\V} \leq \Welf^{\F}$. However, improved aggregate welfare does not guarantee that all consumers benefit, nor that impacts are distributed equally. Therefore, our analysis focuses on distributional outcomes, particularly for low-income consumers who may be most vulnerable to pricing changes.

We introduce notation for changes resulting from the transition from flat to variable pricing:
\begin{align}
\Delta d_{t} &\coloneqq d_{t}^{\V} - d_{t}^{\F} \quad \text{(change in consumption in period } t\text{)}\\
\Delta \pi_{t} &\coloneqq \pi_{t}^{\V} - \pi^{\F} \quad \text{(change in price in period } t\text{)}\\
\Delta \Util_{it} &\coloneqq \Util_{it}^{\V} - \Util_{it}^{\F} \quad \text{(change in period-} t \text{ utility for consumer } i\text{)}\\
\Delta \Util_{i} &\coloneqq \Delta \Util_{i\PE} + \Delta \Util_{i\OP} \quad \text{(total utility change for consumer } i\text{)}
\end{align}

In Section \ref{sec:equilibrium}, we characterize equilibrium prices and consumption under both regimes. Section \ref{sec:Cons_Outcomes} then analyzes consumer welfare impacts is  examining two key distributional questions: (1) when does a consumer experience utility losses ($\Delta \Util_i < 0$), and (2) when do low-income consumers fare worse than high-income consumers ($\Delta \Util_{\L} - \Delta \Util_{\H} < 0$)?

\section{Equilibrium Characterization} \label{sec:equilibrium}
This section characterizes equilibrium solutions under variable and flat pricing, focusing on how differences in prices, demand, and welfare shape consumer outcomes. A central insight from this section is that the magnitude of price changes depends on the curvature of the planner’s welfare function. Steeper functions (higher curvature) limit price adjustments, while flatter functions allow larger ones. Flexibility drives this sensitivity: greater flexibility in the peak period steepens the welfare function, while inflexibility flattens it and increases price movements. 
 To demonstrate these effects, we begin by showing the equilibrium conditions under both regimes.

\subsection{Equilibrium Properties} \label{subsec:equilibrium_properties}

Under \emph{variable pricing}, the planner chooses period-specific prices $\pi_t^{\V}$ to minimize social loss. The first-order conditions require that the marginal effect of a price change on social welfare equals zero in each period:
\begin{align}\label{eqn:var_FOC} 
     \frac{\partial \Welf_{t}(\pi_t)}{\partial \pi_t} \biggr \rvert_{\pi_t^{\V}} = 0, \quad \forall t \in \mathcal{T}
\end{align}
Using the chain rule and the definition $\Welf_t(\pi_t) = C(d_t^*(\pi_t)) + \hat{\ELoss}_t(d_t^*(\pi_t))$:
\begin{align}
     \frac{\partial \Welf_{t}}{\partial \pi_t} = \left(\frac{\partial C(d_t)}{\partial d_t} + \frac{\partial \hat{\ELoss}_t(d_t)}{\partial d_t}\right) \frac{\partial d_t^*(\pi_t)}{\partial \pi_t}
\end{align}
For interior aggregate demand ($0 < d_t^{\V} < \sum_i \bar{d}_{it}$), Assumption \ref{Assumption:Consumer} ensures that $\hat{\ELoss}_t'$ is strictly decreasing and therefore invertible, which implies $\frac{\partial d_t^*}{\partial \pi_t} = -\frac{1}{\hat{\ELoss}_t''(d_t^*)} \neq 0$. Therefore, the first-order condition \eqref{eqn:var_FOC} requires:
\begin{align}\label{eqn:var_FOC_expanded} 
     \frac{\partial C(d_{t}^{\V})}{\partial d_t} + \frac{\partial \hat{\ELoss}_t(d_{t}^{\V})}{\partial d_t} = 0 \iff \frac{\partial C(d_{t}^{\V})}{\partial d_t} = -\frac{\partial \hat{\ELoss}_t(d_{t}^{\V})}{\partial d_t} = \pi_t^{\V}, 
\end{align}
where the final equality uses the aggregate consumer's first-order condition from Section \ref{sec:Model}. This equilibrium condition shows that marginal procurement cost equals marginal avoided loss, which in turn equals the equilibrium price. This aligns consumer incentives with social costs: consumers pay exactly the marginal cost of providing additional electricity, ensuring efficient consumption decisions.

Under \emph{flat pricing}, consumers still optimize according to \eqref{eqn:FOC_cons}, so marginal avoided loss equals the flat price in both periods:
\begin{align}
    -\frac{\partial \hat{\ELoss}_{\PE}(d_{\PE}^{\F})}{\partial d_{\PE}} = -\frac{\partial \hat{\ELoss}_{\OP}(d_{\OP}^{\F})}{\partial d_{\OP}} = \pi^{\F} \label{eqn:flat_consumer_BR}
\end{align}

However, marginal costs need not equal marginal avoided loss in each period under flat pricing, since the planner is constrained to use a single price. The planner's first-order condition for the constrained problem is:
\begin{align*}
    \frac{d}{d\pi^{\F}}\left[\Welf_{\PE}(\pi^{\F}) + \Welf_{\OP}(\pi^{\F})\right] = 0
\end{align*}
This yields:
\begin{align}\frac{\partial \Welf_{\PE}(\pi^{\F})}{\partial \pi^{\F}} + \frac{\partial \Welf_{\OP}(\pi^{\F})}{\partial \pi^{\F}} = 0 \iff \frac{\partial \Welf_{\PE}(\pi^{\F})}{\partial \pi^{\F}} = -\frac{\partial \Welf_{\OP}(\pi^{\F})}{\partial \pi^{\F}} \label{eqn:flat_FOC_simple}
\end{align}

This condition shows that the marginal welfare loss from raising the price in one period must exactly offset the marginal welfare gain from the same price change in the other period. This cross-period balancing is the key difference from variable pricing, where each period's price independently balances marginal costs and benefits within that period alone. Figure \ref{fig:op_probs} illustrates this equilibrium condition: at the flat price $\pi^{\F}$, the slopes of the period-specific welfare functions (shown as tangent lines) are equal in magnitude but opposite in sign, representing the marginal welfare balance described in \eqref{eqn:flat_FOC_simple}.

Figure \ref{fig:op_probs} illustrates the equilibrium condition \eqref{eqn:flat_FOC_simple}. At the flat price $\pi^{\F}$, the slopes of the period-specific welfare functions are equal in magnitude but opposite in sign, visualizing the balance of marginal welfare between periods at the flat price.

\begin{figure}[H]
        \centering
        \begin{minipage}{\textwidth}
            \centering
            \begin{tikzpicture}[x=1.8cm, y=0.3cm]
                \draw[Latex-Latex] (-1,0) -- (5.5,0) node [anchor=north] {$\pi$};
                \draw[Latex-Latex] (0,-3) -- (0,25) node [anchor=east] {$\Welf(\pi)$};
                
                \draw[ProcessBlue, domain=1:5, smooth, very thick] plot (\x, {.9*\x^2-6*\x+15}) node [anchor=south] {};
                \draw[dashed] (3.3, 4.1*3.3^2-15.6*3.3+29) -- (3.3, 0) node [anchor=north] {\scriptsize $\pi_{\PE}^{\V}$};
                \draw[dashed] (3.3, 4.1*3.3^2-15.6*3.3+29) -- (0, 4.1*3.3^2-15.6*3.3+29) node [anchor=east] {\scriptsize $\Welf(\pi_{\PE}^{\V})$};
                \draw[dashed]   (3.3, 3.2*3.3^2-9.6*3.3+1.8) -- (0, 3.2*3.3^2-9.6*3.3+1.8) node [anchor=east] {\scriptsize $\Welf_{\PE}(\pi_{\PE}^{\V})$};
                \draw[dashed]   (1.9, 3.2*1.9^2-9.6*1.9+9) -- (0, 3.2*1.9^2-9.6*1.9+9);
                \node[text=ProcessBlue] at (4.5,4.5) {\footnotesize $\Welf_{\PE}(\pi)$};

                \draw[Red, domain=0:3, smooth, very thick] plot (\x, {3.2*\x^2-9.6*\x+9}) node [anchor=east] {};
                 
                 \draw[dashed]   (1.5, 3.2*1.5^2-9.6*1.5+9) -- (0, 3.2*1.5^2-9.6*1.5+9) ;
                 \draw[dashed]   (1.9, 3.2*1.9^2-9.6*1.9+9) -- (0, 3.2*1.9^2-9.6*1.9+9);
                 \node at (-.34,3.2*1.5^2-9.6*1.5+10) {\scriptsize $\Welf_{\OP}(\pi^{\F})$} ;
                 \draw[dashed]   (1.5, 3.2*1.5^2-9.6*1.5+9) -- (0, 3.2*1.5^2-9.6*1.5+9); 
                 \node at (-.34,3.2*1.5^2-9.6*1.5+8.5) {\scriptsize $\Welf_{\OP}(\pi_{\OP}^{\V})$} ;
                \draw[dashed]   (1.5, 4.1*1.5^2-15.6*1.5+29) -- (1.5, 0) node [anchor=north] {\scriptsize $\pi_{\OP}^{\V}$};
                \node at (-.3,4.1*1.9^2-15.6*1.9+30.5) {\scriptsize $\Welf(\pi_{\OP}^{\V})$};
                \draw[dashed]   (1.5, 4.1*1.5^2-15.6*1.5+29) -- (0, 4.1*1.5^2-15.6*1.5+29) ;
                \node[text=red] at (2.6,8) {\footnotesize$\Welf_{\OP}(\pi)$};
                
                \draw[ForestGreen, domain=.3:3.5, smooth, very thick] plot (\x, {4.1*\x^2-15.6*\x+29}) node [anchor=south] {};
                
                \draw[dashed] (1.9, 4.1*1.9^2-15.6*1.9+29) -- (1.9, 0) node [anchor=north] {\footnotesize $\pi^{\F}$} ;
                \node at (-.3,4.1*1.9^2-15.6*1.9+28.5) {\scriptsize $\Welf(\pi^{\F})$};
                \node[text=ForestGreen] at (1.6,18) {\footnotesize $\Welf(\pi)$};
                
                \draw[dashed] (1.9, 4.1*1.9^2-15.6*1.9+29) -- (0, 4.1*1.9^2-15.6*1.9+29) ;
                \draw[dashed] (1.9, 0.9*1.9^2-6*1.9+15) -- (0, 0.9*1.9^2-6*1.9+15) node [anchor=east] {\scriptsize $\Welf_{\PE}^{\F}$};;
                \draw[black, very thick] (1.9, {0.9*1.9^2-6*1.9+15}) -- ++(0.6, {2*1*1.9-5.35}) ;
                \draw[black, very thick] (1.9, {0.9*1.9^2-6*1.9+15}) -- ++(-0.6, {-(2*1*1.9-5.35)}) node [anchor=north] {\scriptsize $\frac{\partial\Welf_{\PE}^{\F}}{\partial \pi^{\F}}$};
                
                \draw[black, very thick] (1.9, {3.2*1.9^2-9.6*1.9+9}) -- ++(0.6, {-(2*1*1.9-5.35)}) node [anchor=north] {\scriptsize $\frac{\partial\Welf_{\OP}^{\F}}{\partial \pi^{\F}}$};
                \draw[black, very thick] (1.9, {3.2*1.9^2-9.6*1.9+9}) -- ++(-0.6, {(2*1*1.9-5.35)})  ;
            
            \end{tikzpicture}
       \caption{The planner's welfare functions by period and pricing regime. The peak period objective $\Welf_{\PE}(\pi)$ (blue) and off-peak period objective $\Welf_{\OP}(\pi)$ (red) sum to the total objective $\Welf(\pi)$ (green). Under variable pricing, each period's price independently minimizes its welfare function ($\pi_{\PE}^{\V}$ and $\pi_{\OP}^{\V}$). Under flat pricing, the single price $\pi^{\F}$ balances welfare effects across periods: the tangent lines show that $\frac{\partial\Welf_{\PE}}{\partial \pi^{\F}} = -\frac{\partial\Welf_{\OP}}{\partial \pi^{\F}}$, satisfying the equilibrium condition \eqref{eqn:flat_FOC_simple}.
       }      
            \label{fig:op_probs}        
        \end{minipage}
    \end{figure}
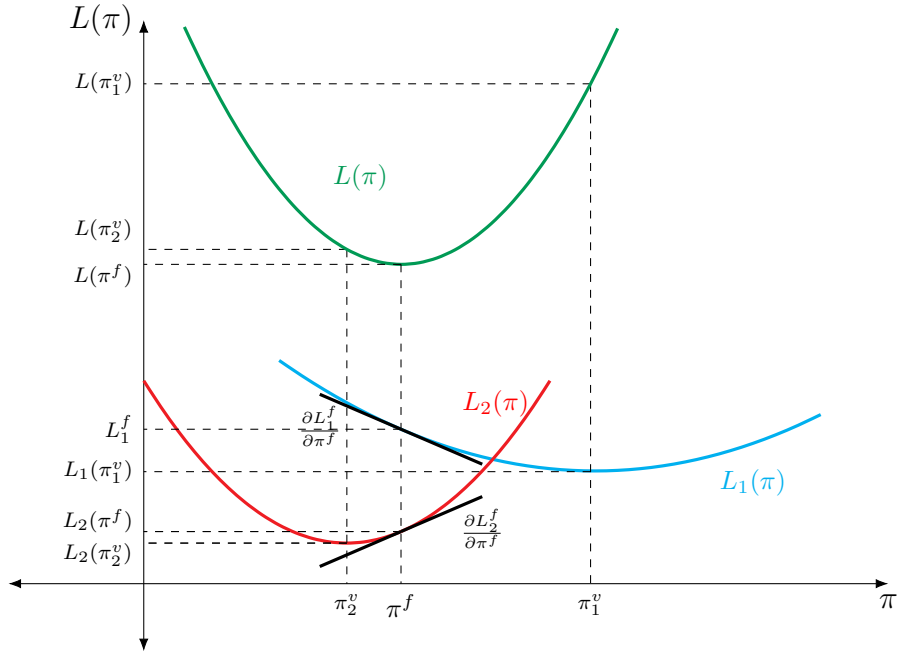

Since the flat pricing problem is simply the variable pricing problem with the additional constraint \eqref{eqn:flat_equal_price}, we can order the total welfare under the two pricing regimes:

\begin{proposition}
    \label{lem:eq_order} Under equilibrium conditions, the planner's loss under variable pricing is no more than the loss under flat pricing:
    $\Welf^{\V} \leq \Welf^{\F}$.
\end{proposition}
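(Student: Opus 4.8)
The plan is to exploit the observation, already noted in the surrounding text, that the flat-pricing problem is nothing but the variable-pricing problem \eqref{eqn:operator_prob} augmented by the single equality constraint \eqref{eqn:flat_equal_price}. This reduces the claim to the elementary principle that restricting the feasible set of a minimization problem can only weakly raise its optimal value. No curvature assumptions are needed for the inequality itself; Assumptions \ref{Assumption:Consumer} and \ref{Assumption:Supply} are used only to guarantee that each problem attains a unique minimizer, which the text has already established.

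First I would make the two feasible sets explicit in a common decision space. Write the variable-pricing feasible set as $\mathcal{F}^{\V} = \{(\pi_{\PE}, \pi_{\OP}) : \pi_{\PE} \geq 0,\ \pi_{\OP} \geq 0\}$ and the flat-pricing feasible set as $\mathcal{F}^{\F} = \{(\pi_{\PE}, \pi_{\OP}) \in \mathcal{F}^{\V} : \pi_{\PE} = \pi_{\OP}\}$. Both regimes minimize the identical objective $\Welf(\pi_{\PE}, \pi_{\OP}) = \Welf_{\PE}(\pi_{\PE}) + \Welf_{\OP}(\pi_{\OP})$, since consumers respond to whatever period prices they face through the best response \eqref{eqn:consumer_response} regardless of the regime. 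I would then note the set inclusion $\mathcal{F}^{\F} \subseteq \mathcal{F}^{\V}$: any admissible flat price $\pi^{\F} \geq 0$ corresponds to the point $(\pi^{\F}, \pi^{\F})$, which trivially satisfies the nonnegativity constraints of the variable problem.

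To conclude, I would invoke the monotonicity of the minimum under set inclusion: minimizing the same objective over the smaller set $\mathcal{F}^{\F}$ yields a value at least as large as minimizing over $\mathcal{F}^{\V}$. Concretely, letting $\pi^{\F}$ denote the flat optimum, the price vector $(\pi^{\F}, \pi^{\F})$ is feasible for the variable problem, so $\Welf^{\V} = \min_{\mathcal{F}^{\V}} \Welf \leq \Welf(\pi^{\F}, \pi^{\F}) = \Welf^{\F}$, which is exactly the desired inequality.

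There is essentially no analytical obstacle here; the result follows purely from feasible-set nesting. The only point demanding mild care is to confirm that the objective functional is genuinely identical in both regimes, i.e., that the map from prices to social loss via the consumer best response \eqref{eqn:consumer_response} does not itself change when the flat constraint is imposed. It does not, since \eqref{eqn:flat_equal_price} restricts only the admissible prices and not the welfare functional evaluated at them, so the same $\Welf_t(\cdot)$ appears in both minimizations.
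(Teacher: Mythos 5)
Your proposal is correct and follows essentially the same argument as the paper: the paper's proof likewise observes that the flat-pricing optimum $(\pi^{\F},\pi^{\F})$ is feasible for the variable-pricing problem, so the minimum under the variable tariff can be no larger. Your version merely spells out the feasible-set nesting and the identity of the objective more explicitly.
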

Under variable pricing, consumers pay the marginal cost of provision, eliminating the cross-subsidization inherent in flat pricing.

In addition to establishing an order between the operator's objective under both regimes, the equilibrium conditions shown in \eqref{eqn:flat_consumer_BR}-\eqref{eqn:flat_FOC_simple} allow us to understand how a pricing change impacts each period. We begin by discussing the relationship between equilibrium prices and quantities under both regimes.

\textbf{Price Ordering:} Using the convexity of $\Welf_{t}$ and the first-order conditions, we can show that the flat price lies between the variable prices. Since the cost function $C(d)$ is identical across periods but demand patterns differ due to consumer preferences (Assumptions \ref{Assumption: PeriodLoss_order}--\ref{Assumption: PeriodSW_order}), the period with a higher baseline demand (peak) will have a higher variable price to balance marginal costs and benefits.

\textbf{Demand Ordering:} Since demand is decreasing in price (equation \eqref{clm:nondec_consumption}), the price ordering directly implies the consumption ordering. Variable pricing reduces peak consumption and increases off-peak consumption relative to flat pricing.

We provide a detailed characterization in the following proposition, with formal proof in \ref{Appendix:equilibrium}:

\begin{proposition}\label{prop:price_demand_order}
Aggregate prices and demand under both pricing regimes satisfy:
$$\pi_{\PE}^{\V} \geq \pi^{\F} \geq \pi_{\OP}^{\V}, \quad d_{\OP}^{\F} \leq d_{\OP}^{\V} \leq d_{\PE}^{\V} \leq d_{\PE}^{\F}$$

Individual consumer demand orderings depend on their flexibility response to price changes:
\begin{enumerate}
\item[] Case 1: $d_{i\OP}^{\F} \leq d_{i\OP}^{\V} \leq d_{i\PE}^{\V} \leq d_{i\PE}^{\F}$. This occurs when the consumer's demand remains greater in the peak period than off-peak under variable pricing.
\item[] Case 2: $d_{i\OP}^{\F} \leq d_{i\PE}^{\V} \leq d_{i\OP}^{\V} \leq d_{i\PE}^{\F}$. This occurs when the consumer is sufficiently flexible that their usage pattern reverses and they consume at least as much in the off-peak as in the peak period.
\end{enumerate}
\end{proposition}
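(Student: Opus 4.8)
The plan is to prove the statement in three stages: the price ordering, the aggregate demand ordering, and finally the per-consumer orderings. For the price ordering $\pi_{\PE}^{\V} \geq \pi^{\F} \geq \pi_{\OP}^{\V}$, I would first establish $\pi_{\PE}^{\V} \geq \pi_{\OP}^{\V}$ by introducing $g_t(d) := C'(d) + \hat{\ELoss}_t'(d)$, which is strictly increasing in $d$ (since $C'' \geq 0$ and $\hat{\ELoss}_t'' > 0$) and whose unique root is the aggregate variable demand $d_t^{\V}$ by \eqref{eqn:var_FOC_expanded}. Assumption \ref{Assumption: PeriodFW_order} gives $\hat{\ELoss}_{\PE}'(d) \leq \hat{\ELoss}_{\OP}'(d)$, hence $g_{\PE} \leq g_{\OP}$ pointwise, so the root of $g_{\PE}$ dominates that of $g_{\OP}$, i.e.\ $d_{\PE}^{\V} \geq d_{\OP}^{\V}$, and monotonicity of $C'$ then yields $\pi_{\PE}^{\V} = C'(d_{\PE}^{\V}) \geq C'(d_{\OP}^{\V}) = \pi_{\OP}^{\V}$. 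To place $\pi^{\F}$ between the two variable prices, I would use that for interior solutions $\partial \Welf_t / \partial \pi_t = -g_t(d_t^*(\pi_t))/\hat{\ELoss}_t''(d_t^*(\pi_t))$ has the sign of $\pi_t - \pi_t^{\V}$ (because $d_t^*$ is decreasing in $\pi_t$). If $\pi^{\F}$ exceeded both variable prices the flat first-order condition \eqref{eqn:flat_FOC_simple} would sum two strictly positive terms, and if it fell below both, two strictly negative terms; either case contradicts \eqref{eqn:flat_FOC_simple}, forcing $\pi_{\OP}^{\V} \leq \pi^{\F} \leq \pi_{\PE}^{\V}$.

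The aggregate demand ordering then follows from monotonicity of demand in price \eqref{clm:nondec_consumption}: $d_{\PE}^{\V} = d_{\PE}^*(\pi_{\PE}^{\V}) \leq d_{\PE}^*(\pi^{\F}) = d_{\PE}^{\F}$ and $d_{\OP}^{\F} = d_{\OP}^*(\pi^{\F}) \leq d_{\OP}^*(\pi_{\OP}^{\V}) = d_{\OP}^{\V}$, while $d_{\OP}^{\V} \leq d_{\PE}^{\V}$ was shown above; chaining gives $d_{\OP}^{\F} \leq d_{\OP}^{\V} \leq d_{\PE}^{\V} \leq d_{\PE}^{\F}$. I would flag the useful observation that this same argument shows the \emph{aggregate} never reverses, so it always behaves like Case 1; reversal is purely an individual phenomenon.

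For the individual orderings, each consumer faces the same market prices, so the identical monotonicity argument delivers the two ``edge'' inequalities $d_{i\OP}^{\F} \leq d_{i\OP}^{\V}$ (off-peak price falls) and $d_{i\PE}^{\V} \leq d_{i\PE}^{\F}$ (peak price rises) unconditionally, along with $d_{i\OP}^{\F} \leq d_{i\PE}^{\F}$ from the period willingness-to-pay ordering $d_{i\PE}^*(\pi) \geq d_{i\OP}^*(\pi)$, itself a direct consequence of \ref{Assumption: PeriodFW_order}. I would then split on the sign of $d_{i\PE}^{\V} - d_{i\OP}^{\V}$. Case 1 ($d_{i\OP}^{\V} \leq d_{i\PE}^{\V}$) is immediate: the four quantities chain into $d_{i\OP}^{\F} \leq d_{i\OP}^{\V} \leq d_{i\PE}^{\V} \leq d_{i\PE}^{\F}$.

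The main obstacle is Case 2, the reversal regime $d_{i\PE}^{\V} \leq d_{i\OP}^{\V}$. Here the edges no longer pin down the order, and completing $d_{i\OP}^{\F} \leq d_{i\PE}^{\V} \leq d_{i\OP}^{\V} \leq d_{i\PE}^{\F}$ requires the two cross-inequalities $d_{i\OP}^{\F} \leq d_{i\PE}^{\V}$ and $d_{i\OP}^{\V} \leq d_{i\PE}^{\F}$, i.e.\ that the flat-pricing quantities stay the extreme points even after an individual's usage inverts. These are exactly the comparisons that pit a period change against an \emph{opposing} price change ($d_{i\PE}^{\V}$ has higher peak-ness but faces the higher price $\pi_{\PE}^{\V} \geq \pi^{\F}$, while $d_{i\OP}^{\V}$ has lower peak-ness but faces the lower price $\pi_{\OP}^{\V} \leq \pi^{\F}$), so they are genuinely undetermined by monotonicity in price and period alone and demand quantitative control of the size of each period's demand adjustment. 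I would attempt to close them using the curvature ordering \ref{Assumption: PeriodSW_order} (so far unused): since peak demand is the less flexible period (larger $\hat{\ELoss}_{it}''$, hence smaller $|\partial d_{it}^*/\partial \pi_t|$), the peak quantity moves little relative to the baseline peak--off-peak gap guaranteed by \ref{Assumption: PeriodFW_order}, which should bound the off-peak increase below $d_{i\PE}^{\F}$ and keep $d_{i\PE}^{\V}$ above $d_{i\OP}^{\F}$. I expect this magnitude-control step, relating each adjustment's size rather than merely its sign to the curvature gap, to be the crux of the proof and the place where Assumption \ref{Assumption: PeriodSW_order} must do the real work.
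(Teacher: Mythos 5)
Your treatment of the price ordering, the aggregate demand ordering, the two ``edge'' inequalities, and Case 1 is correct and essentially the paper's argument. The paper establishes $d_{\PE}^{\V}\geq d_{\OP}^{\V}$ exactly as you do (pointwise comparison of $C'+\hat{\ELoss}_t'$ via Assumption \ref{Assumption: PeriodFW_order}, then monotonicity of $C'$), and places $\pi^{\F}$ between the variable prices by a contradiction argument showing that any $\pi'>\pi_{\PE}^{\V}$ or $\pi''<\pi_{\OP}^{\V}$ raises $\Welf_t$ in \emph{both} periods; your sign argument on $\partial\Welf_t/\partial\pi_t$ in \eqref{eqn:flat_FOC_simple} is a cleaner packaging of the same convexity fact. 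One small discrepancy: the paper's proof of $d_{i\OP}^{\F}\leq d_{i\PE}^{\F}$ explicitly invokes $\bar d_{i\PE}\geq\bar d_{i\OP}$ (a hypothesis not stated in the proposition), whereas you derive it from the willingness-to-pay ordering alone; your version is fine for interior solutions but needs the capacity assumption if the peak cap binds.

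The important point is Case 2, and here you have correctly located a genuine gap --- one that is present in the paper itself. The paper's proof establishes only $d_{i\PE}^{\V}\leq d_{i\PE}^{\F}$, $d_{i\OP}^{\F}\leq d_{i\OP}^{\V}$, $d_{i\OP}^{\F}\leq d_{i\PE}^{\F}$, and the aggregate inequality $d_{\OP}^{\V}\leq d_{\PE}^{\V}$; it never proves the cross-inequalities $d_{i\OP}^{\F}\leq d_{i\PE}^{\V}$ and $d_{i\OP}^{\V}\leq d_{i\PE}^{\F}$ that the Case 2 chain asserts. So your instinct that these are ``genuinely undetermined by monotonicity in price and period alone'' is right, and you should not expect to find the missing step in the paper. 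Be warned, however, that your proposed repair via the curvature ordering \ref{Assumption: PeriodSW_order} cannot work without further hypotheses: take quadratic losses $\ELoss_{it}(d)=k(\bar d_{it}-d)^2$ with \emph{equal} curvature $k$ in both periods (which satisfies \ref{Assumption: PeriodSW_order} with equality), $\bar d_{i\PE}$ only slightly above $\bar d_{i\OP}$, and $A_i/k$ large. Then $d_{i\OP}^{\V}-d_{i\PE}^{\F}=(\bar d_{i\OP}-\bar d_{i\PE})+\tfrac{A_i}{2k}(\pi^{\F}-\pi_{\OP}^{\V})>0$ for a consumer sufficiently more flexible than the aggregate, violating $d_{i\OP}^{\V}\leq d_{i\PE}^{\F}$ while the consumer is squarely in the reversal regime. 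The Case 2 ordering therefore requires a quantitative bound relating the individual's flexibility to the peak/off-peak gap in their demand, not just the qualitative assumptions of the model; as stated, it is a claim the paper asserts but does not (and, without such a bound, cannot) prove.
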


\textbf{Economic Interpretation:} Flat pricing implicitly subsidizes peak consumption: off-peak is overpriced relative to marginal cost, and peak is underpriced, leading to inefficient overuse at peak and underuse off-peak. Variable pricing eliminates this by setting period-specific prices that reflect true marginal costs.

The distributional implications depend on individual consumption patterns. Consumers who use relatively more electricity during off-peak periods effectively subsidize those who consume more during peak periods under flat pricing. 
The individual demand also reveals an important insight about flexibility. 
All consumers reduce peak and increase off-peak use under variable pricing, but highly flexible consumers may shift so much that off-peak consumption exceeds peak use, benefiting disproportionately from lower off-peak prices.

Having established the equilibrium price and quantity relationships, we now examine the \emph{magnitude} of price changes when switching from flat to variable pricing. The key determinant is the curvature of the planner’s welfare function $\Welf_t(\pi_t)$. As shown in \eqref{eqn:flat_FOC_simple}, the flat price balances marginal welfare effects across periods, but Figure \ref{fig:op_probs} illustrates that the curvature $\frac{\partial^{2}\Welf_{t}}{\partial \pi^{2}}$ dictates how far prices can deviate from this balance. As Figure \ref{fig:op_probs} shows, flatter welfare functions allow larger price changes, amplifying impact on consumers.
We formalize the relationship between curvature and price changes with the following proposition:
\begin{proposition}\label{lem:pi_deltas}
The ratio of price changes when switching from flat to variable pricing is inversely related to the average curvature of the planner's welfare functions:
$$\frac{|\Delta \pi_{\PE}|}{|\Delta \pi_{\OP}|} = {\overline{\frac{\partial^2 \Welf_{\OP}}{\partial \pi^2}}}\bigg/ {\overline{\frac{\partial^2 \Welf_{\PE}}{\partial \pi^2}}}$$
where $\overline{\frac{\partial^2 \Welf_{t}}{\partial \pi^2}}$ denotes the average second derivative of $\Welf_t$ over the interval between the flat price and the variable price for period $t$.
\end{proposition}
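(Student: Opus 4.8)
The plan is to express each price change $\Delta\pi_t = \pi_t^{\V} - \pi^{\F}$ as the result of moving from the flat-pricing balance condition \eqref{eqn:flat_FOC_simple} to the two independent variable-pricing optimality conditions \eqref{eqn:var_FOC}, and then to relate the distances traveled in $\pi$ to the curvature of each $\Welf_t$ via the mean value theorem. The key observation is that at the flat price, $\frac{\partial\Welf_{\PE}}{\partial\pi}\big|_{\pi^{\F}} = -\frac{\partial\Welf_{\OP}}{\partial\pi}\big|_{\pi^{\F}}$, so the two period-specific marginal welfare slopes are equal in magnitude; call this common magnitude $S$. Meanwhile, at the variable prices each slope vanishes: $\frac{\partial\Welf_t}{\partial\pi}\big|_{\pi_t^{\V}} = 0$ for $t \in \{\PE,\OP\}$.

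First I would apply the mean value theorem to the function $\frac{\partial\Welf_t}{\partial\pi}$ on the interval between $\pi^{\F}$ and $\pi_t^{\V}$. For each period $t$, this gives
\begin{align*}
\frac{\partial\Welf_t}{\partial\pi}\bigg|_{\pi_t^{\V}} - \frac{\partial\Welf_t}{\partial\pi}\bigg|_{\pi^{\F}} = \overline{\frac{\partial^2\Welf_t}{\partial\pi^2}}\,\bigl(\pi_t^{\V} - \pi^{\F}\bigr),
\end{align*}
where $\overline{\frac{\partial^2\Welf_t}{\partial\pi^2}}$ is precisely the average second derivative over that interval (this is the integral form of the mean value theorem, which avoids needing a single evaluation point and matches the proposition's ``average'' notation exactly). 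Since the left-hand side equals $0 - (\pm S)$, the magnitude simplifies to $S = \overline{\frac{\partial^2\Welf_t}{\partial\pi^2}}\,|\Delta\pi_t|$ for each $t$, using that $\Welf_t$ is convex so the average second derivative is positive and the sign bookkeeping is consistent.

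Next I would solve each equation for $S$ and equate across the two periods. Writing $S = \overline{\frac{\partial^2\Welf_{\PE}}{\partial\pi^2}}\,|\Delta\pi_{\PE}| = \overline{\frac{\partial^2\Welf_{\OP}}{\partial\pi^2}}\,|\Delta\pi_{\OP}|$, dividing the two expressions immediately yields the claimed ratio
\begin{align*}
\frac{|\Delta\pi_{\PE}|}{|\Delta\pi_{\OP}|} = \overline{\frac{\partial^2\Welf_{\OP}}{\partial\pi^2}}\bigg/\overline{\frac{\partial^2\Welf_{\PE}}{\partial\pi^2}}.
\end{align*}

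The main obstacle is the sign bookkeeping: I must verify that the common slope magnitude $S$ is genuinely shared by both periods with the correct orientation. By \eqref{eqn:flat_FOC_simple} the peak and off-peak marginal welfare terms at $\pi^{\F}$ are negatives of each other, and by Proposition \ref{prop:price_demand_order} we have $\pi_{\PE}^{\V} \geq \pi^{\F} \geq \pi_{\OP}^{\V}$, so the two price displacements have opposite signs while the two slopes at $\pi^{\F}$ also have opposite signs — I would check these cancel so that after taking absolute values the same positive quantity $S$ appears in both equations. A secondary technical point is confirming that $\Welf_t$ is twice differentiable along the relevant interval (so the mean value theorem applies and $\overline{\frac{\partial^2\Welf_t}{\partial\pi^2}}$ is well-defined); this follows from Assumptions \ref{Assumption:Consumer} and \ref{Assumption:Supply}, which guarantee $\Welf_t$ is smooth and convex on the interior demand region where both variable prices lie.
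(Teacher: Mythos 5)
Your proposal is correct and follows essentially the same route as the paper's proof: both use the flat-price balance condition $\frac{\partial\Welf_{\PE}}{\partial\pi}\big|_{\pi^{\F}} = -\frac{\partial\Welf_{\OP}}{\partial\pi}\big|_{\pi^{\F}}$ together with the vanishing of the marginal welfare at each variable price, then apply the mean value theorem on each price interval to express the slope change as (average curvature) $\times$ (price change) and equate across periods. The only cosmetic difference is that you invoke the integral (average) form of the mean value theorem, which matches the proposition's $\overline{\partial^2\Welf_t/\partial\pi^2}$ notation directly, whereas the paper states the same relation via a point evaluation $\frac{\partial^2\Welf_t(\pi_k)}{\partial\pi^2}$ at an intermediate price.
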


See \ref{Appendix:equilibrium} for proof. This result highlights a key trade-off: flatter welfare functions permit larger price adjustments with smaller welfare penalties, while steeper functions limit deviations. Thus, consumers with high usage in flatter periods face the largest utility impacts from pricing reform.

 Consumer flexibility plays an important role in determining the curvature of the operator's problems, which, in turn, determines the change in prices. Combining the consumer and planner first-order conditions \eqref{eqn:flat_consumer_BR} and \eqref{eqn:flat_FOC_simple}, yields an expression for $\pi^{\F}$ as a flexibility-weighted average of marginal procurement costs:
\begin{align}
        \pi^{\F} = \frac{\left|\frac{\partial d^{\F}_{\PE}}{\partial \pi}\right|}{\left|\frac{\partial d^{\F}_{\PE}}{\partial \pi}\right| + \left|\frac{\partial d^{\F}_{\OP}}{\partial \pi}\right|} \cdot C'(d_{\PE}^{\F}) + \frac{\left|\frac{\partial d^{\F}_{\OP}}{\partial \pi}\right|}{\left|\frac{\partial d^{\F}_{\PE}}{\partial \pi}\right| + \left|\frac{\partial d^{\F}_{\OP}}{\partial \pi}\right|} \cdot C'(d_{\OP}^{\F}), 
        \label{eqn:flat_price}
\end{align}
where the weights are the relative flexibilities of aggregate demand in each period (shown in \ref{Appendix:equilibrium}). This expression demonstrates that the flat price lies closer to the marginal cost of the period with \emph{higher consumer flexibility}. As this is  evaluated at the flat price rather than variable prices, we further examine the role of flexibility in determining operator problem curvature in the next subsection.

\subsection{Price Adjustment Analysis} \label{subsec:price_adjustment_analysis}

We next expand the second derivative $\frac{\partial^{2} \Welf_{t}}{\partial \pi^{2}_t}$ to better understand the relationship between the curvature of the operator's problem and aggregate consumer flexibility. Using the chain rule:
\begin{align}
\frac{\partial^{2} \Welf_t}{\partial \pi_t^{2}} &= \frac{\partial}{\partial \pi_{t}} \left[\frac{\partial d_{t}^{*}}{\partial \pi_t} \cdot \frac{\partial \Welf_{t}}{\partial d_t}\biggr\rvert_{d_t^*} \right] \nonumber\\
&= \frac{\partial^2 d_{t}^{*}}{\partial \pi_t^2} \cdot \frac{\partial \Welf_{t}}{\partial d_t}\biggr\rvert_{d_t^*} + \left(\frac{\partial d_{t}^{*}}{\partial \pi_t}\right)^2 \cdot \frac{\partial^2 \Welf_{t}}{\partial d_t^2}\biggr\rvert_{d_t^*} \nonumber\\
&= \frac{\partial^2 d_{t}^{*}}{\partial \pi_t^2} \cdot \left[C'(d^*_{t}) + \hat{\ELoss}_{t}'(d^*_{t})\right] + \left(\frac{\partial d_{t}^{*}}{\partial \pi_t}\right)^2 \cdot \left[C''(d_{t}^*) + \hat{\ELoss}_{t}''(d^*_{t})\right] \label{eqn:second_deriv_welf}
\end{align}

 The first term of \eqref{eqn:second_deriv_welf} involves the second derivative of demand with respect to price, which relates to flexibility through 
 $\partial^{2}d_{t}/\partial \pi_{t}^{2} = \hat{\ELoss}_{t}'''(d_{t})(\partial d_{t}/\partial \pi_{t})^{3}  $. The curvature of the welfare function also depends on aggregate consumer flexibility through the second term, where higher flexibility $\left|\frac{\partial d_{t}^{*}}{\partial \pi_t}\right|$ directly increases the curvature. The net effect on curvature depends on the relative magnitudes and signs of these terms. 
 
 Examining the first term of \eqref{eqn:second_deriv_welf}, the sign depends on $C'(d^*_{t}) + \hat{\ELoss}_{t}'(d^*_{t})$ as $\frac{\partial^2 d_{t}^{*}}{\partial \pi_t^2} > 0$.  From Proposition \ref{prop:price_demand_order}, we know that at the flat price:
\begin{itemize}
    \item In the off-peak period: $\pi^{\F} \geq \pi_{\OP}^{\V}$, so consumption is below the efficient level. This means $C'(d_{\OP}^{\F}) < \pi^{\F}$ and $\pi^{\F} < |\hat{\ELoss}_{\OP}'(d_{\OP}^{\F})| = -\hat{\ELoss}_{\OP}'(d_{\OP}^{\F})$. Therefore: $C'(d_{\OP}^{\F}) + \hat{\ELoss}_{\OP}'(d_{\OP}^{\F}) < \pi^{\F} - \pi^{\F} = 0$. 
    \item In the peak period: $\pi^{\F} \leq \pi_{\PE}^{\V}$, so consumption exceeds the efficient level. This means $C'(d_{\PE}^{\F}) > \pi^{\F}$ and $\pi^{\F} > |\hat{\ELoss}_{\PE}'(d_{\PE}^{\F})| = -\hat{\ELoss}_{\PE}'(d_{\PE}^{\F})$. Therefore: $C'(d_{\PE}^{\F}) + \hat{\ELoss}_{\PE}'(d_{\PE}^{\F}) > \pi^{\F} - \pi^{\F} = 0$.
\end{itemize}

The second term is always positive since $\left(\frac{\partial d_{t}^{*}}{\partial \pi_t}\right)^2 > 0$ and both $C''$ and $\hat{\ELoss}_{t}''$ are positive.

This analysis suggests that peak and off-peak flexibility influence curvature differently, and thus affect price adjustments under variable pricing in different ways.
 For the peak period, we can establish that higher flexibility increases curvature. Since both terms in \eqref{eqn:second_deriv_welf} are positive, and the second term $\left(\frac{\partial d_{\PE}^{*}}{\partial \pi}\right)^2 \cdot [C'' + \hat{\ELoss}_{\PE}'']$ increases directly with flexibility $\left|\frac{\partial d_{\PE}^{*}}{\partial \pi}\right|$, higher peak flexibility unambiguously increases $\frac{\partial^{2} \Welf_{\PE}}{\partial \pi^{2}}$. For the off-peak period, the negative first term and positive second term create offsetting effects, making the net impact of flexibility on curvature ambiguous. Since steeper welfare functions (higher curvature) constrain the planner's ability to deviate prices from the flat rate, these curvature effects directly determine the magnitude of price adjustments. To summarize, combining flexibility's effect on operator problem curvature with Prop.~\ref{lem:pi_deltas}, we can identify the implications of flexibility for price changes:

\begin{itemize}
\item \emph{Peak period flexibility:} Higher consumer flexibility in the peak period increases the curvature of the welfare function $\frac{\partial^{2} \Welf_{\PE}}{\partial \pi_{\PE}^{2}}$, which by Proposition \ref{lem:pi_deltas} leads to smaller peak price adjustments $|\Delta \pi_{\PE}|$ and therefore a larger ratio $|\Delta \pi_{\OP}|/|\Delta \pi_{\PE}|$.

\item \emph{Off-peak period flexibility:} The effect of off-peak flexibility on curvature $\frac{\partial^{2} \Welf_{\OP}}{\partial \pi_{\OP}^{2}}$ remains ambiguous due to offsetting effects in equation \eqref{eqn:second_deriv_welf}. Consequently, the overall impact of off-peak flexibility on the relative magnitude of price adjustments $|\Delta \pi_{\OP}|/|\Delta \pi_{\PE}|$ cannot be determined without additional assumptions about the relative magnitudes of the cost and loss function derivatives.

\end{itemize}
 Together with Proposition \ref{lem:pi_deltas}, these curvature effects map directly into asymmetric peak vs. off-peak price adjustments. This reveals a key insight: lower consumer flexibility in the peak period decreases curvature of the operator's problem and increases price changes. This counterintuitive result occurs because flexible peak consumers increase the curvature $\frac{\partial^{2} \Welf_{\PE}}{\partial \pi_{\PE}^{2}}$, making the planner's welfare function more sensitive to peak price deviations, restricting the extent to which the peak price can move from the flat rate and forcing larger adjustments in the off-peak price to maintain the overall balance. {This creates complex interactions where demand-side management programs targeted at flexible consumers may generate spillover benefits for less responsive consumer groups, as established through the price formation mechanisms in Section \ref{sec:equilibrium}. During temperature extremes with low aggregate peak flexibility, peak prices rise sharply. Consumers with below-average flexibility suffer most from large peak price increases while benefiting little from smaller off-peak decreases.} Policymakers should note when peak demand reflects essential needs (e.g., temperature control for health), since large peak price increases then have large welfare impacts and may warrant protections.

To summarize, this section established the equilibrium conditions under both pricing regimes and demonstrated that:
\begin{enumerate}
\item Flat prices lie between variable prices: $\pi_{\OP}^{\V} \leq \pi^{\F} \leq \pi_{\PE}^{\V}$
\item Cross-subsidization occurs under flat pricing, with redistribution across periods due to overconsumption in the peak period and underconsumption in the off-peak period.
\item Consumer flexibility determines the magnitude of price changes through curvature effects
\end{enumerate}
In Section \ref{sec:Cons_Outcomes}, we utilize these equilibrium results to analyze how individual consumers are affected by the transition from flat to variable pricing, examining both the magnitude and distribution of welfare changes across different consumer types. We will show that the relationship between flexibility and price adjustments has important distributional implications.

\section{Effects of Pricing Change on Consumer 
Utility}\label{sec:Cons_Outcomes}
This section analyzes how the transition from flat to variable pricing affects individual consumer welfare.
We examine the role of individual consumption patterns, flexibility, and price sensitivity in determining individual welfare changes. The analysis reveals that the effects of pricing reform are heterogeneous across consumers. Utility changes depend on the interaction of relative price shifts, consumption, and flexibility. We show that consumers with high peak demand, low off-peak demand, and little flexibility lose the most, especially when peak prices rise more than off-peak prices fall. Additionally, price sensitivity scales the magnitude of changes but not their direction.
We develop our theoretical results using general demand functions, but illustrate the key insights using two specific functional forms: linear demand (derived from quadratic utility) and isoelastic demand. These examples demonstrate how our general results apply in commonly used modeling frameworks
and provide concrete illustrations of the distributional effects. Details of both demand functions are provided in \ref{appendix:demand_func}. 

\subsection{Change in Utility When Switching from Flat to Variable Pricing}  
The total change in utility for consumer $i$ in period $t$ when switching from flat to variable pricing is $\Delta \Util_{it} = \Util_{it}^{\V} - \Util_{it}^{\F}$. Using the envelope theorem and equation \eqref{clm:dec_U}, this can be expressed as: 
\begin{align}
    \Delta \Util_{it} = \int_{\pi^{\F}}^{\pi_{t}^{\V}}\frac{\partial \Util_{it}(\pi_{t})}{\partial \pi_t} d\pi_t &= -\int_{\pi^{\F}}^{\pi_{t}^{\V}} A_id^*_{it}(\pi_t) d\pi_t. \label{eqn:utility_change} 
\end{align}
 This integral represents the change in consumer surplus over the price interval,
scaled by price sensitivity. 
We can then find the total change in utility of a consumer by summing \eqref{eqn:utility_change} across both periods:
$$\Delta \Util_{i} = \Delta \Util_{i\PE} + \Delta \Util_{i\OP}$$
Hence, the total utility change depends on the consumer's consumption profile across periods and the relative magnitude of price changes in each period. 

Figure \ref{fig:delta_util} illustrates the geometric interpretation of utility changes for consumer $i$. The figure plots demand curves for both periods, with price on the vertical axis and quantity (scaled by $A_i$) on the horizontal axis. Based on equation \eqref{eqn:utility_change}, we note: 

\begin{itemize}
\item[-] The \emph{red shaded area} between $\pi^{\F}$ and $\pi_{\PE}^{\V}$ represents the utility loss $|\Delta \Util_{i\PE}|$ from higher peak prices
\item[-] The \emph{blue shaded area} between $\pi_{\OP}^{\V}$ and $\pi^{\F}$ represents the utility gain $\Delta \Util_{i\OP}$ from lower off-peak prices
\end{itemize}

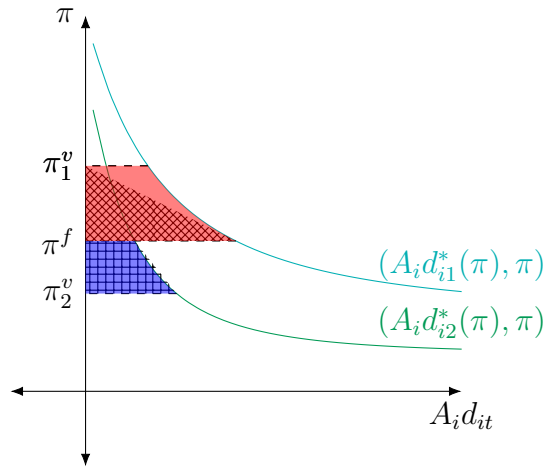
\begin{figure}[H]
    \centering
    \begin{tikzpicture}
        \draw[Latex-Latex] (-1,0) -- (5,0) node [anchor=north] {$A_{i}d_{it}$};
        \draw[Latex-Latex] (0,-1) -- (0,5) node [anchor=east] {$\pi$};

        \draw[TealBlue, domain=0.1:5, smooth] plot (\x, {(2/(0.5*\x+1))^2 + 1}) node [anchor=south] {$(A_id^*_{i\PE}(\pi),\pi)$};
        \draw[ForestGreen, domain=0.1:5, smooth] plot (\x, {((0.5*\x+1)/1.5)^-3.3 + 0.5}) node [anchor=south] {$(A_id^*_{i\OP}(\pi),\pi)$};

        \draw[dashed] (2,2) -- (0,2) node [anchor=east] {$\pi^{\F}$};
        \draw[dashed] (1.2099, 1.3) -- (0,1.3) node [anchor=east] {$\pi_{\OP}^{\V}$};
        \draw[dashed] (0.8284, 3) -- (0,3) node [anchor=east] {$\pi_{\PE}^{\V}$};
        \draw[dashed, domain= 0.6531:1.2099] (0.8284, 3) -- (0,3) node [anchor=east] {$\pi_{\PE}^{\V}$};
        \fill[pattern=grid]  (0.6531 ,-1.8658*0.6531 + 3.2187)--(1.2099,-1.8658*1.2099 + 3.5575 )--(0,1.3) -- (0,2);
        \fill[pattern=crosshatch] (0, 2)--(2, 2) -- (0.8284, -0.5*0.8284 + 3) -- (0,3) -- cycle;
        \fill[blue, domain=0.6531:1.2099, opacity=0.5] plot (\x, {((0.5*\x+1)/1.5)^-3.3 + 0.5}) -- (0,1.3) -- (0,2) -- cycle; 
        \fill[red, domain=0.8284:2, opacity=0.5] plot (\x, {(2/(0.5*\x+1))^2 + 1}) -- (0,2) -- (0,3) -- cycle;
    \end{tikzpicture}
    \caption{Consumer utility changes from switching to variable pricing. The red area shows utility loss from higher peak prices ($|\Delta \Util_{i\PE}|$), while the blue area shows utility gain from lower off-peak prices ($\Delta \Util_{i\OP}$). Net utility loss occurs when the red area exceeds the blue area. The crosshatched region represents the conservative lower bound on peak utility loss used in condition \eqref{eqn:delta_util_suff}, while the gridded region represents the conservative upper bound on off-peak utility gain, illustrating how the linear demand condition provides sufficient conditions through these approximations used in Theorem~\ref{thm:delta_util}.}
    \label{fig:delta_util}
\end{figure}

Consumer $i$ experiences a net utility loss when the red area (peak period loss) exceeds the blue area (off-peak period gain).
This occurs when the consumer has: \emph{(i)} High peak period consumption relative to off-peak consumption; or \emph{(ii)} Low flexibility (steep demand curves that create large areas for given price changes), or \emph{(iii)} Exposure to periods where price increases are large relative to price decreases. In practice, consumers are most vulnerable when they exhibit multiple of these characteristics simultaneously, but any single factor can be sufficient for net utility losses if it is pronounced enough.  

We now formalize these intuitions and identify the conditions under which consumers lose utility from pricing reform. Recall that from \eqref{clm:nondec_consumption}, $\frac{\partial d_{it}^*}{\partial \pi_t} = \frac{-A_i}{\hat{\ELoss}_{it}''(d^*_{it})}$. From Assumption \ref{Assumption:Consumer}, $\hat{\ELoss}_{it}'''(d^*_{it}) < 0$, which implies that $\hat{\ELoss}_{it}''$ is decreasing in consumption. Since higher prices reduce consumption, $\hat{\ELoss}_{it}''$ increases with price, causing flexibility $\left|\frac{\partial d_{it}^*}{\partial \pi_t}\right| = \frac{A_i}{\hat{\ELoss}_{it}''(d^*_{it})}$ to decrease as price increases. This establishes that demand curves are convex.

\textbf{Flexibility Bounds:} Since demand curves are convex (flexibility decreases with price):
\begin{itemize}
\item In the \emph{peak period}: Since $\pi^{\V}_{\PE} \geq \pi^{\F}$, flexibility ranges from highest at $\pi^{\F}$ to lowest at $\pi^{\V}_{\PE}$
\item In the \emph{off-peak period}: Since $\pi^{\F} \geq \pi^{\V}_{\OP}$, flexibility ranges from highest at $\pi^{\V}_{\OP}$ to lowest at $\pi^{\F}$
\end{itemize}

These bounds allow us to construct conservative estimates in Theorem \ref{thm:delta_util} by evaluating flexibility at the most favorable points for the consumer.

\textbf{Demand Bounds:} Since demand decreases with price:
\begin{itemize}
\item \emph{Peak consumption}: decreases from $d_{i\PE}^{\F}$ (at lower price $\pi^{\F}$) to $d_{i\PE}^{\V}$ (at higher price $\pi^{\V}_{\PE}$)
\item \emph{Off-peak consumption}: increases from $d_{i\OP}^{\F}$ (at higher price $\pi^{\F}$) to $d_{i\OP}^{\V}$ (at lower price $\pi^{\V}_{\OP}$)
\end{itemize}
These bounds allow us to establish sufficient conditions for utility loss by finding the extreme cases. If a consumer loses utility even under the most favorable assumptions about their flexibility and consumption levels, they will certainly lose utility under actual conditions.

\begin{theorem} \label{thm:delta_util} 
Consumer $i$'s total utility change under a switch from flat to variable pricing is:
\begin{align} 
\Delta \Util_{i} &= A_{i}\left(-\int_{\pi^{\F}}^{\pi_{\PE}^{\V}} d_{i\PE}^{*}(\pi_t)  d\pi_t + \int_{\pi_{\OP}^{\V}}^{\pi^{\F}} d_{i\OP}^{*}(\pi_t)  d\pi_t\right) \label{eqn:total_util_breakdown}
\end{align}

Consumer $i$ experiences a utility loss ($\Delta \Util_{i} < 0$) under the following conditions:

\textbf{(a) Linear Demand:} If and only if:
\begin{align} 
(|\Delta \pi_{\OP}|d_{i\OP}^{\F} - |\Delta \pi_{\PE}| d_{i\PE}^{\F}) + \frac{(\Delta\pi_{\PE})^2}{2}\left|\frac{\partial d^{\F}_{i\PE}}{\partial \pi_{\PE}}\right| + \frac{(\Delta\pi_{\OP})^2}{2}\left|\frac{\partial d^{\V}_{i\OP}}{\partial \pi_{\OP}}\right| < 0 \label{eqn:delta_util_suff}
\end{align}

\textbf{(b) Isoelastic Demand:} If and only if:
\begin{align}
\frac{d_{i\PE}^\F }{1 + \epsilon_{i\PE}^\F} \left[ 1 - \left( \frac{\pi_{\PE}^\V}{\pi^\F} \right)^{1 + \epsilon_{i\PE}^\F} \right] 
+ \frac{d_{i\OP}^\F }{1 + \epsilon_{i\OP}^\F} \left[ 1 - \left( \frac{\pi_{\OP}^\V}{\pi^\F} \right)^{1 + \epsilon_{i\OP}^\F} \right] < 0 \label{eqn:delta_util_suff_3}
\end{align}
\end{theorem}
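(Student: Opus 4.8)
The plan is to prove the master identity \eqref{eqn:total_util_breakdown} first, then evaluate its two integrals in closed form under each demand specification. Beginning from the per-period envelope expression \eqref{eqn:utility_change}, namely $\Delta\Util_{it}=-A_i\int_{\pi^{\F}}^{\pi_t^{\V}} d_{it}^*(\pi_t)\,d\pi_t$, I would invoke the price ordering $\pi_{\PE}^{\V}\geq\pi^{\F}\geq\pi_{\OP}^{\V}$ established in Proposition \ref{prop:price_demand_order}. In the peak period the upper limit exceeds the lower, so the integral is positive and $\Delta\Util_{i\PE}\leq 0$ is a genuine loss; in the off-peak period the limits are reversed, so flipping the orientation converts the leading minus sign into a gain, $\Delta\Util_{i\OP}=A_i\int_{\pi_{\OP}^{\V}}^{\pi^{\F}} d_{i\OP}^*(\pi_t)\,d\pi_t\geq 0$. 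Adding the two pieces yields \eqref{eqn:total_util_breakdown}, and since $A_i>0$ the sign of $\Delta\Util_i$ is controlled entirely by the bracketed difference of integrals, so each criterion below follows after dividing through by $A_i$.

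For part (a) I would substitute the linear demand curve $d_{it}^*(\pi_t)=d_{it}^{\F}-\bigl|\tfrac{\partial d_{it}}{\partial\pi}\bigr|(\pi_t-\pi^{\F})$ from \ref{appendix:demand_func}, whose slope is constant because the quadratic loss makes $\hat{\ELoss}_{it}''$ constant. Each integral then reduces to integrating a linear function, producing one term linear in $\Delta\pi_t$ and one quadratic in $(\Delta\pi_t)^2$. Carrying out these elementary integrals and using $\pi_{\PE}^{\V}-\pi^{\F}=|\Delta\pi_{\PE}|$ and $\pi^{\F}-\pi_{\OP}^{\V}=|\Delta\pi_{\OP}|$, I would collect terms into $(d_{i\OP}^{\F}|\Delta\pi_{\OP}|-d_{i\PE}^{\F}|\Delta\pi_{\PE}|)+\tfrac12(\Delta\pi_{\PE})^2\bigl|\tfrac{\partial d_{i\PE}}{\partial\pi}\bigr|+\tfrac12(\Delta\pi_{\OP})^2\bigl|\tfrac{\partial d_{i\OP}}{\partial\pi}\bigr|$, which is exactly the left-hand side of \eqref{eqn:delta_util_suff}. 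Because a genuinely linear demand curve is integrated exactly (no approximation is introduced), the resulting condition is two-sided, giving the stated ``if and only if.''

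For part (b) I would instead insert the isoelastic form $d_{it}^*(\pi_t)=d_{it}^{\F}(\pi_t/\pi^{\F})^{\epsilon_{it}}$ and integrate the resulting power function via $\int\pi^{\epsilon}\,d\pi=\pi^{\epsilon+1}/(\epsilon+1)$. Evaluating between $\pi^{\F}$ and $\pi_t^{\V}$ and factoring out $(\pi^{\F})^{-\epsilon_{it}}$, each period's contribution collapses, after applying the leading sign from \eqref{eqn:total_util_breakdown}, to $\frac{d_{it}^{\F}\pi^{\F}}{1+\epsilon_{it}}\bigl[1-(\pi_t^{\V}/\pi^{\F})^{1+\epsilon_{it}}\bigr]$. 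Summing over $t\in\{\PE,\OP\}$ and dividing by the common positive factor $A_i\pi^{\F}$ leaves precisely \eqref{eqn:delta_util_suff_3}, again as an exact equivalence.

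The computations are routine once the demand forms are inserted, so the main obstacles are bookkeeping rather than conceptual. The chief pitfall is the orientation of the off-peak integral: mishandling the reversed limits flips the sign of the gain term and corrupts the inequality. A second care point is justifying that flexibility and elasticity may be read at the indicated reference prices ($\partial d_{i\PE}^{\F}/\partial\pi$ at the flat price and $\partial d_{i\OP}^{\V}/\partial\pi$ at the variable price): for linear demand the slope is constant across each period and for isoelastic demand the elasticity is constant, so these evaluation points are immaterial and the conditions are sharp — this is exactly what separates the two-sided statements here from the conservative one-sided bounds (the crosshatched and gridded regions of Figure \ref{fig:delta_util}) that a general convex demand curve would satisfy only as a sufficient condition. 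Finally, I would flag the isoelastic edge case $\epsilon_{it}=-1$, where the power integral degenerates to a logarithm and the stated formula must be interpreted as its limiting value.
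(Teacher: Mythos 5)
Your proposal is correct and follows essentially the same route as the paper: the master identity via the envelope expression and the price ordering, then exact evaluation of the two integrals under each demand form (the paper phrases the linear case as tangent-line convexity bounds that collapse to equalities for affine demand, which is algebraically identical to your direct integration). Your observations about the off-peak integral orientation, the immateriality of the slope/elasticity evaluation points under the exact functional forms, and the $\epsilon_{it}=-1$ degenerate case are all consistent with, and in the last case slightly more careful than, the paper's own argument.
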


\textbf{Applicability and interpretation of conditions:} The linear demand condition \eqref{eqn:delta_util_suff} is necessary and sufficient under two scenarios: when $\ELoss_{it}$ is quadratic (yielding linear demand), where the convexity bounds used in the derivation become equalities, and when consumer flexibility approaches zero ($\hat{\ELoss}_{it}'' \rightarrow \infty$) in both periods, where the quadratic flexibility terms vanish and the condition reduces to purely consumption-weighted price changes: $(|\Delta \pi_{\OP}|d_{i\OP}^{\F} - |\Delta \pi_{\PE}| d_{i\PE}^{\F}) < 0$.

The isoelastic condition \eqref{eqn:delta_util_suff_3} provides a necessary and sufficient condition for $\Delta\Util_i<0$ when demand is isoelastic. The $\frac{1 }{1 + \epsilon_{it}^\F} \left[ 1 - \left( \frac{\pi_{t}^{\V}}{\pi^\F} \right)^{1 + \epsilon_{it}^\F} \right] $ terms capture the interaction between elasticity and price changes. This becomes clearer in the alternative formulation of \eqref{eqn:total_util_breakdown}:
\begin{align}
\Delta \Util_i = -\int_{\pi^\F}^{\pi_{\PE}^\V} d_{i\PE}^\F \cdot \left( \frac{\pi_{\PE}}{\pi^\F} \right)^{\epsilon_{i\PE}^\F} \cdot    d\pi_{\PE}
+ \int_{\pi_{\OP}^\V}^{\pi^\F} d_{i\OP}^\F \cdot \left( \frac{\pi_{\OP}}{\pi^\F} \right)^{\epsilon_{i\OP}^\F}    d\pi_{\OP}, \label{eqn:iso_elastic_utility}
\end{align} 
which shows that utility change increases in $|\epsilon_{it}|$. Like flexibility in the linear case, elasticity has a uniformly positive impact on the change in utility.

For general demand functions that are neither linear nor isoelastic, the linear demand condition provides a conservative screening test:

\begin{corollary} \label{cor:sufficient_change}
Under Assumption \ref{Assumption:Consumer}, if condition \eqref{eqn:delta_util_suff} holds for consumer $i$, then they experience a utility loss ($\Delta \Util_i < 0$) from the pricing change regardless of the specific functional form of their loss functions $\ELoss_{i\PE}$ and $\ELoss_{i\OP}$.
\end{corollary}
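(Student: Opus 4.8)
The plan is to reduce the claim to a comparison of two areas and then sandwich those areas between the linear expressions appearing in \eqref{eqn:delta_util_suff}. Since $A_i>0$, the exact decomposition \eqref{eqn:total_util_breakdown} shows that $\Delta\Util_i<0$ is equivalent to the peak-loss integral exceeding the off-peak-gain integral:
\[
\Delta\Util_i<0 \iff P>G,\qquad P\coloneqq\int_{\pi^{\F}}^{\pi_{\PE}^{\V}} d_{i\PE}^{*}(\pi)\,d\pi,\quad G\coloneqq\int_{\pi_{\OP}^{\V}}^{\pi^{\F}} d_{i\OP}^{*}(\pi)\,d\pi,
\]
where the ordering $\pi_{\OP}^{\V}\le\pi^{\F}\le\pi_{\PE}^{\V}$ from Proposition \ref{prop:price_demand_order} guarantees both integrals are over nondegenerate intervals. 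It therefore suffices to produce a lower bound $\underline P\le P$ and an upper bound $\bar G\ge G$ for which the hypothesis \eqref{eqn:delta_util_suff}, after rearrangement, reads precisely $\bar G<\underline P$; the chain $G\le\bar G<\underline P\le P$ then forces $P>G$ and hence $\Delta\Util_i<0$.

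Both bounds exploit convexity of the demand curves, which Assumption \ref{Assumption:Consumer} delivers through $\hat\ELoss_{it}'''<0$ (noted immediately before Theorem \ref{thm:delta_util}: flexibility decreases in price). For the peak loss, I would bound $d_{i\PE}^{*}$ below by its supporting line at the flat price, $\ell_{\PE}(\pi)=d_{i\PE}^{\F}+\tfrac{\partial d_{i\PE}^{\F}}{\partial\pi}(\pi-\pi^{\F})$; convexity gives $d_{i\PE}^{*}(\pi)\ge\ell_{\PE}(\pi)$ on $[\pi^{\F},\pi_{\PE}^{\V}]$, and integrating the line yields $\underline P=|\Delta\pi_{\PE}|\,d_{i\PE}^{\F}-\tfrac{(\Delta\pi_{\PE})^{2}}{2}\bigl|\tfrac{\partial d_{i\PE}^{\F}}{\partial\pi}\bigr|$. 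For the off-peak gain I would bound $d_{i\OP}^{*}$ above by the line through $(\pi^{\F},d_{i\OP}^{\F})$ whose slope equals the (steeper) slope at the variable-price endpoint $\pi_{\OP}^{\V}$, namely $u_{\OP}(\pi)=d_{i\OP}^{\F}+\tfrac{\partial d_{i\OP}^{\V}}{\partial\pi}(\pi-\pi^{\F})$. Since $d_{i\OP}^{*}$ is convex, its slope is increasing in $\pi$, so $\tfrac{\partial d_{i\OP}^{*}}{\partial\pi}(\pi)\ge\tfrac{\partial d_{i\OP}^{\V}}{\partial\pi}$ on $[\pi_{\OP}^{\V},\pi^{\F}]$; the function $u_{\OP}-d_{i\OP}^{*}$ then vanishes at $\pi^{\F}$ and is nonincreasing, hence nonnegative on the interval, giving $d_{i\OP}^{*}(\pi)\le u_{\OP}(\pi)$ and, after integration, $\bar G=|\Delta\pi_{\OP}|\,d_{i\OP}^{\F}+\tfrac{(\Delta\pi_{\OP})^{2}}{2}\bigl|\tfrac{\partial d_{i\OP}^{\V}}{\partial\pi}\bigr|$.

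The final step is bookkeeping: grouping the off-peak terms of \eqref{eqn:delta_util_suff} against the peak terms shows the hypothesis is exactly $\bar G<\underline P$, so the sandwich closes the argument. I would also remark that the two flexibility evaluation points are forced by the need for one-sided bounds—flexibility at the flat price $\pi^{\F}$ is the largest on the peak interval while flexibility at $\pi_{\OP}^{\V}$ is the largest on the off-peak interval—so each choice is the \emph{most conservative} one, matching the ``most favorable points for the consumer'' discussion preceding the theorem.

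I expect the only real obstacle to be the off-peak upper bound: unlike the peak case, a plain tangent (supporting) line gives a lower bound, not an upper bound, so one must instead use the chord-type line anchored at $\pi^{\F}$ with the steep endpoint slope and verify the direction via the monotone-slope argument above. Everything else is elementary integration of linear functions, and the two bounds collapse to equalities exactly when demand is linear (quadratic $\ELoss_{it}$), recovering the ``if and only if'' of Theorem \ref{thm:delta_util}(a) and confirming that \eqref{eqn:delta_util_suff} is genuinely conservative for general convex demand.
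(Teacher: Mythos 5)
Your proposal is correct and follows essentially the same route as the paper: both lower-bound the peak-period integral by the tangent line at $\pi^{\F}$ and upper-bound the off-peak integral by the line through $(\pi^{\F},d_{i\OP}^{\F})$ with the endpoint slope at $\pi_{\OP}^{\V}$, then observe that \eqref{eqn:delta_util_suff} is exactly the resulting inequality between the two linearized areas. Your explicit monotone-slope verification of the off-peak upper bound is a step the paper leaves implicit ("Similarly, we have the upper bound"), but the argument is otherwise identical.
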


Condition \eqref{eqn:delta_util_suff} is sufficient because it uses conservative bounds that underestimate the welfare benefits of flexibility. Since the convexity bounds become equalities for linear demand and inequalities otherwise, any consumer satisfying condition \eqref{eqn:delta_util_suff} will experience utility losses under their actual demand curve. The areas used for condition \eqref{eqn:delta_util_suff} are illustrated in Figure \ref{fig:delta_util}, where the crosshatched region shows how the utility loss is lower bounded using the slope of the demand curve (reciprocal of flexibility) and the demand at the flat price. The gridded region shows how utility gain is upper bounded using demand at the flat price and flexibility at the variable price. The generality of these bounds makes condition \eqref{eqn:delta_util_suff} a valuable screening tool: policymakers can identify consumers who will definitely lose from pricing reform, even without knowing the exact functional form of their demand curves.

\textbf{Economic interpretation:} Both conditions decompose utility changes into four distinct effects:

\emph{1. Consumption Level Effects:} The terms $(|\Delta \pi_{\OP}|d_{i\OP}^{\F} - |\Delta \pi_{\PE}| d_{i\PE}^{\F})$ capture the direct impact of price changes on consumers' flat-rate consumption levels. This represents what would happen if consumers could not adjust their consumption in response to price changes—pure incidence effects based on existing consumption patterns.

\emph{2. Flexibility Effects:} The quadratic terms $\frac{(\Delta\pi_t)^2}{2}|\frac{\partial d_{it}^{\F}}{\partial \pi_t}|$ (linear case) or elasticity-weighted terms (isoelastic case) capture second-order welfare effects from consumers' ability to adjust consumption in response to price changes. These terms represent the welfare value of demand responsiveness—the benefit consumers derive from being able to partially offset adverse price changes through consumption adjustments.

\emph{3. Price sensitivity effects:} From \eqref{eqn:total_util_breakdown}, $A_i$ scales the magnitude of utility changes: more sensitive consumers experience larger gains or losses from the same price shift. This scaling effect becomes crucial when comparing low- and high-income consumers.

\emph{4. Relative price change effects:} As $|\Delta \pi_{\PE}|$ increases relative to $|\Delta \pi_{\OP}|$, the balance shifts toward utility losses. The term $(|\Delta \pi_{\OP}|d_{i\OP}^{\F} - |\Delta \pi_{\PE}| d_{i\PE}^{\F})$ becomes more negative as peak price increases dominate off-peak price decreases. This means consumers are worse off when peak period price adjustments are large relative to off-peak adjustments, independent of their flexibility.

\textbf{Policy Implications:}
The decomposition in \eqref{eqn:delta_util_suff} shows that flexibility mitigates losses (or enhances gains). However, the quadratic terms $\frac{(\Delta \pi_t)^2}{2}|\frac{\partial d_{it}^{\F}}{\partial \pi_t}|$ demonstrate that demand responsiveness has differential value depending on timing. Those flexible only during periods with small price changes gain little additional protection since the flexibility terms remain negligible. However, consumers who are flexible during periods with large price changes benefit substantially as the quadratic terms provide significant welfare protection through consumption adjustments. Consumers who are inflexible in both periods face utility changes determined purely by consumption-weighted price differences $(|\Delta \pi_{\OP}|d_{i\OP}^{\F} - |\Delta \pi_{\PE}| d_{i\PE}^{\F})$.

These results suggest two sets of policies for different consumer types. Policies that directly promote demand responsiveness, such as smart meter deployment, time-varying rate education, or flexible appliance adoption, should be prioritized for consumers who lack flexibility due to technical limitations. However, a different set of policies may be needed for consumers who have very different levels of flexibility between periods or who remain inflexible after technical limitations have been addressed. These consumers are more likely to be trapped between negative health impacts from foregone electricity consumption and high electricity bills. It is not the ability to decrease consumption that is limiting their response, but the negative effects of doing so. Here, targeted protection mechanisms such as bill assistance programs, gradual rate transitions, or programs to insulate homes may be more effective.

\subsection{Difference in Change of Utility Between Consumers}
We now derive conditions under which the low-income consumer loses more utility than the high-income consumer from pricing reform.

\begin{proposition}\label{prop:low_high_util_compare}
The difference in utility changes between high-income and low-income consumers is:
\begin{align}
    \Delta \Util_{\H} - \Delta \Util_{\L} &= \int_{\pi^{\F}}^{\pi_{\PE}^{\V}} (A_{\L} d_{\L\PE}^{*}(\pi) - A_{\H} d_{\H\PE}^{*}(\pi))  d\pi + \int_{\pi_{\OP}^{\V}}^{\pi^{\F}} (A_{\H}d_{\H\OP}^{*}(\pi) - A_{\L}d_{\L\OP}^{*}(\pi))  d\pi \label{eqn:util_diff_general}
\end{align}
The low-income consumer loses more utility ($\Delta \Util_{\L} < \Delta \Util_{\H}$) if and only if the following conditions hold:

\textbf{Linear Demand:}
\begin{align} 
&A_{\L}\left((|\Delta \pi_{\OP}|d_{\L\OP}^{\F} - |\Delta \pi_{\PE}| d_{\L\PE}^{\F} ) + \frac{(\Delta\pi_{\PE})^2}{2}\left|\frac{\partial d^{\F}_{\L\PE}}{\partial \pi}\right| + \frac{(\Delta\pi_{\OP})^2}{2}\left|\frac{\partial d^{\V}_{\L\OP}}{\partial \pi}\right|\right) \nonumber\\ 
&< A_{\H}\left((|\Delta \pi_{\OP}|d_{\H\OP}^{\F} - |\Delta \pi_{\PE}| d_{\H\PE}^{\F} ) + \frac{(\Delta\pi_{\PE})^2}{2}\left|\frac{\partial d^{\V}_{\H\PE}}{\partial \pi}\right| + \frac{(\Delta\pi_{\OP})^2}{2}\left|\frac{\partial d^{\F}_{\H\OP}}{\partial \pi}\right|\right) \label{eqn:linear_comparison}
\end{align}

\textbf{Isoelastic Demand:}
\begin{align}
&A_{\L}\left(\frac{d_{\L\PE}^\F }{1 + \epsilon_{\L\PE}} \left[ 1 - \left( \frac{\pi_{\PE}^\V}{\pi^\F} \right)^{1 + \epsilon_{\L\PE}} \right] + \frac{d_{\L\OP}^\F }{1 + \epsilon_{\L\OP}} \left[ 1 - \left( \frac{\pi_{\OP}^\V}{\pi^\F} \right)^{1 + \epsilon_{\L\OP}} \right]\right) \nonumber \\ 
&< A_{\H}\left(\frac{d_{\H\PE}^\F }{1 + \epsilon_{\H\PE}} \left[ 1 - \left( \frac{\pi_{\PE}^\V}{\pi^\F} \right)^{1 + \epsilon_{\H\PE}} \right] + \frac{d_{\H\OP}^\F }{1 + \epsilon_{\H\OP}} \left[ 1 - \left( \frac{\pi_{\OP}^\V}{\pi^\F} \right)^{1 + \epsilon_{\H\OP}} \right]\right) \label{eqn:isoelastic_comparison}
\end{align}
\end{proposition}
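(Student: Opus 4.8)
The proof reduces to two steps: first establishing the exact integral identity \eqref{eqn:util_diff_general}, then reading off the linear and isoelastic conditions from the closed-form utility changes already supplied by Theorem \ref{thm:delta_util}. The plan is to treat the formula as a direct consequence of subtracting the two single-consumer expressions in \eqref{eqn:total_util_breakdown}, and then to exploit the fact that each $\Delta\Util_i$ factors as $A_i$ times a consumer-specific functional of demand, flexibility, and price changes.

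For the identity, I would write $\Delta\Util_{\H}$ and $\Delta\Util_{\L}$ each in the form of \eqref{eqn:total_util_breakdown}, namely $\Delta\Util_i = A_i\bigl(-\int_{\pi^{\F}}^{\pi_{\PE}^{\V}} d_{i\PE}^{*}\,d\pi + \int_{\pi_{\OP}^{\V}}^{\pi^{\F}} d_{i\OP}^{*}\,d\pi\bigr)$, and subtract. The crucial observation, which I would make explicit, is that both consumers face the \emph{same} equilibrium prices $\pi^{\F}$, $\pi_{\PE}^{\V}$, $\pi_{\OP}^{\V}$, since these are determined by aggregate demand and the planner's problem in Section \ref{sec:equilibrium}, not by individual preferences. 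Consequently the two peak integrals share the limits $[\pi^{\F},\pi_{\PE}^{\V}]$ and the two off-peak integrals share $[\pi_{\OP}^{\V},\pi^{\F}]$, so the differences combine under a single integral sign in each period, giving exactly \eqref{eqn:util_diff_general}. This step is an exact identity valid for any loss functions satisfying Assumption \ref{Assumption:Consumer}; no functional-form restriction enters here.

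For the conditions, I would note that $\Delta\Util_{\L} < \Delta\Util_{\H}$ is equivalent to $\Delta\Util_{\H} - \Delta\Util_{\L} > 0$. Theorem \ref{thm:delta_util} already expresses $\Delta\Util_i$ as $A_i$ times the bracketed functional $B_i$ appearing in \eqref{eqn:delta_util_suff} for linear demand and in \eqref{eqn:delta_util_suff_3} for isoelastic demand (possibly scaled by a positive constant identical across consumers, which plays no role below). Since $A_i>0$, the inequality $\Delta\Util_{\L}<\Delta\Util_{\H}$ is therefore equivalent to $A_{\L}B_{\L}<A_{\H}B_{\H}$, which is precisely \eqref{eqn:linear_comparison} in the linear case and \eqref{eqn:isoelastic_comparison} in the isoelastic case; any common positive factor cancels and does not affect the direction. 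The ``if and only if'' is inherited directly from that of Theorem \ref{thm:delta_util}, which holds because the convexity bounds used there become equalities for linear demand and the relevant integral is evaluated in closed form for isoelastic demand.

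The proof is essentially bookkeeping, so the only genuine care is in sign tracking and in the evaluation points of the flexibility terms. I would flag that in \eqref{eqn:linear_comparison} the low-income side carries $|\partial d^{\F}_{\L\PE}/\partial\pi|$ and $|\partial d^{\V}_{\L\OP}/\partial\pi|$ while the high-income side carries $|\partial d^{\V}_{\H\PE}/\partial\pi|$ and $|\partial d^{\F}_{\H\OP}/\partial\pi|$; for linear demand the slope is constant along the demand curve, so these flat- versus variable-price evaluation points coincide and the distinction is immaterial, preserving the exact equivalence. The main thing to get right is orientation: because a higher $A_i$ scales both the peak loss and the off-peak gain, I would confirm against \eqref{eqn:total_util_breakdown} that the subtraction in the first step pairs $A_{\L}d_{\L\PE}^{*}$ with the peak term and $A_{\L}d_{\L\OP}^{*}$ with the off-peak gain, so that the comparison inequality points in the direction claimed. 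This sign orientation, rather than any analytic difficulty, is the step I expect to require the most care.
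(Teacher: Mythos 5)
Your proof is correct, and for the identity \eqref{eqn:util_diff_general} and the isoelastic condition it coincides with the paper's argument. For the linear condition you take a more direct route: you write each $\Delta \Util_i$ exactly as $A_i$ times the bracket of \eqref{eqn:delta_util_suff} and subtract, whereas the paper first rearranges \eqref{eqn:util_diff_general} into a four-integral inequality and then applies \emph{asymmetric} tangent-line bounds---upper-bounding the terms that favour the low-income consumer (the peak integral of $d_{\H\PE}$ via the tangent at $\pi_{\PE}^{\V}$ and the off-peak integral of $d_{\L\OP}$ via the tangent at $\pi_{\OP}^{\V}$) while lower-bounding the others via tangents at $\pi^{\F}$. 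That asymmetric bounding is precisely what produces the mixed $\F$/$\V$ evaluation points in \eqref{eqn:linear_comparison}, and it is also what turns the same inequality into a one-sided sufficient test for arbitrary demand satisfying Assumption \ref{Assumption:Consumer} (Corollary \ref{cor:sufficient_change_compare}). Your observation that the evaluation points are immaterial for linear demand is right and explains why both routes yield the same if-and-only-if there; the only thing your route gives up is that it would not, on its own, deliver the general-demand corollary. The sign orientation you flag also checks out against \eqref{eqn:total_util_breakdown}.
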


Proposition \ref{prop:low_high_util_compare} provides conditions under which low-income consumers experience worse utility changes than high-income consumers under variable pricing reform. The linear demand condition is necessary and sufficient when consumer demand is linear (quadratic utility) and in the limit where flexibility approaches zero ($\hat{\ELoss}_{it}'' \rightarrow \infty$). This latter case corresponds to studies that only consider the change in incidence for consumers at a fixed historical demand. The linear demand condition provides a conservative test for identifying when pricing reform will disproportionately harm low-income consumers for general demand functions that satisfy Assumption \ref{Assumption:Consumer}.
\begin{corollary}\label{cor:sufficient_change_compare}
Under Assumption \ref{Assumption:Consumer}, if condition \eqref{eqn:linear_comparison} holds, then low-income consumers lose more utility than high-income consumers ($\Delta \Util_{\L} < \Delta \Util_{\H}$) from the pricing change, regardless of the specific functional forms of their loss functions $\ELoss_{\L\PE}, \ELoss_{\L\OP}, \ELoss_{\H\PE},$ and $\ELoss_{\H\OP}$.
\end{corollary}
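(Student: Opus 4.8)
The plan is to prove the claim by a sandwich argument: I will bound $\Delta\Util_\L$ from above and $\Delta\Util_\H$ from below, so that hypothesis \eqref{eqn:linear_comparison} forces a gap between the two bounds. Write $L_\L$ for the left-hand side of \eqref{eqn:linear_comparison} divided by $A_\L$ and $L_\H$ for the right-hand side divided by $A_\H$, so that \eqref{eqn:linear_comparison} reads $A_\L L_\L < A_\H L_\H$. I aim to establish $\Delta\Util_\L \le A_\L L_\L$ and $\Delta\Util_\H \ge A_\H L_\H$; chaining these gives $\Delta\Util_\L \le A_\L L_\L < A_\H L_\H \le \Delta\Util_\H$, i.e. the desired $\Delta\Util_\L < \Delta\Util_\H$. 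Throughout I use the price ordering $\pi_\OP^\V \le \pi^\F \le \pi_\PE^\V$ from Proposition \ref{prop:price_demand_order}, so the integration intervals $[\pi^\F,\pi_\PE^\V]$ and $[\pi_\OP^\V,\pi^\F]$ are well oriented, and the convexity of each demand curve in price (established before Theorem \ref{thm:delta_util} from $\hat{\ELoss}'''<0$), together with the fact that flexibility $|\partial d^*_{it}/\partial\pi| = 1/\hat{\ELoss}''_{it}$ decreases in price.

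The upper bound on $\Delta\Util_\L$ is exactly the estimate already proved in Corollary \ref{cor:sufficient_change}. From \eqref{eqn:total_util_breakdown} I write $\Delta\Util_\L = A_\L(G_\L - P_\L)$, with peak loss $P_\L=\int_{\pi^\F}^{\pi_\PE^\V} d^*_{\L\PE}(\pi)\,d\pi$ and off-peak gain $G_\L=\int_{\pi_\OP^\V}^{\pi^\F} d^*_{\L\OP}(\pi)\,d\pi$. The tangent to the convex peak demand curve at $\pi^\F$ lies below the curve, so integrating it \emph{lower}-bounds $P_\L$ by $d^\F_{\L\PE}|\Delta\pi_\PE| - \tfrac{(\Delta\pi_\PE)^2}{2}\bigl|\partial d^\F_{\L\PE}/\partial\pi\bigr|$. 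For the off-peak gain I compare the curve to the affine function through $(\pi^\F,d^\F_{\L\OP})$ whose slope has magnitude equal to the flexibility at the variable price $\pi_\OP^\V$; since flexibility is largest there on $[\pi_\OP^\V,\pi^\F]$, a one-line derivative comparison shows this line lies above the curve, \emph{upper}-bounding $G_\L$ by $d^\F_{\L\OP}|\Delta\pi_\OP| + \tfrac{(\Delta\pi_\OP)^2}{2}\bigl|\partial d^\V_{\L\OP}/\partial\pi\bigr|$. Subtracting gives $G_\L-P_\L\le L_\L$, i.e. $\Delta\Util_\L\le A_\L L_\L$.

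The lower bound on $\Delta\Util_\H$ is the genuinely new step and uses the \emph{opposite} extreme slopes, which is precisely why the flexibility terms in \eqref{eqn:linear_comparison} are evaluated at different prices for $\H$ than for $\L$. Here I must \emph{upper}-bound the peak loss and \emph{lower}-bound the off-peak gain. For the peak loss I compare the curve to the affine function through $(\pi^\F,d^\F_{\H\PE})$ with slope magnitude equal to the flexibility at the \emph{variable} peak price — the smallest on $[\pi^\F,\pi_\PE^\V]$; the derivative comparison now shows the line lies above the curve, giving $P_\H\le d^\F_{\H\PE}|\Delta\pi_\PE| - \tfrac{(\Delta\pi_\PE)^2}{2}\bigl|\partial d^\V_{\H\PE}/\partial\pi\bigr|$. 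For the off-peak gain I use the affine function through $(\pi^\F,d^\F_{\H\OP})$ with slope magnitude equal to the flexibility at the \emph{flat} off-peak price — the smallest on $[\pi_\OP^\V,\pi^\F]$; this line lies below the curve, giving $G_\H\ge d^\F_{\H\OP}|\Delta\pi_\OP| + \tfrac{(\Delta\pi_\OP)^2}{2}\bigl|\partial d^\F_{\H\OP}/\partial\pi\bigr|$. Subtracting yields $G_\H-P_\H\ge L_\H$, i.e. $\Delta\Util_\H\ge A_\H L_\H$.

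The main obstacle is orienting the four affine bounds consistently so that the hypothesis separates them in the correct direction: the quantity I want smaller, $\Delta\Util_\L$, must be bounded \emph{above} and the quantity I want larger, $\Delta\Util_\H$, bounded \emph{below}, so choosing the wrong slope endpoint for any one of the four integrals would break the chain. Each containment is justified by the monotonicity of flexibility in price: the extreme-slope line through the flat-price point stays on one side of the convex demand curve over the whole interval, verified by checking that the derivative of (line minus curve) has a single sign on the interval and vanishes at $\pi^\F$. When demand is linear, flexibility is constant, every affine bound is an exact equality, and the two estimates collapse to $\Delta\Util_\L = A_\L L_\L$ and $\Delta\Util_\H = A_\H L_\H$, recovering the necessity-and-sufficiency statement of Proposition \ref{prop:low_high_util_compare}; for general demand satisfying Assumption \ref{Assumption:Consumer} the inequalities are weak, yielding the stated sufficiency.
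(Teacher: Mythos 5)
Your proof is correct and follows essentially the same route as the paper: the four affine bounds you use (tangent at $\pi^{\F}$ for the quantities you bound below, the line through $(\pi^{\F},d^{\F})$ with the variable-price slope for the quantities you bound above) are exactly the bounds in \eqref{eqn:delta_util_l_h_first_order_approx_upper_bound_lhs}--\eqref{eqn:delta_util_l_h_first_order_approx_lower_bound_rhs}, evaluated at the same endpoints. The only difference is organizational — you package them as a per-consumer sandwich $\Delta \Util_{\L} \leq A_{\L}L_{\L} < A_{\H}L_{\H} \leq \Delta \Util_{\H}$, whereas the paper first rearranges the target inequality into \eqref{eqn:delta_util_l_h} and bounds each side — but the resulting sufficient condition is algebraically identical.
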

  As with Corollary \ref{cor:sufficient_change}, this condition uses convexity bounds to ensure reliable screening across all demand functions satisfying Assumption \ref{Assumption:Consumer}. To establish conservative bounds, we lower bound low-income consumer welfare losses (making their situation appear better) while upper bounding high-income consumer welfare losses (making their situation appear worse), ensuring the sufficient condition is robust. 

The interpretation of Proposition \ref{prop:low_high_util_compare} mirrors that of Theorem \ref{thm:delta_util}: 
\begin{itemize}
\item Consumers fare better relative to others when their consumption is less concentrated in the peak (i.e., relatively more off-peak demand).
\item 
When high-income consumers are more flexible than low-income consumers, they benefit from price decreases and are hurt less by price increases, as the ability to reduce peak consumption and increase off-peak consumption results in greater cost savings relative to low-income consumers.

\item As with individual utility analysis, the timing of flexibility matters: consumers benefit when aggregate peak flexibility is high (constraining peak price increases) and when they personally can adjust during periods with large price changes, as shown by the squared price terms in \eqref{eqn:linear_comparison}.

\end{itemize}
Consumer price sensitivity plays a crucial role in both conditions of Proposition \ref{prop:low_high_util_compare}. The ratio $A_{\L}/A_{\H} > 1$ acts as a multiplier that amplifies the relative impact of pricing changes on low-income consumers' utility. Because low-income consumers have a higher marginal disutility of expenditure ($A_{\L} > A_{\H}$), they experience larger utility changes from any given pricing reform, even when consumption patterns and flexibility are held constant. When consumers have substantially different price sensitivities, a pricing change can have disproportionately large impacts on low-income households while barely affecting high-income consumers. This creates an inherent distributional asymmetry in electricity pricing policy: reforms that appear modest in aggregate can impose significant welfare costs on vulnerable populations. Consequently, rate-makers implementing variable pricing should recognize that these changes may have outsized impacts on low-income consumers and consider complementary measures—such as targeted bill assistance or gradual rate transitions—to mitigate potential distributional harms.

For isoelastic demand, condition \eqref{eqn:isoelastic_comparison} shows that elasticity parameters mediate the impact of price changes on relative welfare outcomes, with higher elasticity allowing consumers to better adjust to price changes and providing advantages in the distributional comparison similar to flexibility benefits in the linear demand case.

These results demonstrate that distributional outcomes from variable pricing depend critically on the interaction between individual consumption patterns, flexibility characteristics, and aggregate market responses. The conditions derived provide policymakers with analytical tools for identifying vulnerable consumer groups before implementing pricing reforms.

\section{Model Extensions}
\label{sec:extensions}
While Sections \ref{sec:Model}-\ref{sec:Cons_Outcomes} consider a model with only two periods, two consumers, and no profit restrictions, the core results generalize under more complex settings. In this section, we discuss how our model extends to (1) incorporate profit constraints and (2) account for arbitrarily many periods and consumers. In the following, we present a concise version of our results. For the full proofs, please see \ref{appendix:extensions}.

\subsection{Profit Constraints in Variable Pricing Implementation}

Real-world implementations of variable pricing face regulatory constraints requiring utilities to maintain financial viability while transitioning from flat-rate structures. Unlike our baseline model, where the social planner freely optimizes welfare through marginal cost pricing, utility regulators typically mandate that pricing reforms satisfy revenue adequacy requirements or predetermined profit levels. These constraints ensure recovery of sunk infrastructure investments while addressing political economy concerns about utility earnings under rate restructuring. While such requirements move prices away from first-best welfare outcomes, they better reflect practical policy implementation constraints.

To analyze how profit constraints reshape equilibrium pricing and consumer outcomes, we extend the baseline operator problem in \eqref{eqn:operator_opt}-\eqref{eqn:price_constraint} to include an explicit profit requirement. We constrain total profit to equal a predetermined level $\bar{\Prof}$, such as the profit under flat pricing: $\sum_{t\in\mathcal{T}}d_{t}(\pi^{\F}) \cdot \pi^{\F} - C(d_{t}(\pi^{\F}))$. The modified optimization problem becomes:
\begin{subequations}\begin{align}
    \Welf^* = & \underset{\{\pi_{t}\}}{\min} \sum_{t \in \mathcal{T}} \Welf_t(\pi_t) \label{eqn:operator_ext}\\
    \text{where } \Welf_t(\pi_t) &= C(d_{t}^*(\pi_t)) + \hat{\ELoss}_{t}(d_{t}^*(\pi_t))  
    \text{  and  } d_{t}^*(\pi_t) = \underset{d \in [0,\bar{d}_t]}{\argmin} \{\hat{\ELoss}_{t}(d) + \pi_td\}  &&\forall t \in \mathcal{T}\\
    \st \quad &\pi_t \geq 0 &&\forall t \in \mathcal{T} 
    \\&\sum_{t \in \mathcal{T}} \Prof_{t}(\pi_t)  = \bar{\Prof}, \label{eqn:profit_eq}
\end{align}
\end{subequations}
where $\Prof_{t}(\pi_t) = d_{t}^*(\pi_t) \pi_t - C(d_{t}^*(\pi_t))$ represents period-$t$ profit. The Lagrangian for this problem is:
\begin{align}
\mathcal{L} = \sum_t \Welf_t(\pi_t) +  \nu \left(\sum_t \Prof_t(\pi_t) - \bar{\Prof}\right). \label{eqn:lagrangian_profit}
\end{align}

Let $\pi^*_t$ denote the optimal price in period $t$ for the profit-constrained operator problem. Taking the first-order condition from the Lagrangian \eqref{eqn:lagrangian_profit} with respect to $\pi_t$ yields:
\begin{align}
& (C'(d^*_t) - \pi_{t}^*) + \nu \left(\left(1 + \frac{1}{\epsilon_t}\right)\pi_{t}^* - C'(d^*_t)\right) = 0, \label{eqn:revenue_lagrange}
\end{align}
where $\nu$ is the Lagrange multiplier associated with the profit constraint and $\epsilon_t < 0$ is the price elasticity of demand in period $t$.

Rearranging equation \eqref{eqn:revenue_lagrange} gives:
\begin{align}
\frac{\nu}{1-\nu} = \frac{\pi_{t}^* - C'(d^*_t)}{\pi_{t}^*/\epsilon_t}. \label{eqn:revenue_nu}
\end{align}

Since $\nu$ is constant across periods, equating expressions for any two periods $t$ and $t'$ yields the ratio of price distortions:
\begin{align}
\frac{\pi^*_t - C'(d_t^*)}{\pi^*_{t'} - C'(d_{t'}^*)} = \frac{\pi^*_t/\epsilon_t}{\pi^*_{t'}/\epsilon_{t'}}. \label{eqn:distortion_ratio}
\end{align}
This shows that price distortions from marginal cost are inversely related to demand elasticity, with less elastic periods experiencing larger markups under binding profit constraints. 

The direction of price adjustment depends on whether the required profit level binds above or below the unconstrained optimum:

\begin{enumerate}
\item \emph{Revenue adequacy constraint:} 
If $\bar{\Prof} > \sum_{t\in\mathcal{T}}[\pi_t^{\V}d_t^{\V} - C(d^{\V}_t)]$ (requiring higher profits than welfare-optimal pricing generates), then $\pi_t^* > \pi_t^{\V}$ in all periods to increase total profit.

\item \emph{Profit limitation constraint:} 
If $\bar{\Prof} < \sum_{t\in\mathcal{T}}[\pi_t^{\V}d_t^{\V} - C(d^{\V}_t)]$ (capping profits below the welfare-optimal level), then $\pi_t^* < \pi_t^{\V}$ in all periods to reduce total profit.
\end{enumerate}

Since price distortions are inversely related to demand elasticity, periods with less elastic demand experience larger deviations from marginal cost pricing under binding profit constraints.

The distributional implications follow directly from our main results in Section \ref{sec:Cons_Outcomes}. The sufficient conditions in Theorem \ref{thm:delta_util} remain valid with profit-constrained prices $\pi_t^*$ substituted for unconstrained prices $\pi_t^{\V}$. Revenue adequacy constraints that increase price deviations will expand the set of consumers satisfying the loss condition, while profit caps that reduce price deviations will contract this set. The consumer characteristics that determine vulnerability—consumption timing, flexibility, and price sensitivity—remain the same, but the magnitude of distributional effects changes with the binding profit constraint.

\subsection{Extension to Many Consumers and Periods}

Our model readily accommodates more than two consumers since the aggregate consumer properties (Assumption \ref{Assumption:Consumer} and equations \eqref{eqn:FOC_cons}-\eqref{clm:nondec_consumption}) hold for any number of consumers. The distributional analysis extends to comparisons across consumer quantiles or relative to population means rather than simple pairwise comparisons.

For multiple time periods, the key results generalize with appropriate notation changes. The flat price remains a flexibility-weighted average of marginal costs:
\begin{align}
\pi^{\F} = \frac{\sum_{t \in \mathcal{T}} C'(d_{t}(\pi^{\F})) \left|\frac{\partial d_{t}}{\partial \pi}\right|_{\pi^{\F}}}{\sum_{t \in \mathcal{T}} \left|\frac{\partial d_{t}}{\partial \pi}\right|_{\pi^{\F}}}
\end{align}
The sufficient conditions for utility loss extend by summing across all periods, maintaining the same form as our two-period analysis. Complete proofs are provided in \ref{appendix:extensions}.

\section{Discussion}
\label{sec:discussion}

Our theoretical framework identifies conditions under which consumer heterogeneity creates systematic patterns of winners and losers under variable electricity pricing reforms. Theorem \ref{thm:delta_util} and Proposition \ref{prop:low_high_util_compare} demonstrate that consumers most vulnerable to utility losses from variable pricing exhibit predictable characteristics: high consumption during peak periods, limited consumption during off-peak periods, and inflexible demand during periods with large price changes. Empirical studies such as \cite{Horowitz_Lave_2014} and \cite{borenstein2012redistributional} find that this vulnerability profile often corresponds to households with limited economic resources, older housing stock, or constrained ability to shift electricity usage across time periods.

Conversely, consumers who benefit from pricing reforms typically have more balanced consumption profiles and greater flexibility to respond to price incentives. While these characteristics are more often associated with higher-income households that have better appliances and housing stock, our theoretical framework demonstrates that the underlying consumption and flexibility patterns matter more than income categories per se.

\subsection{Empirical Literature and Theoretical Insights}
Much of the empirical literature on variable pricing impacts predominantly relies on simulation studies that impose constant elasticity assumptions across consumers and time periods \citep{burger2020efficiency, Horowitz_Lave_2014, simshauser2016inequity, Borenstein_billvolatility}. While these studies provide valuable insights into aggregate demand response, the constant elasticity approach obscures two critical sources of heterogeneity that our analysis shows are fundamental to distributional outcomes: variation in flexibility across consumer groups and variation in flexibility across time periods for individual consumers. 

The empirical literature's conflicting findings on distributional impacts can be understood through our theoretical lens. Studies finding that low-income consumers are harmed by variable pricing typically examine contexts where these consumers have peak-dominated consumption profiles—such as heating-intensive usage in cold climates—combined with limited flexibility during high-price periods \citep{Horowitz_Lave_2014}. Conversely, studies finding benefits for low-income consumers analyze settings where these consumers have more balanced consumption profiles across time periods \citep{simshauser2016inequity}. Our theoretical results explain these divergent empirical findings: it is not income per se that determines outcomes, but rather the interaction between consumption patterns, flexibility capabilities, and the specific pattern of price changes across periods.

The constant elasticity assumption used in simulation studies creates an artificial link between baseline consumption and demand responsiveness, meaning that consumers' flexibility patterns mechanically mirror their consumption patterns. However, our analysis demonstrates that if flexibility and consumption patterns are not completely dependent, each plays an important role in determining distributional outcomes. For example, a consumer with high peak consumption might have flexibility concentrated in off-peak periods (due to work schedules, appliance constraints, or housing characteristics), leading to very different welfare outcomes than would be predicted by uniform elasticity assumptions. Proposition \ref{prop:low_high_util_compare} shows that distributional outcomes depend on how each consumer group's flexibility aligns with the pattern of price changes across periods. Consumers who are flexible during periods with small price adjustments gain little benefit, while those who can adjust consumption during periods with large price changes receive substantial welfare benefits. This suggests that empirical evaluations of variable pricing programs should incorporate more sophisticated representations of heterogeneous demand response capabilities, particularly when assessing equity implications.


Beyond reconciling empirical findings, the equilibrium analysis reveals a spillover effect operating through the pricing mechanism: individual flexibility contributes to aggregate demand flexibility, which in turn affects the curvature of the planner's welfare function and the magnitude of price adjustments across periods (Proposition \ref{lem:pi_deltas}). This creates a counterintuitive result: when consumers are highly flexible in peak periods, the planner constrains peak price increases because small price deviations cause large welfare losses when consumption changes drastically. Conversely, when peak flexibility is low, prices can rise substantially without triggering major demand shifts, leading to larger price increases that disproportionately harm inflexible consumers.


Price sensitivity ($A_i$), and its role in encouraging flexibility create a tension inherent in encouraging demand response. Higher price sensitivity enables greater demand flexibility, which can provide welfare protection. Simultaneously, price sensitivity also amplifies the utility impacts of any pricing change. Low-income consumers, who tend to have higher $A_i$, are more heavily affected by pricing reforms regardless of their flexibility or usage profile, leaving them systematically more vulnerable to adverse changes and greater beneficiaries of positive ones. 


This amplification effect has implications for energy justice. When low-income consumers are less flexible than high-income consumers during periods experiencing large price increases—a common scenario given differential access to well-insulated housing and health needs—they face a double burden: they face a higher loss from foregone electricity use when adjusting consumption, and their higher price sensitivity magnifies the welfare impact of those changes.


\subsection{Methodological Considerations and Limitations}
Our analysis uses utility rather than consumer surplus as the primary welfare measure to capture broader impacts of electricity access on household well-being. Consumer surplus, calculated by normalizing utility by the marginal disutility of expenditure ($A_i$), measures welfare changes in terms of willingness to pay and facilitates interpersonal comparisons by expressing all impacts in monetary units. However, this approach may be inadequate for evaluating policies affecting essential services like electricity, where foregone consumption can have serious health and safety consequences that extend beyond consumers' revealed willingness to pay. The choice of welfare measure has particular importance for energy justice analysis. When low-income households reduce electricity consumption in response to price increases, the welfare loss encompasses not just the monetary burden but also the health impacts of inadequate heating or cooling \citep{Anderson_White_Finney_2012, Cong_Nock_Qiu_Xing_2022, Shi_Wang_Qiu_Deng_Xie_Zhang_Ma_2024}. We highlight temperature control, but other impacts, include the educational consequences of insufficient lighting and the social isolation caused by the inability to power communication devices, are worthy of attention as well. Our utility-based approach captures these broader welfare impacts while maintaining analytical tractability, with the parameter $A_i$ representing not just price sensitivity but also the relative importance of electricity versus other goods in each consumer's welfare function.
This methodological choice aligns with growing recognition in energy policy that traditional economic metrics may inadequately capture the stakes involved in electricity access for vulnerable populations.


Our analysis prioritizes distributional precision over supply-side complexity, making several simplifying assumptions that highlight opportunities for future research while preserving the core insights about consumer heterogeneity and equity. The most significant simplification is our focus on short-run marginal costs, which abstracts from fixed cost recovery, capacity payments, and long-term investment incentives. This choice reflects both analytical tractability and the extensive existing literature on these topics \citep{borenstein2022two, burger2020efficiency, Borenstein_longrun, Joskow_Tirole_2006, Joskow_Tirole_2007}, which has established efficient mechanisms for recovering non-marginal costs through various pricing structures.

However, combining our heterogeneous consumer framework with realistic fixed cost recovery represents an important avenue for future research, particularly as utilities increasingly face pressure to recover grid modernization investments while maintaining equitable rate structures. The interaction between distributional concerns and cost recovery methods, such as capacity charges, connection fees, or progressive rate structures, could significantly affect the welfare impacts we identify. Similarly, incorporating investment in demand response technologies and grid infrastructure could reveal how capital cost allocation affects the distribution of variable pricing benefits.


Our exclusion of supply cost variation, while analytically convenient, may understate the distributional challenges of variable pricing in practice. As electricity systems integrate higher shares of renewable energy, the variation in net load between peak and off-peak periods is likely to increase \citep{Chao_2010}, amplifying the price differences that drive our distributional results. Climate policies promoting renewable integration may thus intensify the equity challenges we identify, making our findings increasingly policy-relevant as decarbonization accelerates. 

Our assumption of perfect price responsiveness, while providing an upper bound on demand flexibility benefits, may actually understate risks to low-income consumers. Limited access to smart appliances, home automation systems, and flexible work arrangements means that low-income households often cannot fully respond to price signals even when they understand them \citep{calver2021demand}. Future research incorporating these realistic constraints on demand response would likely find even larger distributional disparities, reinforcing our key finding that variable pricing requires complementary policies to protect vulnerable populations. 


\subsection{Policy Implications}

The conditions we derive reveal that successful variable pricing programs must go beyond promoting aggregate demand response to ensure that flexibility capabilities are equitably distributed across consumer groups. Policies that enhance demand flexibility only among high-income consumers may inadvertently worsen distributional outcomes by creating larger welfare gaps between consumer groups. Understanding the spillover effects through price formation is crucial for designing equitable variable pricing programs and complementary policies to protect vulnerable populations.

Our results demonstrate that standard variable pricing implementations risk exacerbating energy inequality unless accompanied by targeted policy interventions. Several policy mechanisms can preserve the efficiency benefits of marginal cost pricing while protecting vulnerable consumers from adverse distributional impacts.

\textbf{Rate Design Modifications:} Progressive block pricing structures can address both volumetric inefficiencies and equity concerns by incorporating consumer income or historical usage patterns into rate design \citep{burger2020efficiency, Borenstein_equityblock}. Opt-in variable pricing programs offer another pathway that maintains system benefits while providing consumer protection \citep{Borenstein_billvolatility, gambardella2018time}.

\textbf{Technology Access and Infrastructure:} Beyond rate design modifications, complementary technology and assistance programs are essential for equitable variable pricing implementation. Subsidized access to smart thermostats, programmable appliances, and battery storage can democratize demand flexibility capabilities, while targeted bill assistance during high-price periods can provide safety nets for vulnerable households.

\textbf{Targeted Protection Measures:} Demand response initiatives targeting vulnerable populations should prioritize measures not directly tied to demand response, such as insulation investments and assistance with healthcare during temperature extremes. This should be done not just because of fairness concerns, but because it will help these consumers reduce electricity consumption when it is expensive for society to produce.

The key insight from our analysis is that successful variable pricing requires coordinated policy packages that address both individual flexibility constraints and the systemic factors that concentrate benefits among already-advantaged consumers. The distributional challenges identified in our analysis stem from a fundamental tension in electricity policy: the increasing need for demand flexibility to support renewable energy integration and grid reliability conflicts with the potential for time-varying prices to harm vulnerable populations. As electrification accelerates and intermittent renewables compose larger shares of the generation mix, peak demand management becomes increasingly critical, making some form of dynamic pricing essential for system efficiency.

\section*{Acknowledgements}
We thank the MIT Portugal Program and the IDSS Initiative for Combatting Systemic Racism for financial support of this research. We also gratefully acknowledge support from the MIT UROP program, which enabled undergraduate research contributions to this project.   
\appendix

\newpage

\section{Linear and Isoelastic Demand Functions} \label{appendix:demand_func}

\textbf{Linear Demand (Quadratic Utility)}
Linear demand arises from quadratic utility functions of the form:
r$$ \Util_{it} = k_{it} \bar{d}_{it}^{2} - k_{it}(\bar{d}_{it} - d_{it})^2 - A_i \pi_t d_{it} $$
where $k_{it} > 0$ measures consumer $i$'s sensitivity to electricity shortfalls in period $t$, and $\bar{d}_{it}$ represents the consumer's ideal consumption level. This specification, commonly used in demand response models \citep{deng2015survey}, yields the loss function:
$$ \ELoss_{it}(d_{it}) = k_{it}(\bar{d}_{it} - d_{it})^2 $$
with derivatives $\frac{\partial \ELoss_{it}}{\partial d_{it}} = -2k_{it}(\bar{d}_{it} - d_{it})$ and $\frac{\partial^2 \ELoss_{it}}{\partial d_{it}^2} = 2k_{it}$.

The resulting optimal demand is:
$$ d^*_{it}(\pi_t) = \min\left\{\max\left\{\bar{d}_{it} - \frac{A_i \pi_t}{2k_{it}}, 0\right\}, \bar{d}_{it}\right\} $$
For interior solutions, this yields linear demand with flexibility $\left|\frac{\partial d^*_{it}}{\partial \pi_t}\right| = \frac{A_i}{2k_{it}}$.

\textbf{Isoelastic Demand}
Standard isoelastic demand functions exhibit constant price elasticity:
$$ d_{it}(\pi_t) = d_{it}^{\F} \left(\frac{\pi_t}{\pi^{\F}}\right)^{\epsilon_{it}} $$
where $d_{it}^{\F}$ is consumption at the reference price $\pi^{\F}$, and $\epsilon_{it} < 0$ is the constant price elasticity. However, pure isoelastic demand creates integrability issues at extreme prices.

\emph{Piecewise Specification:} To ensure analytical tractability while maintaining isoelastic properties over relevant price ranges, we use the following piecewise specification:

\begin{align}
d_{it}(\pi_t) = \begin{cases}
d_{it}^{\F} \left(\frac{\pi_t}{\pi^{\F}}\right)^{\epsilon_{it}} & \text{if } \pi_{\text{low}} \leq \pi_t \leq \pi_{\text{high}} \\[0.5em]
d(\pi_{\text{high}}) - m_{\text{high}} \cdot (\pi_t - \pi_{\text{high}}) & \text{if } \pi_t > \pi_{\text{high}} \\[0.5em]
d(\pi_{\text{low}}) + m_{\text{low}} \cdot (\pi_t - \pi_{\text{low}}) & \text{if } \pi_t < \pi_{\text{low}}
\end{cases}
\end{align}

where the boundary values and slopes ensure continuity:

\emph{Upper boundary ($\pi_t > \pi_{\text{high}}$):}
\begin{align}
d(\pi_{\text{high}}) &= d_{it}^{\F} \left(\frac{\pi_{\text{high}}}{\pi^{\F}}\right)^{\epsilon_{it}} \\
m_{\text{high}} &= \epsilon_{it} \cdot d_{it}^{\F} \left(\frac{\pi_{\text{high}}}{\pi^{\F}}\right)^{\epsilon_{it}-1} \cdot \frac{1}{\pi^{\F}}
\end{align}

\emph{Lower boundary ($\pi_t < \pi_{\text{low}}$):}
\begin{align}
d(\pi_{\text{low}}) &= d_{it}^{\F} \left(\frac{\pi_{\text{low}}}{\pi^{\F}}\right)^{\epsilon_{it}} \\
m_{\text{low}} &= \epsilon_{it} \cdot d_{it}^{\F} \left(\frac{\pi_{\text{low}}}{\pi^{\F}}\right)^{\epsilon_{it}-1} \cdot \frac{1}{\pi^{\F}}
\end{align}

This construction ensures that both the demand function and its derivative are continuous at the boundary points $\pi_{\text{low}}$ and $\pi_{\text{high}}$. By setting these thresholds sufficiently far from the relevant price range (e.g., $\pi_{\text{low}} = 0.01 \cdot \pi^{\F}$ and $\pi_{\text{high}} = 100 \cdot \pi^{\F}$), the linear regions have negligible impact on the analysis while ensuring integrability.

\emph{Properties:} In the isoelastic region, the price elasticity is:
$$ \epsilon_{it} = \frac{d \ln d_{it}}{d \ln \pi_t} = \epsilon_{it} $$
and the flexibility is:
$$ \left|\frac{\partial d_{it}}{\partial \pi_t}\right| = \frac{|\epsilon_{it}| d_{it}^{\F}}{\pi^{\F}} \left(\frac{\pi_t}{\pi^{\F}}\right)^{\epsilon_{it}-1} $$

\textbf{Relationship Between Functional Forms}
Both specifications satisfy Assumption \ref{Assumption:Consumer} regarding convexity and monotonicity. The key difference lies in flexibility patterns. Linear demand exhibits constant flexibility $\left|\frac{\partial d^*_{it}}{\partial \pi_t}\right| = \frac{A_i}{2k_{it}}$
while isoelastic demand exhibits price-dependent flexibility that varies with price level.


\section{Section \ref{sec:Model} Proofs}
We derive the KKT conditions for the consumer and planner problems. Because the consumer problem is a convex optimization problem that satisfies Slater's condition, the KKT conditions provide necessary and sufficient conditions for optimality \cite{boyd_convex_2004}. 

\subsection{Consumer Problem}\label{subsec:consumer_kkt}

\textbf{Proof of \eqref{eqn:FOC_cons} and Proposition \ref{prop:demand_curve}}
\begin{proof}
The objective function $\frac{\ELoss_{it}(d_{it})}{A_{i}} + \pi_{t}d_{it}$ is convex in $d_{it}$ since $\ELoss_{it}$ is convex by Assumption \ref{Assumption:Consumer} and positive scaling preserves convexity, while $\pi_{t}d_{it}$ is linear. The constraint set $[0,\bar{d}_{it}]$ is convex and compact. For any $d_{it} \in (0,\bar{d}_{it})$, both inequality constraints are satisfied with strict inequality, so Slater's condition holds. Therefore, the KKT conditions are necessary and sufficient for optimality.

The Lagrangian is:
$$\mathcal{L}(d_{it},\lambda_{it}^1,\lambda_{it}^2) = \frac{\ELoss_{it}(d_{it})}{A_{i}} + \pi_{t}d_{it} - \lambda_{it}^1 d_{it} + \lambda_{it}^2 (d_{it}-\bar{d}_{it})$$

The KKT conditions are:
\begin{align}
\text{Stationarity: } \quad &\frac{1}{A_i}\frac{\partial\ELoss_{it}}{\partial d_{it}} \biggr \rvert_{d_{it}^*} + \pi_{t} - \lambda_{it}^{1} + \lambda_{it}^{2} = 0\\
\text{Primal feasibility: } \quad &0 \leq d_{it}^* \leq \bar{d}_{it}\\
\text{Dual feasibility: } \quad &\lambda_{it}^1, \lambda_{it}^2 \geq 0\\
\text{Complementary slackness: } \quad &\lambda_{it}^1 d_{it}^* = 0, \quad \lambda_{it}^2 (\bar{d}_{it} - d_{it}^*) = 0
\end{align}

\emph{Case 1: Interior solution} ($0 < d_{it}^* < \bar{d}_{it}$)
Both constraints are slack, so $\lambda_{it}^1 = \lambda_{it}^2 = 0$. The stationarity condition gives:
$$\frac{\partial\ELoss_{it}}{\partial d_{it}} \biggr \rvert_{d_{it}^*} = -A_{i}\pi_{t}$$

Since $\ELoss_{it}' < 0$ and strictly decreasing by Assumption \ref{Assumption:Consumer}, the inverse function exists and:
$$d_{it}^* = (\ELoss_{it}')^{-1}(-A_{i}\pi_{t}) = (\hat{\ELoss}_{it}')^{-1}(-\pi_{t})$$

\emph{Case 2: Boundary solutions}
\begin{itemize}
\item[-] If $d_{it}^* = 0$: then $\lambda_{it}^1 \geq 0$ and $\lambda_{it}^2 = 0$
\item[-] If $d_{it}^* = \bar{d}_{it}$: then $\lambda_{it}^1 = 0$ and $\lambda_{it}^2 \geq 0$
\end{itemize}

Combining all cases yields the characterization in \eqref{eqn:x_opt}.
\end{proof}
\hspace{2cm}

 \textbf{Proof of \eqref{clm:nondec_consumption}}
 
 \begin{proof}
We compute the derivative $\frac{\partial d_{it}^{*}(\pi_{t})}{\partial \pi_{t}}$ for each case in the demand function \eqref{eqn:x_opt}.

\emph{Interior case:} For $d_{it}^{*}(\pi_{t}) \in (0, \overline{d}_{it})$, we have $d_{it}^{*}(\pi_{t}) = \left(\frac{\ELoss_{it}'}{A_i}\right)^{-1}(-\pi_{t})$. Using the inverse function theorem:

\begin{align}
\frac{\partial d_{it}^{*}(\pi_{t})}{\partial \pi_{t}} &= \frac{\partial}{\partial \pi_{t}}\left[\left(\frac{\ELoss_{it}'}{A_i}\right)^{-1}(-\pi_{t})\right]\\
&= \frac{1}{\left(\frac{\ELoss_{it}'}{A_i}\right)'\left(d_{it}^{*}(\pi_{t})\right)} \cdot (-1)\\
&= \frac{1}{\frac{\ELoss_{it}''(d_{it}^{*}(\pi_{t}))}{A_i}} \cdot (-1)\\
&= \frac{-A_i}{\ELoss_{it}''(d_{it}^{*}(\pi_{t}))}
\end{align}

Since $\ELoss_{it}'' > 0$ by Assumption \ref{Assumption:Consumer}, this derivative is negative.

\emph{Boundary cases:}
\begin{itemize}
\item[-] When $d_{it}^{*}(\pi_{t}) = \overline{d}_{it}$, demand is constrained at the maximum, so $\frac{\partial d_{it}^{*}}{\partial \pi_{t}} = 0$
\item[-] When $d_{it}^{*}(\pi_{t}) = 0$, demand is constrained at zero, so $\frac{\partial d_{it}^{*}}{\partial \pi_{t}} = 0$
\end{itemize}

Combining all cases yields the derivative formula in \eqref{clm:nondec_consumption}.
\end{proof}
\hspace{1cm}

Demand is strictly convex if and only if the third derivative of the electricity loss function is negative:
\begin{align}
    \frac{\partial^2 d_{it}^{*}(\pi_{t})}{\partial \pi_{t}^2}>0 \iff \ELoss_{it}'''(d_{it}^{*}(\pi_t)) < 0 \label{clm:J_triple_demand_convex}.
\end{align} 
\begin{proof}
For interior solutions where $d_{it}^{*}(\pi_t) \in (0, \bar{d}_{it})$, we differentiate \eqref{clm:nondec_consumption}:

\begin{align}
\frac{\partial^2 d_{it}^{*}(\pi_t)}{\partial \pi_t^2} &= \frac{\partial}{\partial \pi_t}\left(\frac{-A_i}{\ELoss_{it}''(d_{it}^{*}(\pi_t))}\right)\\
&= \frac{A_i \ELoss_{it}'''(d_{it}^{*}(\pi_t))}{(\ELoss_{it}''(d_{it}^{*}(\pi_t)))^2} \cdot \frac{\partial d_{it}^{*}}{\partial \pi_t}\\
&= \frac{A_i \ELoss_{it}'''(d_{it}^{*}(\pi_t))}{(\ELoss_{it}''(d_{it}^{*}(\pi_t)))^2} \cdot \frac{-A_i}{\ELoss_{it}''(d_{it}^{*}(\pi_t))}\\
&= \frac{-A_i^2 \ELoss_{it}'''(d_{it}^{*}(\pi_t))}{(\ELoss_{it}''(d_{it}^{*}(\pi_t)))^3}
\end{align}

Since $A_i > 0$ and $\ELoss_{it}''(d_{it}^{*}(\pi_t)) > 0$ by Assumption \ref{Assumption:Consumer}, we have:
$$\frac{\partial^2 d_{it}^{*}(\pi_t)}{\partial \pi_t^2} > 0 \iff \ELoss_{it}'''(d_{it}^{*}(\pi_t)) < 0$$

At the boundary solutions ($d_{it}^{*} = 0$ or $d_{it}^{*} = \bar{d}_{it}$), the second derivative is zero, so convexity is determined by the interior behavior.
\end{proof}

\begin{align}
&  \frac{\partial \ELoss_{it}(d_{it}^{*}(\pi_t))}{\partial \pi_t} = \begin{cases}
   0 & d_{it}^{*}(\pi_t) = \overline{d}_{it}, \\
   \frac{A_i^2 \pi_t}{\ELoss_{it}''(d_{it}^{*}(\pi_t))} & d_{it}^{*}(\pi_{t}) \in (0, \overline{d}_{it})\\
   0 & d_{it}^{*}(\pi_t) = 0 \\
\end{cases}\label{clm:inc_ELoss}
\end{align}
\textbf{Proof of \eqref{clm:dec_U} and \eqref{clm:inc_ELoss}}
\begin{proof}
We first prove \eqref{clm:inc_ELoss}. For boundary cases where $d_{it}^{*}(\pi_t) \in \{0, \overline{d}_{it}\}$, \eqref{clm:nondec_consumption} shows that consumption doesn't change with price, so the derivative of electricity loss is zero.

For interior solutions where $d_{it}^{*}(\pi_t) \in (0, \overline{d}_{it})$, we use the chain rule:
\[\frac{\partial \ELoss_{it}(d_{it}^{*}(\pi_t))}{\partial \pi_t} = \frac{\partial \ELoss_{it}}{\partial d_{it}} \biggr \rvert_{d_{it}^{*}(\pi_t)} \cdot \frac{\partial d_{it}^{*}(\pi_t)}{\partial \pi_t}\]

From the first-order condition \eqref{eqn:FOC_cons}, we have $\frac{\partial \ELoss_{it}}{\partial d_{it}} \biggr \rvert_{d_{it}^{*}(\pi_t)} = -A_{i}\pi_t$. Substituting this and \eqref{clm:nondec_consumption}:
\[\frac{\partial \ELoss_{it}(d_{it}^{*}(\pi_t))}{\partial \pi_t} = (-A_{i}\pi_t) \cdot \left(\frac{-A_i}{\ELoss_{it}''(d_{it}^{*}(\pi_t))}\right) = \frac{A_i^2 \pi_t}{\ELoss_{it}''(d_{it}^{*}(\pi_t))}\]

Since $A_i > 0$, $\pi_t \geq 0$, and $\ELoss_{it}'' > 0$, this derivative is non-negative, proving \eqref{clm:inc_ELoss}.

We now prove \eqref{clm:dec_U}. For boundary cases, since consumption doesn't change with price, $\frac{\partial \Util_{it}^{*}(\pi_t)}{\partial \pi_t} = -A_i \cdot 0 = 0$.

For interior solutions, we differentiate $\Util_{it}^{*}(\pi_t) = M_{it} - \ELoss_{it}(d_{it}^{*}(\pi_t)) - A_{i}\pi_t d_{it}^{*}(\pi_t)$:
\begin{align}
\frac{\partial \Util_{it}^{*}(\pi_t)}{\partial \pi_t} &= -\frac{\partial \ELoss_{it}(d_{it}^{*}(\pi_t))}{\partial \pi_t} - A_{i}\left(\pi_t\frac{\partial d_{it}^{*}(\pi_t)}{\partial \pi_t} + d_{it}^{*}(\pi_t)\right)
\end{align}

We know that $\frac{\partial \ELoss_{it}(d_{it}^{*}(\pi_t))}{\partial \pi_t} = -A_{i}\pi_t \frac{\partial d_{it}^{*}(\pi_t)}{\partial \pi_t}$. Substituting:
\begin{align}
\frac{\partial \Util_{it}^{*}(\pi_t)}{\partial \pi_t} &= -\left(-A_{i}\pi_t \frac{\partial d_{it}^{*}(\pi_t)}{\partial \pi_t}\right) - A_{i}\pi_t\frac{\partial d_{it}^{*}(\pi_t)}{\partial \pi_t} - A_{i}d_{it}^{*}(\pi_t)\\
&= A_{i}\pi_t \frac{\partial d_{it}^{*}(\pi_t)}{\partial \pi_t} - A_{i}\pi_t\frac{\partial d_{it}^{*}(\pi_t)}{\partial \pi_t} - A_{i}d_{it}^{*}(\pi_t)\\
&= -A_{i}d_{it}^{*}(\pi_t)
\end{align}

This completes the proof of \eqref{clm:dec_U}.
\end{proof}
\vspace{0.5cm}

In Appendix \ref{Appendix:Agg_Cons}, we provide complete technical details of these constructions and proofs. The key result is that the aggregate consumer satisfies all the same behavioral properties as individual consumers, enabling market-level analysis while preserving microeconomic foundations.
\subsection{Aggregate Consumer} \label{Appendix:Agg_Cons}
We construct an aggregate consumer to analyze market-level behavior. Given individual optimal demands $d_{it}^*(\pi_t)$, we define aggregate consumption as $d_t^* = \sum_{i\in \mathcal{N}} d_{it}^*$ and aggregate utility as $\Util_t = \sum_{i\in \mathcal{N}} \Util_{it}$. The key question is: does there exist an aggregate loss function $\hat{\ELoss}_t(d_t)$ such that the aggregate consumer behaves like individual consumers?

\textbf{Construction of Aggregate Loss Function:}
We define the aggregate normalized loss function implicitly through the optimization problem:
$$\hat{\ELoss}_t(d_t) = \min_{\{d_{it}\}: \sum_i d_{it} = d_t, d_{it} \in [0,\bar{d}_{it}]} \sum_{i\in \mathcal{N}} \hat{\ELoss}_{it}(d_{it})$$

This represents the minimum loss when aggregate consumption $d_t$ is optimally allocated across consumers.

\textbf{Verification that Assumption \ref{Assumption:Consumer} holds for the aggregate consumer:}

\begin{proposition}
The aggregate electricity loss function $\hat{\ELoss}_t(d_t)$ is non-negative, decreasing, and convex. When Assumptions \ref{Assumption:ELoss_Properties}--\ref{Assumption: PeriodSW_order} are satisfied for individual consumers, they are also satisfied for the aggregate consumer.
\end{proposition}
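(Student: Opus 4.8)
My plan is to treat the aggregate normalized loss $\hat{\ELoss}_t$ as the infimal convolution of the individual normalized losses and to transfer every property through the aggregate demand curve. The structural fact I would establish first is that the aggregate consumer's price response is the horizontal sum of the individual responses: since $\min_{d_t}\big[\hat{\ELoss}_t(d_t)+\pi_t d_t\big]=\sum_{i}\min_{x_i}\big[\hat{\ELoss}_{it}(x_i)+\pi_t x_i\big]$, the minimizing aggregate demand satisfies $d_t^*(\pi_t)=\sum_{i\in\mathcal N} d_{it}^*(\pi_t)$, and the envelope theorem identifies $\hat{\ELoss}_t'(d_t^*)$ with the common marginal loss $\hat{\ELoss}_{it}'(d_{it}^*)$ (the multiplier on the constraint $\sum_i x_i=d_t$). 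Smoothness of $\hat{\ELoss}_t$ then follows from the implicit function theorem applied to the marginal-equalizing conditions, using $\hat{\ELoss}_{it}''>0$.

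From this backbone the regularity conditions (Assumption \ref{Assumption:ELoss_Properties}) follow quickly. Non-negativity is immediate because $\hat{\ELoss}_t(d_t)=\min\sum_i\hat{\ELoss}_{it}(x_i)\ge 0$. The common-marginal identity gives $\hat{\ELoss}_t'=\hat{\ELoss}_{it}'(d_{it}^*)<0$, and differentiating the equalizing conditions and summing yields the parallel-sum formula $1/\hat{\ELoss}_t''(d_t)=\sum_i 1/\hat{\ELoss}_{it}''(d_{it}^*)$, so $\hat{\ELoss}_t''>0$. For the third-derivative sign I would route through \eqref{clm:J_triple_demand_convex}: each individual demand is convex in price (since $\hat{\ELoss}_{it}'''<0$), hence the aggregate demand $d_t^*=\sum_i d_{it}^*$ is convex in price, and applying \eqref{clm:J_triple_demand_convex} to the aggregate consumer (legitimate once $\hat{\ELoss}_t$ is smooth with $\hat{\ELoss}_t''>0$) gives $\hat{\ELoss}_t'''<0$.

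The two monotone period orderings also transfer cleanly, after noting that dividing by the constant $A_i>0$ preserves Assumptions \ref{Assumption: PeriodLoss_order}--\ref{Assumption: PeriodSW_order} for the normalized losses. For Assumption \ref{Assumption: PeriodLoss_order} I would use a single feasibility comparison: letting $\{x_i^{\PE}\}$ be the peak-optimal split of $d$, $\hat{\ELoss}_\OP(d)\le\sum_i\hat{\ELoss}_{i\OP}(x_i^{\PE})\le\sum_i\hat{\ELoss}_{i\PE}(x_i^{\PE})=\hat{\ELoss}_\PE(d)$, where the first step is optimality of $\hat{\ELoss}_\OP$ and the second is the pointwise individual ordering. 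For Assumption \ref{Assumption: PeriodFW_order} I would use demand-curve dominance: the individual ordering $|\hat{\ELoss}_{i\PE}'|\ge|\hat{\ELoss}_{i\OP}'|$ places each peak demand curve weakly to the right, $d_{i\PE}^*(\pi)\ge d_{i\OP}^*(\pi)$ at every price; summing gives $d_\PE^*(\pi)\ge d_\OP^*(\pi)$, and inverting this aggregate dominance returns the marginal ordering $|\hat{\ELoss}_\PE'(d)|\ge|\hat{\ELoss}_\OP'(d)|$ at every quantity.

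The curvature ordering (Assumption \ref{Assumption: PeriodSW_order}) is the main obstacle. By the parallel-sum formula, $\hat{\ELoss}_\PE''(d)\ge\hat{\ELoss}_\OP''(d)$ is equivalent to aggregate peak flexibility being weakly below aggregate off-peak flexibility \emph{at the quantity-matched prices} $\pi_\PE=-\hat{\ELoss}_\PE'(d)$ and $\pi_\OP=-\hat{\ELoss}_\OP'(d)$, where the previous step guarantees $\pi_\PE\ge\pi_\OP$. The delicate point is that this cannot be reduced to a common-price comparison: at a fixed price the individual assumptions pull in opposite directions (peak consumes more, and each individual curvature falls with quantity by $\hat{\ELoss}_{it}'''<0$), so a same-price curvature comparison is genuinely ambiguous and neither naive two-step bound closes. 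The argument must instead exploit the exact price gap created by matching aggregate quantities. My plan is to work at equal aggregate quantity and combine the two forces that both depress peak flexibility there: Assumption \ref{Assumption: PeriodFW_order} forces $\pi_\PE\ge\pi_\OP$ (and flexibility decreases in price, as in \eqref{clm:nondec_consumption} with $\hat{\ELoss}_{it}'''<0$), while Assumption \ref{Assumption: PeriodSW_order} makes each peak loss more sharply curved. Concretely, I would parametrize the marginal-equalizing allocation by the common willingness-to-pay level and show that the constraint $\sum_i d_{i\PE}^*(\pi_\PE)=\sum_i d_{i\OP}^*(\pi_\OP)=d$ couples $\pi_\PE$ and $\pi_\OP$ exactly as needed for the flexibility inequality; this coupling through the aggregate constraint, rather than any termwise monotonicity, is where the careful estimate must be carried out.
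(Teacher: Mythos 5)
Your architecture is essentially the paper's: define $\hat{\ELoss}_t$ by optimally allocating the aggregate quantity, use the envelope theorem to identify $\hat{\ELoss}_t'(d_t^*)=-\pi_t$, derive the parallel-sum formula $1/\hat{\ELoss}_t''=\sum_i 1/\hat{\ELoss}_{it}''$ (equation \eqref{eqn:dJdD_2}), get $\hat{\ELoss}_t'''<0$ from convexity of aggregate demand via \eqref{clm:J_triple_demand_convex}, and transfer the period orderings. Your handling of Assumption \ref{Assumption: PeriodLoss_order} (evaluate the off-peak minimum at the peak-optimal split) and of Assumption \ref{Assumption: PeriodFW_order} (horizontal dominance of individual demand curves, summed and then inverted) is, if anything, more careful than the paper's, whose justification of the marginal ordering ("both equal $\pi$ in equilibrium") reads as a tautology rather than an argument.

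The genuine gap is the curvature ordering, which you correctly isolate as the hard step but never actually prove: your final paragraph is a plan for an estimate, not the estimate. What is needed is $\sum_i 1/\hat{\ELoss}_{i\PE}''(d_{i\PE}^*) \le \sum_i 1/\hat{\ELoss}_{i\OP}''(d_{i\OP}^*)$, where $\{d_{i\PE}^*\}$ and $\{d_{i\OP}^*\}$ are the two period-specific optimal splits of the \emph{same} aggregate $d$. These splits generally differ, and since each $\hat{\ELoss}_{it}''$ is decreasing in its argument (by $\hat{\ELoss}_{it}'''<0$), any consumer with $d_{i\PE}^*>d_{i\OP}^*$ contributes a term that cannot be signed by the pointwise hypothesis $\hat{\ELoss}_{i\PE}''(x)\ge\hat{\ELoss}_{i\OP}''(x)$; because the splits sum to the same $d$, such consumers must exist whenever the splits differ at all, so no termwise bound closes --- exactly as you observe. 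You should know that the paper's own proof closes this step by precisely the termwise harmonic-mean comparison you reject, writing $\hat{\ELoss}_{\PE}''(d)=1\big/\sum_i\bigl(1/\hat{\ELoss}_{i\PE}''(d_{i\PE}^*)\bigr)\ge 1\big/\sum_i\bigl(1/\hat{\ELoss}_{i\OP}''(d_{i\OP}^*)\bigr)=\hat{\ELoss}_{\OP}''(d)$ and justifying it only with ``the harmonic mean preserves this ordering.'' So the subtlety you identify is real and is elided in the paper as well. To make your proof complete you must either carry out the coupled estimate you sketch (showing that the price gap $\pi_{\PE}\ge\pi_{\OP}$ forced by quantity-matching, together with Assumption \ref{Assumption: PeriodSW_order}, dominates the reallocation effect), exhibit a counterexample showing additional hypotheses are needed, or impose a condition (e.g., identical consumers, or $d_{i\PE}^*\le d_{i\OP}^*$ for every $i$ at matched aggregate quantity) under which the termwise comparison is legitimate.
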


\begin{proof}
Since all $\hat{\ELoss}_{it} \geq 0$, their sum satisfies $\hat{\ELoss}_t \geq 0$.

We compute the first derivative of aggregate loss using the relationship between loss and price derivatives. Since:
$$\frac{\partial \hat{\ELoss}_{t}}{\partial \pi_{t}} = \frac{\partial \hat{\ELoss}_{t}}{\partial d_{t}} \cdot \frac{\partial d_{t}}{\partial \pi_{t}}$$

we have:
$$\frac{\partial \hat{\ELoss}_{t}}{\partial d_{t}} = \frac{\partial \hat{\ELoss}_{t}/\partial \pi_{t}}{\partial d_{t}/\partial \pi_{t}}$$

From individual optimization, we know that for interior solutions, $\hat{\ELoss}_{it}'(d_{it}^*) = \pi_t$ for all consumers. By the envelope theorem, this implies:
$$\frac{\partial \hat{\ELoss}_{t}}{\partial \pi_{t}} = -\pi_t \cdot \frac{\partial d_{t}}{\partial \pi_{t}}$$

Therefore:
$$\frac{\partial \hat{\ELoss}_{t}}{\partial d_{t}} = \frac{-\pi_t \cdot \partial d_{t}/\partial \pi_{t}}{\partial d_{t}/\partial \pi_{t}} = -\pi_t \label{eqn:agg_marg_ELoss}$$

This confirms that marginal aggregate loss is negative, consistent with Assumption \ref{Assumption:ELoss_Properties}.

For the second derivative:
$$\frac{\partial^{2} \hat{\ELoss}_{t}}{\partial d_{t}^{2}} = \frac{\frac{\partial}{\partial \pi_{t}} (-\pi_{t})}{\frac{\partial d_{t}}{\partial \pi_{t}}} = -\frac{1}{\partial d_{t}/\partial \pi_{t}} \label{eqn:aggregate_ELoss_curvature}$$

Since $\partial d_{t}/\partial \pi_{t} < 0$, we have $\hat{\ELoss}_{t}'' > 0$, confirming convexity.

For the third derivative:
$$\hat{\ELoss}_{t}'''(d_{t}) = \frac{\partial^{2}d_{t}/\partial \pi_{t}^{2}}{(\partial d_{t}/\partial \pi_{t})^{3}} \label{eqn:triple_deriv_agg_ELoss}$$

Since individual demands are convex ($\frac{\partial^{2}d_{it}}{\partial \pi_{t}^{2}} > 0$), aggregate demand is also convex, so $\frac{\partial^{2}d_{t}}{\partial \pi_{t}^{2}} > 0$. Combined with $\partial d_{t}/\partial \pi_{t} < 0$, this gives $\hat{\ELoss}_{t}''' < 0$.
\end{proof}

\textbf{Relationship between aggregate and individual curvatures:}

We can relate aggregate curvature to individual curvatures. From individual flexibility:
\begin{align}
    \frac{\partial d^*_t}{\partial \pi_t} = \sum_{i\in \mathcal{N}} \frac{\partial d_{i t}^*}{\partial \pi_t} = \sum_{i\in \mathcal{N}} \frac{-1}{\hat{\ELoss}_{it}''(d^*_{it}(\pi_t))} \label{eqn:Agg_MargDemand}
\end{align}

Combining with \eqref{eqn:aggregate_ELoss_curvature}:
\begin{align}
    \hat{\ELoss}_{t}''(d_t^*) = -\frac{1}{\partial d_{t}/\partial \pi_{t}} = \frac{1}{\sum_{i\in \mathcal{N}} \frac{1}{\hat{\ELoss}_{it}''(d^*_{it}(\pi_t))}} \label{eqn:dJdD_2}
\end{align}

This shows that aggregate curvature is the \emph{harmonic mean} of individual curvatures.

\textbf{Period-specific assumptions for aggregate consumer:}

\begin{proof}
\emph{First-Order Condition:} From \eqref{eqn:agg_marg_ELoss}, we have $\frac{\partial \hat{\ELoss}_{t}}{\partial d_{t}} = -\pi_{t}$, confirming that \eqref{eqn:FOC_cons} holds for the aggregate consumer. Inverting gives $d_{t}^{*} = (\hat{\ELoss}_t')^{-1}(-\pi_{t})$, showing that \eqref{eqn:x_opt} holds.

\emph{Period Loss Ordering (Assumption \ref{Assumption: PeriodLoss_order}):} For any aggregate consumption level $d$:
$$\hat{\ELoss}_{\OP}(d) = \min_{\sum_i d_{i\OP} = d} \sum_i \hat{\ELoss}_{i\OP}(d_{i\OP}) \leq \min_{\sum_i d_{i\PE} = d} \sum_i \hat{\ELoss}_{i\PE}(d_{i\PE}) = \hat{\ELoss}_{\PE}(d)$$
where the inequality follows from the individual assumption.

\emph{Marginal Loss Ordering (Assumption \ref{Assumption: PeriodFW_order}):} At any price $\pi$, individual consumers satisfy $|\hat{\ELoss}_{i\PE}'(d_{i\PE}^*(\pi))| \geq |\hat{\ELoss}_{i\OP}'(d_{i\OP}^*(\pi))|$. Since both equal $\pi$ in equilibrium, this transfers to the aggregate level through the envelope theorem.

\emph{Curvature Ordering (Assumption \ref{Assumption: PeriodSW_order}):} From \eqref{eqn:dJdD_2}, since $\hat{\ELoss}''_{i\PE}(d) \geq \hat{\ELoss}''_{i\OP}(d)$ for all $i$ and all $d$, the harmonic mean preserves this ordering:
$$\hat{\ELoss}_{\PE}''(d_{\PE}^*) = \frac{1}{\sum_{i}\frac{1}{\hat{\ELoss}''_{i\PE}(d_{i\PE}^*)}} \geq \frac{1}{\sum_{i}\frac{1}{\hat{\ELoss}''_{i\OP}(d_{i\OP}^*)}} = \hat{\ELoss}_{\OP}''(d_{\OP}^*)$$
\end{proof}

\textbf{Additional properties:}

\begin{proposition}
Aggregate curvature increases in individual curvatures: For any consumer $j$, $\frac{\partial \hat{\ELoss}_{t}''}{\partial \hat{\ELoss}''_{jt}} > 0$.
\end{proposition}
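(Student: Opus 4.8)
The plan is to work directly from the closed-form expression for aggregate curvature already derived in \eqref{eqn:dJdD_2}, namely
\begin{align*}
\hat{\ELoss}_{t}''(d_t^*) = \left(\sum_{i \in \mathcal{N}} \frac{1}{\hat{\ELoss}_{it}''(d^*_{it}(\pi_t))}\right)^{-1},
\end{align*}
which exhibits aggregate curvature as the reciprocal of a sum of reciprocals of the individual curvatures (a harmonic-mean structure). Treating the aggregate curvature as a function of the individual curvature values $\{\hat{\ELoss}_{it}''\}_{i \in \mathcal{N}}$, consistent with the partial-derivative notation in the statement, I would reduce the claim to a single differentiation.

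Writing $S \coloneqq \sum_{i \in \mathcal{N}} 1/\hat{\ELoss}_{it}''$ so that $\hat{\ELoss}_t'' = S^{-1}$, the chain rule gives
\begin{align*}
\frac{\partial \hat{\ELoss}_{t}''}{\partial \hat{\ELoss}_{jt}''} = -S^{-2} \cdot \frac{\partial S}{\partial \hat{\ELoss}_{jt}''} = -S^{-2}\cdot\left(-\frac{1}{(\hat{\ELoss}_{jt}'')^2}\right) = \frac{(\hat{\ELoss}_{t}'')^2}{(\hat{\ELoss}_{jt}'')^2}.
\end{align*}
By Assumption \ref{Assumption:Consumer} every $\hat{\ELoss}_{it}'' > 0$, so $S > 0$ and both squared terms are strictly positive, yielding $\partial \hat{\ELoss}_{t}''/\partial \hat{\ELoss}_{jt}'' > 0$, as claimed. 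This also delivers the accompanying economic reading at once: since individual flexibility is $1/\hat{\ELoss}_{it}''$ and aggregate flexibility is $1/\hat{\ELoss}_t''$, an increase in any consumer's curvature (a drop in their individual flexibility) lowers aggregate flexibility.

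The one point that needs care, and which I expect to be the only genuine obstacle, is that each individual curvature is evaluated at the optimal allocation $d^*_{it}(\pi_t)$, which itself shifts when consumer $j$'s loss function is altered; in principle, perturbing $j$'s primitive loss function produces both the direct effect computed above and an indirect reallocation effect operating through $\hat{\ELoss}_{it}'''$. The plan is therefore to fix the precise sense of the derivative. Under the reading that matches the notation—holding the other individual curvatures fixed and regarding the $\hat{\ELoss}_{it}''$ as the arguments—the reallocation channel is subsumed into those fixed arguments and the computation above is exact. If one instead wishes to perturb the underlying loss function at a fixed aggregate demand, I would invoke the envelope characterization already used to obtain \eqref{eqn:dJdD_2}: the optimal allocation equalizes marginal losses across consumers, so the first-order allocation response preserves the common marginal-loss (Lagrange multiplier) relation and the sign of the effect is again governed by the harmonic-sum structure. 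In either formulation, strict positivity of all individual curvatures yields the strict inequality.
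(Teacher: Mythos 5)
Your proof is correct and is essentially identical to the paper's: both differentiate the harmonic-mean formula \eqref{eqn:dJdD_2} with respect to $\hat{\ELoss}_{jt}''$ and conclude positivity from the positivity of all individual curvatures. Your additional remark clarifying the sense of the partial derivative (treating the $\hat{\ELoss}_{it}''$ as the arguments, which is what the paper's notation implicitly does) is a reasonable point of care but does not change the argument.
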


\begin{proof}
Using \eqref{eqn:dJdD_2}:
\begin{align*} 
    \frac{\partial}{\partial \hat{\ELoss}''_{jt}}\hat{\ELoss}_{t}'' = \frac{\partial}{\partial \hat{\ELoss}''_{jt}}\frac{1}{\sum_{i\in\mathcal{N}}\frac{1}{\hat{\ELoss}''_{it}}} = \frac{1}{\left(\sum_{i\in\mathcal{N}}\frac{1}{\hat{\ELoss}''_{it}}\right)^2}  \cdot \frac{1}{(\hat{\ELoss}''_{jt})^2}>0
\end{align*}
This shows that when individual consumers become less price-responsive, the aggregate consumer also becomes less flexible.
\end{proof}

\begin{proposition}
Aggregate demand convexity: $\frac{\partial^2 d_{t}^{*}(\pi_t)}{\partial \pi_t^2} > 0 \iff \hat{\ELoss}_{t}'''(d_{t}^{*}(\pi_t)) < 0$.
\end{proposition}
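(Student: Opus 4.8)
The plan is to mirror the individual-consumer argument that established \eqref{clm:J_triple_demand_convex}, now carried out at the aggregate level and exploiting the fact (proved earlier in this section) that the aggregate consumer satisfies the same first-order and flexibility characterizations as individual consumers. First I would restrict attention to interior aggregate demand, $0 < d_t^* < \bar{d}_t$, where the aggregate flexibility formula $\partial d_t^*/\partial \pi_t = -1/\hat{\ELoss}_t''(d_t^*)$ applies. At the boundaries $d_t^* \in \{0,\bar{d}_t\}$ demand is locally constant in $\pi_t$, so $\partial^2 d_t^*/\partial \pi_t^2 = 0$ and convexity is governed entirely by the interior behavior, exactly as in the individual case.

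The core step is a single application of the chain rule to the aggregate flexibility. Differentiating $\partial d_t^*/\partial \pi_t = -1/\hat{\ELoss}_t''(d_t^*)$ with respect to $\pi_t$ gives $\partial^2 d_t^*/\partial \pi_t^2 = [\hat{\ELoss}_t'''(d_t^*)/(\hat{\ELoss}_t''(d_t^*))^2]\,(\partial d_t^*/\partial \pi_t)$, and substituting the flexibility formula once more yields $\partial^2 d_t^*/\partial \pi_t^2 = -\hat{\ELoss}_t'''(d_t^*)/(\hat{\ELoss}_t''(d_t^*))^3$. Equivalently, this identity is already available as \eqref{eqn:triple_deriv_agg_ELoss}, which writes $\hat{\ELoss}_t'''(d_t) = (\partial^2 d_t/\partial \pi_t^2)/(\partial d_t/\partial \pi_t)^3$; I could simply rearrange that expression rather than rederive it.

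Finally I would read off the sign. The aggregate curvature satisfies $\hat{\ELoss}_t''(d_t^*) > 0$, established earlier via \eqref{eqn:dJdD_2} as the harmonic mean of the strictly positive individual curvatures, so $(\hat{\ELoss}_t''(d_t^*))^3 > 0$ and the scalar prefactor in $\partial^2 d_t^*/\partial \pi_t^2 = -\hat{\ELoss}_t'''(d_t^*)/(\hat{\ELoss}_t''(d_t^*))^3$ is strictly positive. Therefore $\partial^2 d_t^*/\partial \pi_t^2 > 0$ holds if and only if $\hat{\ELoss}_t'''(d_t^*) < 0$, which is the claimed equivalence.

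I do not expect a genuine obstacle: the result is pure sign-chasing from the aggregate flexibility formula together with $\hat{\ELoss}_t'' > 0$. The only point demanding care is that the aggregate loss $\hat{\ELoss}_t$ is defined implicitly through a minimization, so its derivatives lack an explicit closed form; however, differentiability and the exact expressions for $\hat{\ELoss}_t'$, $\hat{\ELoss}_t''$, and the flexibility $\partial d_t^*/\partial \pi_t$ were already secured earlier in the aggregate-consumer construction through the envelope theorem, so they may be invoked directly here.
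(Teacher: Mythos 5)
Your proof is correct and takes essentially the same route as the paper: both rest on the identity $\hat{\ELoss}_{t}'''(d_{t}) = \bigl(\partial^{2}d_{t}/\partial \pi_{t}^{2}\bigr)/\bigl(\partial d_{t}/\partial \pi_{t}\bigr)^{3}$ (equivalently, $\partial^{2}d_{t}^{*}/\partial \pi_{t}^{2} = -\hat{\ELoss}_{t}'''/(\hat{\ELoss}_{t}'')^{3}$) and read off the sign, the paper from $(\partial d_{t}/\partial \pi_{t})^{3}<0$ and you from $(\hat{\ELoss}_{t}'')^{3}>0$, which are the same observation. The only cosmetic difference is that the paper additionally notes that aggregate demand inherits convexity from individual demands, whereas you prove only the stated equivalence, which is all the proposition asks for.
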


\begin{proof}
From \eqref{eqn:triple_deriv_agg_ELoss}:
$$\hat{\ELoss}_{t}'''(d_{t}) = \frac{\partial^{2}d_{t}/\partial \pi_{t}^{2}}{(\partial d_{t}/\partial \pi_{t})^{3}}$$

Since $\partial d_{t}/\partial \pi_{t} < 0$, the signs of $\hat{\ELoss}_{t}'''$ and $\partial^{2}d_{t}/\partial \pi_{t}^{2}$ are opposite. Since individual demands are convex ($\frac{\partial^{2}d_{it}}{\partial \pi_{t}^{2}} > 0$), aggregate demand is also convex, giving $\hat{\ELoss}_{t}''' < 0$.
\end{proof}


\section{Proofs for Section \ref{sec:equilibrium}}\label{Appendix:equilibrium}
\textbf{The explanations for \eqref{eqn:var_FOC}, \eqref{eqn:flat_consumer_BR}, and \eqref{eqn:flat_FOC_simple} are in the main text.}

\vspace{0.5cm}
\textbf{Proof of \eqref{eqn:flat_price}}

\begin{proof}
Using the first-order condition of the consumer problem \eqref{eqn:FOC_cons} we have
\begin{align*}
    -\frac{\partial \hat{\ELoss}_{\PE}(d_{\PE}^{\F})}{\partial d_{\PE}} &=-\frac{\partial \hat{\ELoss}_{\OP}(d_{\OP}^{\F})}{\partial d_\OP}=\pi^{\F},
\end{align*}

    while using the first-order condition of \eqref{eqn:operator_opt}, we find that
    \begin{align*}
    \frac{\partial \Welf_{\PE}}{\partial \pi} \biggr \rvert_{\pi^{\F}} &= -\frac{\partial \Welf_{\OP}}{\partial \pi} \biggr \rvert_{\pi^{\F}}\\ 
        \implies\sum_{t \in \mathcal{T}} \frac{\partial \hat{\ELoss}_t(d_{t}^{*}(\pi^{\F}))}{\partial \pi} &= -\sum_{t \in \mathcal{T}} \frac{\partial C(d_{t}^{*}(\pi^{\F}))}{\partial \pi}.
    \end{align*}
     Expanding each term using the chain rule, we find that

    \begin{align*}
        &\sum_{t \in \mathcal{T}} \left(\frac{\partial \hat{\ELoss}_t(d)}{\partial d} \biggr \rvert_{d_{t}^{*}(\pi^{\F})} \cdot  \frac{\partial d_{t}^{*}(\pi)}{\partial \pi} \biggr \rvert_{\pi^{\F}}\right) = -\sum_{t \in \mathcal{T}} \left(\frac{\partial C(d)}{\partial d} \biggr \rvert_{d_{t}^{*}(\pi^{\F})} \frac{\partial d_{t}^{*}(\pi)}{\partial \pi} \biggr \rvert_{\pi^{\F}}\right)
    \end{align*}
    As we know that $\frac{\partial \hat{\ELoss}_t(d)}{\partial d} \biggr \rvert_{d_{t}^{*}(\pi^{\F})} = -\pi^{\F}$, we can rewrite this as,
    \begin{align*}
        &\implies -\pi^{\F} \sum_{t \in \mathcal{T}} \frac{\partial d_{t}^{*}(\pi)}{\partial \pi} \biggr \rvert_{\pi^{\F}} = -\sum_{t \in \mathcal{T}} \left(\frac{\partial C(d)}{\partial d} \biggr \rvert_{d_{t}^{*}(\pi^{\F})} \frac{\partial d_{t}^{*}(\pi)}{\partial \pi} \biggr \rvert_{\pi^{\F}}\right)\\
        &\implies \pi^{\F} = \frac{\sum_{t \in \mathcal{T}} \left(\frac{\partial C(d)}{\partial d} \biggr \rvert_{d_{t}^{*}(\pi^{\F})} \frac{-1}{\hat{\ELoss}_{t}''(d_{t}^{*}(\pi^{\F}))}\right)}{\sum_{t \in \mathcal{T}} \frac{-1}{\hat{\ELoss}_{t}''(d_{t}^{*}(\pi^{\F}))}}\\
        &\implies \pi^{\F} = \left(\frac{\frac{1}{\hat{\ELoss}_{\PE}''(d_{\PE}^{*}(\pi^{\F}))}}{\frac{1}{\hat{\ELoss}_{\PE}''(d_{\PE}^{*}(\pi^{\F}))} + \frac{1}{\hat{\ELoss}_{\OP}''(d_{\OP}^{*}(\pi^{\F}))}}\right)\frac{\partial C(d)}{\partial d} \biggr \rvert_{d_{\PE}^{*}(\pi^{\F})} \\&\hspace{2cm}+ \left(\frac{\frac{1}{\hat{\ELoss}_{\OP}''(d_{\OP}^{*}(\pi^{\F}))}}{\frac{1}{\hat{\ELoss}_{\PE}''(d_{\PE}^{*}(\pi^{\F}))} + \frac{1}{\hat{\ELoss}_{\OP}''(d_{\OP}^{*}(\pi^{\F}))}}\right)\frac{\partial C(d)}{\partial d} \biggr \rvert_{d_{\OP}^{*}(\pi^{\F})}.
    \end{align*}

\end{proof}

\textbf{Proof of Proposition \ref{lem:eq_order}}

\begin{proof}
    Any solution $\pi^{\F}$ to the planner's problem \eqref{eqn:operator_opt} with constraint \eqref{eqn:flat_equal_price} is also feasible under the variable tariff by setting $\pi^{\V}_{\PE}=\pi^{\V}_{\OP}=\pi^{\F}$. Thus, minimum total loss under the variable tariff is at most equal to total loss under the flat tariff.
\end{proof}

\textbf{Proof of Proposition \ref{prop:price_demand_order}}

\begin{proof}

    We first show the price order $\pi_{\PE}^{\V} \geq \pi^{\F} \geq \pi_{\OP}^{\V}$.
    
    In the variable pricing scenario, we can optimize for $\Welf_{\PE}(\pi)$ and $\Welf_{\OP}(\pi)$ separately to find the optimal price in each time period. For the flat pricing scenario, we must have one common price across time periods, so we optimize the total value $\Welf_{\PE}(\pi) + \Welf_{\OP}(\pi)$.

    Examining the relationship between the two variable prices first, using Assumption \eqref{Assumption: PeriodFW_order}, we can say that $\hat{\ELoss}_{\PE}'(d)+C'(d)\leq \hat{\ELoss}_{\OP}'(d)+C'(d)$ for any $d$. When total loss is optimized, we know that $d_{\PE}^{\V}$ satisfies $\hat{\ELoss}_{\PE}'(d_{\PE}^{\V}) + C'(d_{\PE}^{\V}) = 0$ and $d_{\OP}^{\V}$ satisfies $\hat{\ELoss}_{\OP}'(d_{\OP}^{\V}) + C'(d_{\OP}^{\V}) = 0$. Since $\hat{\ELoss}_{\PE}, \hat{\ELoss}_{\OP},$ and $C$ are also convex in $d$, we conclude
    
    \[\hat{\ELoss}_{\PE}'(d_{\OP}^{\V})+C'(d_{\OP}^{\V}) \leq \hat{\ELoss}_{\OP}'(d_{\OP}^{\V})+C'(d_{\OP}^{\V}) = \hat{\ELoss}_{\PE}'(d_{\PE}^{\V}) + C'(d_{\PE}^{\V}) = 0 \implies d^{\V}_{\PE} \geq d^{\V}_{\OP}.\]
    Recall from \eqref{eqn:var_FOC} that $\pi_t = C'(d_{t}) > 0$. Since $C''(d_{t}) > 0$ according to Assumption \ref{Assumption:Supply}, $\pi_{\PE}^{\V} = C'(d_{\PE}^{\V}) \geq C'(d_{\OP}^{\V}) = \pi_{\OP}^{\V}$.
    
    
    We now show that $\pi^{\F}$ lies between $\pi_{\OP}^{\V}$ and $\pi_{\PE}^{\V}$ by utilizing the fact that $C$ and $\hat{\ELoss}_{t}$ are convex functions with respect to consumption $d$, so the planner's problem in each period, $\Welf_{t}$, is also convex with respect to consumption $d$. Suppose $\pi^{\F} = \pi' > \pi_{\PE}^{\V}$ were the optimal price under flat pricing. Then, based on \eqref{clm:nondec_consumption}, $d_{\PE}(\pi') \leq d_{\PE}(\pi_{\PE}^{\V}) = d_{\PE}^{\V}$ and $d_{\OP}(\pi') \leq d_{\OP}(\pi_{\PE}^{\V}) \leq d_{\OP}(\pi_{\OP}^{\V})$. Thus, the amount of consumption in both periods are below the optimum in the variable case as shown in Figure \ref{fig:pi_f_less_pi_1_v}, so the planner's objective function increases in both periods with the price set at $\pi'$ compared to price $\pi_{\PE}^{\V}$. We compute the difference
    
    \[\Welf_{\PE}(d_{\PE}(\pi')) - \Welf_{\PE}(d_{\PE}^{\V}) = -\int_{d_{\PE}(\pi')}^{d_{\PE}^{\V}} \Welf_{\PE}'(a) \,d a.\]

    Since $\Welf_{\PE}'(d_{\PE}^{\V}) = 0$ by \eqref{eqn:var_FOC} and $\Welf_{\PE}'$ is an increasing function by the convexity of $\Welf_{\PE}$ with respect to consumption, 

    \begin{align*}&\int_{d_{\PE}(\pi')}^{d_{\PE}^{\V}} \Welf_{\PE}'(a) \,d a < \int_{d_{\PE}(\pi')}^{d_{\PE}^{\V}} \Welf_{\PE}'(d_{\PE}^{\V}) \,d a = 0 \\
    &\implies \Welf_{\PE}(d_{\PE}(\pi')) - \Welf_{\PE}(d_{\PE}^{\V}) = -\int_{d_{\PE}(\pi')}^{d_{\PE}^{\V}} \Welf_{\PE}'(a) \,d a > 0 \implies \Welf_{\PE}(d_{\PE}(\pi')) > \Welf_{\PE}(d_{\PE}^{\V}).\end{align*}

    Similarly, we can compute that 
    
    \begin{align*}\Welf_{\OP}(d_{\OP}(\pi')) - \Welf_{\OP}(d_{\OP}(\pi_{\PE}^{\V})) = -\int_{d_{\OP}(\pi')}^{d_{\OP}(\pi_{\PE}^{\V})} \Welf_{\OP}'(a) \,d a > -\int_{d_{\OP}(\pi')}^{d_{\OP}(\pi_{\PE}^{\V})} \Welf_{\OP}'(d_{\OP}^{\V}) \,d a = 0 \\\implies \Welf_{\OP}(d_{\OP}(\pi')) > \Welf_{\OP}(d_{\OP}(\pi_{\PE}^{\V})).\end{align*}
    
    Thus, we have shown that $\Welf_{\PE}(d_{\PE}(\pi')) + \Welf_{\OP}(d_{\OP}(\pi')) > \Welf_{\PE}(d_{\PE}^{\V}) + \Welf_{\OP}(d_{\OP}(\pi_{\PE}^{\V}))$, showing that $\pi'$ is not the optimal price under flat pricing.

    Similarly, we can show that any price $\pi'' < \pi_{\OP}^{\V}$ is not the optimal price under flat pricing. This is because the amount of consumption is greater than the optimum variable amount in both periods as shown in Figure \ref{fig:pi_f_more_pi_2_v}, so the planner objective function increases when the price is set to $\pi''$ compared to $\pi_{\OP}^{\V}$. We find that

    \[\Welf_{\PE}(d_{\PE}(\pi'')) - \Welf_{\PE}(d_{\PE}(\pi_{\OP}^{\V})) = \int_{d_{\PE}(\pi_{\OP}^{\V})}^{d_{\PE}(\pi'')} \Welf_{\PE}'(a) \,d a > \int_{d_{\PE}(\pi_{\OP}^{\V})}^{d_{\PE}(\pi'')} \Welf_{\PE}'(d_{\PE}^{\V}) \,d a = 0\]

    and 

    \begin{align*}\Welf_{\OP}(d_{\OP}(\pi'')) - \Welf_{\OP}(d_{\OP}(\pi_{\OP}^{\V})) = \Welf_{\OP}(d_{\OP}(\pi'')) - \Welf_{\OP}(d_{\OP}^{\V}) &= \int_{d_{\OP}^{\V}}^{d_{\OP}(\pi'')} \Welf_{\OP}'(a) \,d a \\&> \int_{d_{\OP}^{\V}}^{d_{\OP}(\pi'')} \Welf_{\OP}'(d_{\OP}^{\V}) \,d a = 0\end{align*}

    so that $\Welf_{\PE}(d_{\PE}(\pi'')) + \Welf_{\OP}(d_{\OP}(\pi'')) > \Welf_{\PE}(d_{\PE}(\pi_{\OP}^{\V})) + \Welf_{\OP}(d_{\OP}^{\V})$, showing that $\pi''$ is not the optimal price under flat pricing.
    
    Thus, the optimal value of $\pi^{\F}$ must lie in the interval $[\pi_{\OP}^{\V}, \pi_{\PE}^{\V}]$.

    \begin{figure}[H]
        \centering
        \begin{minipage}{.45\textwidth}
            \centering
            \begin{tikzpicture}[x=.88cm,y=0.2cm]
                \draw[Latex-Latex] (-1,0) -- (5.5,0) node [anchor=north] {$d$};
                \draw[Latex-Latex] (0,-3) -- (0,20) node [anchor=east] {$\Welf(d)$};
                \draw[Red, domain=4:1, smooth] plot (\x, {1.6*\x^2-8*\x+16}) node [anchor=south] {$\Welf_{\OP}(d)$};
                \draw[ProcessBlue, domain=1.8:4.8, smooth] plot (\x, {1.8*\x^2-12*\x+30}) node [anchor=south] {$\Welf_{\PE}(d)$};
                \draw[dashed] (3.3, 1.8*3.3^2-12*3.3+30) -- (3.3, 0) node [anchor=north] {\footnotesize $d_{\PE}(\pi_{\PE}^{\V})$};
                \draw[dashed] (2, 1.6*2^2-8*2+16) -- (2, 0) node [anchor=north] {\footnotesize $d_{\OP}(\pi_{\PE}^{\V})$};
                \draw[dashed] (3, 0) -- (3, 1.8*3^2-12*3+30) node [anchor=south] {\footnotesize $d_{\PE}(\pi')$};
                \draw[dashed] (1.6, 0) -- (1.6, 1.6*1.6^2-8*1.6+16) node [anchor=south] {\footnotesize $d_{\OP}(\pi')$};
            \end{tikzpicture}
            \caption{Because $d_{\PE}(\pi')$ and $d_{\OP}(\pi')$ both lie to the left of the consumption values $d$ for which $\Welf_{\PE}$ and $\Welf_{\OP}$ reach their optimum, $\pi'$ cannot be the price that minimizes $\Welf_{\PE} + \Welf_{\OP}$.}
            \label{fig:pi_f_less_pi_1_v}        
        \end{minipage}
        \hfill
        \begin{minipage}{.45\textwidth}
            \centering
            \begin{tikzpicture}[x=.88cm,y=0.2cm]
                \draw[Latex-Latex] (-1,0) -- (5.5,0) node [anchor=north] {$d$};
                \draw[Latex-Latex] (0,-3) -- (0,20) node [anchor=east] {$\Welf(d)$};
                \draw[Red, domain=4:1, smooth] plot (\x, {1.6*\x^2-8*\x+16}) node [anchor=south] {$\Welf_{\OP}(d)$};
                \draw[ProcessBlue, domain=1.8:4.8, smooth] plot (\x, {1.8*\x^2-12*\x+30}) node [anchor=south] {$\Welf_{\PE}(d)$};
                \draw[dashed] (2.5, 1.6*2.5^2-8*2.5+16) -- (2.5, 0) node [anchor=north] {\footnotesize $d_{\OP}(\pi_{\OP}^{\V})$};
                \draw[dashed] (3, 0) -- (3, 1.6*3^2-8*3+16) node [anchor=south] {\footnotesize $d_{\OP}(\pi'')$};
                \draw[dashed] (3.75, 1.8*3.75^2-12*3.75+30) -- (3.75, 0) node [anchor=north] {\footnotesize $d_{\PE}(\pi_{\OP}^{\V})$};
                \draw[dashed] (4.17, 0) -- (4.17, 1.8*4.17^2-12*4.17+30) node [anchor=south] {\footnotesize $d_{\PE}(\pi'')$};
            \end{tikzpicture}
            \caption{Because $d_{\PE}(\pi'')$ and $d_{\OP}(\pi'')$ both lie to the right of the consumption values $d$ for which $\Welf_{\PE}$ and $\Welf_{\OP}$ reach their optimum, $\pi''$ cannot be the price that minimizes $\Welf_{\PE} + \Welf_{\OP}$.}
            \label{fig:pi_f_more_pi_2_v}                  
        \end{minipage}
    \end{figure}

Using the price order $\pi_{\PE}^{\V} \geq \pi^{\F} \geq \pi_{\OP}^{\V}$, we now show the demand order. Based on the ordering of prices, the difference between period $\PE$ of the variable versus the flat pricing is that the price in the variable case is higher. Thus, as \eqref{clm:nondec_consumption} implies that consumption is non-decreasing in price, $d_{i\PE}^{\V} \leq d_{i\PE}^{\F}$. Similarly, the difference between the off-peak period of the variable versus the flat pricing is that the price in the flat case is higher. Again, because demand is a non-decreasing function of price, $d_{i\OP}^{\F} \leq d_{i\OP}^{\V}$.

    We now show that $d_{i\OP}^{\F} \leq d_{i\PE}^{\F}$. Note that from \eqref{eqn:x_opt} we can write 
    \[d_{i\PE}^{\F} = \min\{\max\{(\ELoss_{\PE}')^{-1}(-\pi^{\F}), 0\}, \overline{d}_{i1}\}, \quad d_{i\OP}^{\F} = \min\{\max\{(\ELoss_{\OP}')^{-1}(-\pi^{\F}), 0\}, \overline{d}_{i2}\}.\] 
    Because $\ELoss_{\PE}'(d) \leq \ELoss_{\OP}'(d) < 0$ at every point by Assumption \ref{Assumption: PeriodFW_order} and furthermore $\ELoss_{\PE}'$ and $\ELoss_{\OP}'$ are increasing functions of $d$, we conclude that $\ELoss_{\PE}'$ reaches $-\pi^{\F}$ at a higher consumption level than $\ELoss_{\OP}'$. Thus, $(\ELoss_{\PE}')^{-1}(-\pi^{\F}) > (\ELoss_{\OP}')^{-\PE}(-\pi^{\F})$. Combined with the assumption in the corollary statement that $\overline{d}_{i\PE} \geq \overline{d}_{i\OP}$, this implies that $d_{i\PE}^{\F} \geq d_{i\OP}^{\F}$.

Additionally, for the aggregate consumer, 
$d_{\OP}^{\V} \leq d_{\PE}^{\V}$. This is because $\pi_{\OP}^{\V} \geq \pi_{\PE}^{\V}$, 
and  $C'(d_t^{\V}) = \pi_t^{\V}$ as shown in \eqref{eqn:var_FOC}). $C'$ is increasing in $d_t$, so a higher peak period price implies more electricity is consumed in the peak period.

\end{proof}

\textbf{Proof of Proposition \ref{lem:pi_deltas}}
\begin{proof}
We start with the first-order conditions of the SWM's problem under flat and variable pricing. By \eqref{eqn:flat_FOC_simple}, we know that

\begin{align*}
    \frac{\partial(\Welf_{\OP}+\Welf_{\PE})}{\partial \pi}\bigg\rvert_{\pi^{\F}} &=0,
\end{align*}
which implies that 
\begin{align*}
    \frac{\partial\Welf_{\PE}}{\partial \pi}\bigg\rvert_{\pi^{\F}} =&-\frac{\partial\Welf_{\OP}}{\partial \pi}\bigg\rvert_{\pi^{\F}}=B, 
\end{align*}
where B is a constant.

For the variable price, \eqref{eqn:var_FOC}, we know that

\begin{align*}
    \frac{\partial \Welf_{\PE}}{\partial \pi}\bigg\rvert_{\pi_{\PE}^{\V}} = \frac{\partial \Welf_{\OP}}{\partial \pi}\bigg\rvert_{\pi_{\OP}^{\V}} &=0. \\
\end{align*}

Taking the difference of the derivatives of the SWM's objective function with respect to price under flat and variable pricing, we find that

\begin{align}
    \frac{\partial \Welf_{\PE}}{\partial \pi}\bigg\rvert_{\pi_{\PE}^{\V}} - \frac{\partial \Welf_{\PE}}{\partial \pi}\bigg\rvert_{\pi^{\F}}&=\frac{\partial \Welf_{\OP}}{\partial \pi}\bigg\rvert_{\pi^{\F}} - \frac{\partial \Welf_{\OP}}{\partial \pi}\bigg\rvert_{\pi_{\OP}^{\V}} \label{eqn:OP_equal_step}.
    %
\end{align}
Using the mean value theorem, we know that for at least one pair $\pi_1 \in [\pi^{\F}, \pi_{\PE}^{\V}],\pi_2 \in [\pi_{\OP}^{\V}, \pi^{\F}]$, there is a mean value of the second derivative of the period-specific operator problem 

\begin{align*}
     \frac{\partial^2 \Welf_{\PE}(\pi_1)}{\partial \pi^2} &=\frac{\frac{\partial \Welf_{\PE}}{\partial \pi}\bigg\rvert_{\pi_{\PE}^{\V}} - \frac{\partial \Welf_{\PE}}{\partial \pi}\bigg\rvert_{\pi^{\F}}}{\pi_{\PE}^{\V}-\pi^{\F}},\\
     \frac{\partial^2 \Welf_{\OP}(\pi_2)}{\partial \pi^2}&=\frac{\frac{\partial \Welf_{\OP}}{\partial \pi}\bigg\rvert_{\pi^{\F}} - \frac{\partial \Welf_{\OP}}{\partial \pi}\bigg\rvert_{\pi_{\OP}^{\V}}}{\pi^{\F}-\pi_{\OP}^{\V}}, 
\end{align*}
 which we can rearrange and substitute into \eqref{eqn:OP_equal_step} to find,
\begin{align}\label{eqn:ratio_operator_loss}
    (\pi_{\PE}^{\V}-\pi^{\F})\frac{\partial^2 \Welf_{\PE}(\pi_1)}{\partial \pi^2} = (\pi^{\F}-\pi_{\OP}^{\V})\frac{\partial^2 \Welf_{\OP}(\pi_2)}{\partial \pi^2}.
\end{align}

Thus, if $\frac{\partial^2 \Welf_{\PE}(\pi_1)}{\partial \pi^2}> \frac{\partial^2 \Welf_{\OP}(\pi_2) }{\partial \pi^2}$, then $ \pi_{\PE}^{\V}-\pi^{\F} = |\Delta \pi_{\PE}| < |\Delta \pi_{\OP}| = \pi^{\F}-\pi_{\OP}^{\V}$. Otherwise if $\frac{\partial^2 \Welf_{\PE}(\pi_1)}{\partial \pi^2} < \frac{\partial^2 \Welf_{\OP}(\pi_2) }{\partial \pi^2}$, then $ |\Delta \pi_{\PE}|> |\Delta \pi_{\OP}|$.

   \end{proof}


\textbf{Proof of \eqref{eqn:second_deriv_welf}}
\begin{proof}
We can see that  
\begin{align*}
    \frac{\partial^{2}\Welf_{t}}{\partial d^{2}} = \frac{\partial^{2}}{\partial d^{2}}(C(d)+\hat{\ELoss}_t(d)) 
    = C'' + \hat{\ELoss}_t''.
\end{align*}

As $C''$ is constant between periods and $\hat{\ELoss}_{\PE}''>\hat{\ELoss}_{\OP}''$, $\frac{\partial^{2} \Welf_{\PE}}{\partial d^{2}}>\frac{\partial^{2} \Welf_{\OP}}{\partial d^{2}}$ evaluated at any $d$. 

Using \eqref{eqn:second_deriv_welf} we calculate the second derivative of the planner's problem.  As $\frac{\partial^2 d_{t}^{*}(\pi)}{\partial \pi^2} =0$, we have $\frac{\partial^{2} \Welf}{\partial \pi^{2}} = d_{t}'(\pi)^2 \cdot (C'' + \hat{\ELoss}_{t}'')$ with both cost and loss as constants. Using \eqref{clm:nondec_consumption}, we find
\begin{align*}
    \frac{\partial^{2} \Welf_t}{\partial \pi^{2}}&= \frac{C'' + \hat{\ELoss}_t''}{(\hat{\ELoss}_t'')^2}.
\end{align*}

As $C$ is constant between periods and $\hat{\ELoss}_{\PE}''>\hat{\ELoss}_{\OP}''$ everywhere, $\frac{\partial^{2} \Welf_{\PE}}{\partial \pi^{2}}<\frac{\partial^{2} \Welf_{\OP}}{\partial \pi^{2}}$ evaluated at any $\pi$. We show that both $L_t$ are convex. First, we compute the second derivative of $L_t$ with respect to price.
\begin{align}
 &\frac{\partial^{2} }{\partial \pi_t^{2}} \Welf_t(d_t^*) \nonumber \\
 &= \frac{\partial}{\partial \pi} \left(\frac{\partial d_{t}^{*}}{\partial \pi_t}\cdot \Welf'(d^*_{t}) \right) \nonumber\\
 &=\frac{\partial^2 d_{t}^{*}}{\partial \pi_t^2}\cdot \Welf'(d^*_{t}) + \left(\frac{\partial d_{t}^{*}}{\partial \pi_t}\right)^2 \cdot \Welf''(d_{t}^*)  \nonumber\\
 &=-\frac{\hat{\ELoss}_{t}'''(d_{t}^{*})}{(\hat{\ELoss}_{t}''(d_{t}^{*}))^3}(C'(d^*_{t}) + \hat{\ELoss}_{t}'(d^*_{t})) + \left(\frac{1}{\hat{\ELoss}_t''(d_t^*)}\right)^2 \cdot (C''(d_{t}^*) + \hat{\ELoss}_{t}''(d^*_{t})).\nonumber
 \end{align}

As $C'(d^*_{\PE}) + \hat{\ELoss}_{\PE}'(d^*_{\PE})>0$ in the peak period, we know that $\frac{\partial^{2} }{\partial \pi_{\PE}^{2}} \Welf_{\PE}(d_{\PE}^*)>0$. Re-examining \eqref{eqn:ratio_operator_loss}, we know that $\frac{\partial^2 \Welf_{\PE}(\pi_{\PE})}{\partial \pi^2}>0$, $\pi_{\PE}^{\V}-\pi^{\F}>0$, and $\pi^{\F}-\pi_{\OP}^{\V}>0$ with the latter two inequalities coming from Prop. \ref{prop:price_demand_order}. Thus, $\frac{\partial^2 \Welf_{\OP}(\pi_{\OP})}{\partial \pi^2}>0$ as well and $\Welf_t$ is convex in both periods. 
\end{proof}

\noindent\textbf{Further Analysis:} 

Starting from
\begin{align}
\frac{\partial^{2} \Welf_t}{\partial \pi_t^{2}}
&= \frac{\partial^2 d_{t}^{*}}{\partial \pi_t^2} \Bigl[C'(d^*_{t}) + \hat{\ELoss}_{t}'(d^*_{t})\Bigr]
+ \Bigl(\frac{\partial d_{t}^{*}}{\partial \pi_t}\Bigr)^{2} \Bigl[C''(d_{t}^*) + \hat{\ELoss}_{t}''(d^*_{t})\Bigr].
\label{eqn:K0}
\end{align}
Recall identities
\[
\frac{\partial^2 d_t^*}{\partial \pi_t^2}
=\hat{\ELoss}_t'''(d_t^*)\Bigl(\frac{\partial d_t^*}{\partial \pi_t}\Bigr)^3,
\qquad
\hat{\ELoss}_t''(d_t^*)=-\frac{1}{\frac{\partial d_t^*}{\partial \pi_t}},
\]
and the consumer interior first-order condition $\hat{\ELoss}_t'(d_t^*)=-\pi_t$, to rewrite \eqref{eqn:K0} as
\begin{align}
\frac{\partial^{2} \Welf_t}{\partial \pi_t^{2}}
&= \hat{\ELoss}_t'''(d_t^*)\Bigl(\frac{\partial d_t^*}{\partial \pi_t}\Bigr)^{3}\Bigl[C'(d_t^*)-\pi_t\Bigr]
+ \Bigl(\frac{\partial d_t^*}{\partial \pi_t}\Bigr)^{2} C''(d_t^*)
- \frac{\partial d_t^*}{\partial \pi_t}.
\label{eq:Kstar}
\end{align}
Taking the partial derivative of \eqref{eq:Kstar} with respect to $\frac{\partial d_t^*}{\partial \pi_t}$ holding $d_t^*$ fixed gives
\begin{align}
\frac{\partial}{\partial \left(\frac{\partial d_t^*}{\partial \pi_t}\right)}
\Bigl(\frac{\partial^{2} \Welf_t}{\partial \pi_t^{2}}\Bigr)
&= 3 \hat{\ELoss}_t'''(d_t^*)\Bigl(\frac{\partial d_t^*}{\partial \pi_t}\Bigr)^{2}\Bigl[C'(d_t^*)-\pi_t\Bigr]
+ 2 \frac{\partial d_t^*}{\partial \pi_t} C''(d_t^*)
- 1.
\label{eq:S}
\end{align}
This is the exact sensitivity of curvature to flexibility at the given $(\pi_t,d_t^*)$.
At $\pi_t^{\V}$ we have $C'(d_t^*)=\pi_t^{\V}$, hence the first term in \eqref{eq:S} vanishes and
\begin{flalign}
\text{At Variable Price:\qquad}\frac{\partial}{\partial \left(\frac{\partial d_t^*}{\partial \pi_t}\right)}
\Bigl(\frac{\partial^{2} \Welf_t}{\partial \pi_t^{2}}\Bigr)\Biggr\rvert_{\pi_t=\pi_t^{\V}}
&= 2 \frac{\partial d_t^*}{\partial \pi_t} C''(d_t^*) - 1.
\label{eq:Svar}
\end{flalign}

At $\pi^{\F}$ the same formula yields
\begin{flalign}
\text{At Flat Price:\qquad}\frac{\partial}{\partial \left(\frac{\partial d_t^*}{\partial \pi_t}\right)}
\Bigl(\frac{\partial^{2} \Welf_t}{\partial \pi_t^{2}}\Bigr)\Biggr\rvert_{\pi_t=\pi^{\F}}
&= 3 \hat{\ELoss}_t'''(d_t^*)\Bigl(\frac{\partial d_t^*}{\partial \pi_t}\Bigr)^{2}\Bigl[C'(d_t^*)-\pi^{\F}\Bigr]
+ 2 \frac{\partial d_t^*}{\partial \pi_t} C''(d_t^*) - 1.
\label{eq:Sflat}
\end{flalign}

If one lets $d_t^*$ and $\frac{\partial d_t^*}{\partial \pi_t}$ co-move with $\pi$ along the interval between $\pi^{\F}$ and $\pi_t^{\V}$, then the situation is more subtle. By definition,
\[
d_t^* =  d_t^*(\pi_t), \qquad \frac{\partial d_t^*}{\partial \pi_t} = \frac{\partial d_t^*}{\partial \pi_t}(\pi_t),
\]
so both the level of demand and its slope vary as $\pi$ varies. Consequently, the higher-order primitives
\[
\hat{\ELoss}_t'''(d_t^*), \qquad C''(d_t^*), \qquad \text{and} \qquad \frac{\partial d_t^*}{\partial \pi_t}
\]
are not constants but functions of the equilibrium demand $ d_t^*(\pi_t)$.

Flexibility slope is still given point-wise as in \eqref{eq:S}, but now every component of \eqref{eq:S} is $\pi$-dependent. Explicitly,
\begin{align} \label{eqn:curve_deriv}
\frac{\partial}{\partial\!\left(\frac{\partial d_t^*}{\partial \pi_t}\right)}
\Bigl(\partial_{\pi_t}^2 \Welf_t\Bigr)(\pi_t)
&= 3\,\hat{\ELoss}_t'''( d_t^*(\pi_t))\Bigl(\tfrac{\partial d_t^*}{\partial \pi_t}(\pi_t)\Bigr)^{2}\Bigl[C'( d_t^*(\pi_t))-\pi_t\Bigr] 
+ 2\,\tfrac{\partial d_t^*}{\partial \pi_t}(\pi_t)\,C''( d_t^*(\pi_t)) - 1.
\end{align}
Thus, as $\pi_t$ moves from $\pi^{\F}$ to $\pi_t^{\V}$, the direct variation comes from the bracket $[C'( d_t^*(\pi_t))-\pi_t]$, but is also indirect variation because $\hat{\ELoss}_t'''$, $C''$, and $\frac{\partial d_t^*}{\partial \pi_t}$ are themselves changing as functions of $\pi_t$. Because we know $\hat{\ELoss}_t'''<0$, $\tfrac{\partial d_t^*}{\partial \pi_t}(\pi_t)\leq 0$, $C''( d_t^*(\pi_t))$ we can see that $C'( d_t^*(\pi_t))-\pi_t$ is the only component without a predefined sign. 

For the peak period (or any period where marginal cost is above the price), when $C'( d_t^*(\pi_t))-\pi\geq 0 $, \eqref{eqn:curve_deriv} is negative. On the other hand for the off-peak period $C'( d_t^*(\pi_t))-\pi_t\leq 0 $, so the sign of \eqref{eqn:curve_deriv} is ambiguous. Because an increase in $\partial d_{t}/\partial\pi_t \leq 0$, an increase in this term decreases the flexibility, $|\partial d_{t}/\partial\pi_t|$. So, curvature is increasing in flexibility for the periods where price exceeds marginal cost and is ambiguous in flexibility for periods where price is exceeded by marginal cost.

\hspace{1cm}

\section{Proofs for Section \ref{sec:Cons_Outcomes}}\label{appendix:Cons_Outcomes}



\textbf{Proof for Theorem \ref{thm:delta_util}}

 The total change in utility from flat to variable pricing $\Delta \Util_{i} = \sum_{t \in \mathcal{T}} \Delta \Util_{it}$ can be written as the integral of consumption over the interval between equilibrium prices,

 \begin{align}
    \Delta \Util_{it} = \int_{\pi^{\F}}^{\pi_{t}^{\V}}\frac{\partial \Util_i(\pi_{it})}{\partial \pi} d\pi &= -\int_{\pi^{\F}}^{\pi_{t}^{\V}} A_id^*_{it}(\pi) d\pi. \label{eqn:utility_change_appendix}
\end{align}

Thus, total change in utility can be written as 

\begin{align}
    \Delta \Util_{i} &= -\int_{\pi^{\F}}^{\pi_{\PE}^{\V}} A_{i}d_{i\PE}^{*}(\pi)  d\pi - \int_{\pi^{\F}}^{\pi_{\OP}^{\V}} A_{i}d_{i\OP}^{*}(\pi)  d\pi \nonumber\\
    &= A_{i}\left(-\int_{\pi^{\F}}^{\pi_{\PE}^{\V}} d_{i\PE}^{*}(\pi)  d\pi + \int_{\pi_{\OP}^{\V}}^{\pi^{\F}} d_{i\OP}^{*}(\pi)  d\pi\right) \label{eqn:total_util_breakdown_app}.
\end{align}

\begin{proof}

We reference \ref{eqn:total_util_breakdown} to derive this sufficient and necessary condition for linear and isoelastic demand. We work with a scaled quantity $\frac{\Delta \Util_{i}}{A_{i}}$, shown in \eqref{eqn:scaled_delta_util}, for notational simplicity.

\begin{align}
    \frac{\Delta \Util_{i}}{A_{i}} = \frac{\Delta \Util_{i\PE}}{A_{i}} + \frac{\Delta \Util_{i\OP}}{A_{i}} = -\int_{\pi^{\F}}^{\pi_{\PE}^{\V}} d_{i\PE}^{*}(\pi)  d\pi + \int_{\pi_{\OP}^{\V}}^{\pi^{\F}} d_{i\OP}^{*}(\pi)  d\pi. \label{eqn:scaled_delta_util}
\end{align}

As shown in Figure \ref{fig:delta_util_first_order_approx}, we use first order approximations to obtain upper bounds on $\frac{\Delta \Util_{i\PE}}{A_{i}}$ and $\frac{\Delta \Util_{i\OP}}{A_{i}}$.

\begin{figure}[H]
    \centering
    \begin{minipage}{.35\linewidth}
        \centering
        \begin{tikzpicture}[scale=4]
            \filldraw[gray, opacity=0.5] (0.5,0) -- (0.5,{2*((0.5+1)^-1 - 0.4)}) -- (0.7,{2*((0.5+1)^-1 - 0.4 - 0.2*(0.5+1)^-2)}) -- (0.7,0) -- cycle;
            \draw[-Latex] (0,0) -- (1.2,0) node[anchor=north] {$\pi$};
            \draw[-Latex] (0,0) -- (0,1.4) node[anchor=east] {$d$};
            \draw[ProcessBlue, domain=0:1, smooth] plot (\x, {2*((\x+1)^-1 - 0.4)}) node[anchor=south] {$d_{i\PE}$};
            \draw[dashed] (0.5,{2*((0.5+1)^-1 - 0.4)}) -- (0.5,0) node[anchor=north] {$\pi^{\F}$};
            \draw[dashed] (0.7,{2*((0.7+1)^-1 - 0.4)}) -- (0.7,0) node [anchor=north] {$\pi_{\PE}^{\V}$};
            \draw[dashdotted] (0.5,{2*((0.5+1)^-1 - 0.4)}) -- (0.7,{2*(((0.5+1)^-1 - 0.4) - 0.2*(0.5+1)^-2)});
            \draw (0.55,{2*((0.5+1)^-1 - 0.4 - 0.05*(0.5+1)^-2)}) -- (0.65,{2*((0.5+1)^-1 - 0.4 - 0.05*(0.5+1)^-2)}) node[anchor=south] {$\frac{\partial d_{i\PE}(\pi_{\PE}^{\V})}{\partial \pi}$} -- (0.65,{2*((0.5+1)^-1 - 0.4 - 0.15*(0.5+1)^-2)});
        \end{tikzpicture}
    \end{minipage}
    \begin{minipage}{.35\linewidth}
        \centering
        \begin{tikzpicture}[scale=4]
            \filldraw[gray, opacity=0.5] (0.5,0) -- (0.5,{2*((0.5+1)^-1 - 0.4)}) -- (0.3,{2*(((0.5+1)^-1 - 0.4) + 0.2*(0.3+1)^-2)}) -- (0.3,0) -- cycle;
            \draw[-Latex] (0,0) -- (1.2,0) node[anchor=north] {$\pi$};
            \draw[-Latex] (0,0) -- (0,1.4) node[anchor=east] {$d$};
            \draw[Red, domain=0:1, smooth] plot (\x, {2*((\x+1)^-1 - 0.4)}) node[anchor=south] {$d_{i\OP}$};
            \draw[dashed] (0.5,{2*((0.5+1)^-1 - 0.4)}) -- (0.5,0) node[anchor=north] {$\pi^{\F}$};
            \draw[dashed] (0.3,{2*((0.3+1)^-1 - 0.4)}) -- (0.3,0) node [anchor=north] {$\pi_{\OP}^{\V}$};
            \draw[dashdotted] (0.3,{2*(((0.5+1)^-1 - 0.4) + 0.2*(0.3+1)^-2)}) -- (0.5,{2*((0.5+1)^-1 - 0.4)});
            \draw (0.35,{2*(((0.5+1)^-1 - 0.4) + 0.15*(0.3+1)^-2)}) -- (0.45,{2*(((0.5+1)^-1 - 0.4) + 0.15*(0.3+1)^-2)}) node[anchor=south] {$\frac{\partial d_{i\OP}(\pi^{\F})}{\partial \pi}$} -- (0.45,{2*(((0.5+1)^-1 - 0.4) + 0.05*(0.3+1)^-2)});
        \end{tikzpicture}
    \end{minipage}
    \caption{The shaded area on the blue curve illustrates a lower bound for $\frac{|\Delta \Util_{i\PE}|}{A_{i}}$ (upper bound for $\frac{\Delta \Util_{i\PE}}{A_{i}}$), while the shaded area on the red curve illustrates an upper bound for $\frac{\Delta \Util_{i\OP}}{A_{i}}$.}
    \label{fig:delta_util_first_order_approx}
\end{figure}
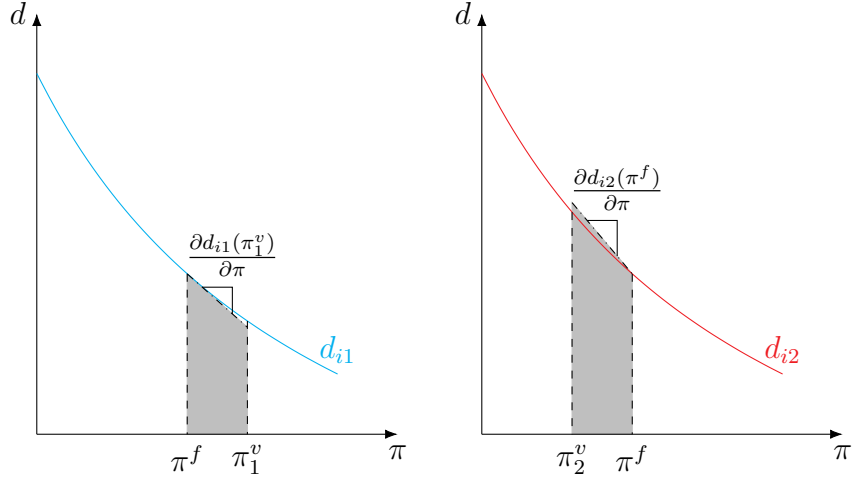

Because of the assumption of convexity of $d^*_{i\PE}$ with respect to $\pi$, we have the lower bound

\begin{align}
    d^*_{i\PE}(\pi) &\geq d_{i\PE}^{\F} + (\pi-\pi^{\F}) \frac{\partial d^*_{i\PE}(\pi^{\F})}{\partial \pi} \quad \text{for all } \pi \in [\pi^{\F}, \pi_{\PE}^{\V}].
\intertext{resulting in the upper bound}
    \frac{\Delta \Util_{i\PE}}{A_{i}} = -\int_{\pi^{\F}}^{\pi_{\PE}^{\V}} d^*_{i\PE}(\pi)  d\pi &\leq -\int_{\pi^{\F}}^{\pi_{\PE}^{\V}} \left(d_{i\PE}^{\F} + (\pi - \pi^{\F}) \frac{\partial d^*_{i\PE}(\pi^{\F})}{\partial \pi}\right)  d\pi \nonumber\\
    &= -|\Delta \pi_{\PE}|d_{i\PE}^{\F} - \frac{|\Delta \pi_{\PE}|^{2}}{2}\frac{\partial d^*_{i\PE}(\pi^{\F})}{\partial \pi} \label{eqn:delta_util_peak_upper_bd}.
\end{align}

Similarly, we have the upper bound 

\begin{align}
    d^*_{i2}(\pi) &\leq d_{i\OP}^{\F} - (\pi^{\F}-\pi)\frac{\partial d^*_{i\OP}(\pi_{\OP}^{\V})}{\partial \pi} \quad \text{for all } \pi \in [\pi_{\OP}^{\V}, \pi^{\F}], \nonumber
\intertext{resulting in the upper bound,}
    \frac{\Delta \Util_{i\OP}}{A_{i}} = \int_{\pi_{\OP}^{\V}}^{\pi^{\F}} d^*_{i\OP}(\pi)  d\pi &\leq \int_{\pi_{\OP}^{\V}}^{\pi^{\F}} \left(d_{i\OP}^{\F} - (\pi^{\F} - \pi) \frac{\partial d^*_{i\OP}(\pi_{\OP}^{\V})}{\partial \pi}\right)  d\pi\\
    &= |\Delta \pi_{\OP}|d_{i\OP}^{\F} - \frac{|\Delta \pi_{\OP}|^{2}}{2}\frac{\partial d^*_{i\OP}(\pi_{\OP}^{\V})}{\partial \pi}. \label{eqn:delta_util_off_peak_upper_bd}
\end{align}

Combining equations \eqref{eqn:delta_util_peak_upper_bd} and \eqref{eqn:delta_util_off_peak_upper_bd}, we compute the sufficient condition

\begin{align*}
    \frac{\Delta \Util_{i}}{A_{i}} = \frac{\Delta \Util_{i\PE}}{A_{i}} + \frac{\Delta \Util_{i\OP}}{A_{i}} \leq -|\Delta \pi_{1}|d_{i\PE}^{\F} - \frac{|\Delta \pi_{\PE}|^{2}}{2} \frac{\partial d^*_{i\PE}(\pi^{\F})}{\partial \pi} + |\Delta \pi_{2}|d_{i\OP}^{\F} - \frac{|\Delta \pi_{2}|^{2}}{2}\frac{\partial d^*_{i\OP}(\pi_{\OP}^{\V})}{\partial \pi} &< 0\\
    \implies (-|\Delta \pi_{\PE}|d_{i\PE}^{\F} + |\Delta \pi_{\OP}|d_{i\OP}^{\F}) + \left(\frac{|\Delta \pi_{\PE}|^{2}}{2} \left|\frac{\partial d^*_{i\PE}(\pi^{\F})}{\partial \pi}\right| + \frac{|\Delta \pi_{\OP}|^{2}}{2} \left|\frac{\partial d^*_{i\OP}(\pi_{\OP}^{\V})}{\partial \pi}\right|\right) < 0.
\end{align*}

Because the approximations made to generate the bound \eqref{eqn:delta_util_suff} are first-order approximations, \eqref{eqn:delta_util_suff} is a sufficient and necessary condition for $\Delta \Util_{i} < 0$ when demand is affine. 
Thus, for a consumer $i$ with quadratic electricity loss $\ELoss_{it}$, $\Delta \Util_{i} < 0$ if and only if

    \begin{align*}
        -|\Delta \pi_{\PE}|d_{i\PE}^{\F} + |\Delta \pi_{\OP}|d_{i\OP}^{\F} + \frac{|\Delta \pi_{\PE}|^{2}}{2} \left|\frac{\partial d^*_{i\PE}(\pi^{\F})}{\partial \pi}\right| + \frac{|\Delta \pi_{\OP}|^{2}}{2} \left|\frac{\partial d^*_{i\OP}(\pi_{\OP}^{\V})}{\partial \pi}\right| < 0,
    \end{align*}

    which occurs exactly when their change in incidence if they were completely inelastic, is upper bounded by a weighted sum of consumer flexibility.

    Now we derive sufficient and necessary conditions for isoelastic utility. Starting with the general form of demand under a variable price $\pi$, we can substitute
$d(\pi) = d^\F \cdot ( \frac{\pi}{\pi^\F} )^{\epsilon(\pi)}$ into the expression of $\Delta \Util_i$,
which results in the following restatement of total change in utility,
\begin{align*}
\Delta \Util_i = -\int_{\pi^\F}^{\pi_{\PE}^\V} d_{i\PE}^\F \cdot \left( \frac{\pi_{\PE}}{\pi^\F} \right)^{-\epsilon_{i\PE}(\pi_{\PE})}   d\pi_{\PE} + \int_{\pi_{\OP}^\V}^{\pi^\F} d_{i\OP}^\F \cdot \left( \frac{\pi_{\OP}}{\pi^\F} \right)^{-\epsilon_{i\OP}(\pi_{\OP})}   d\pi_{\OP}.
\end{align*}

This expression incorporates demand responses over two pricing intervals: from $\pi^\F$ to $\pi_{\PE}^\V$ for peak and from $\pi_{\OP}^\V$ to $\pi^\F$ for off-peak periods. Each term reflects the demand elasticity as a function of price within these intervals.

Elasticity at a given price $\pi$ can be decomposed as follows:
\begin{align*}
\epsilon(\pi) = \epsilon^\F + \Delta \epsilon(\pi),
\end{align*}
where $\epsilon^\F$ is the elasticity at the flat price $\pi^\F$, and $\Delta \epsilon(\pi)$ is movement away from the flat price as the price changes. Substituting the decomposition $\epsilon(\pi) = \epsilon^\F + \Delta \epsilon(\pi)$, we rewrite:
\begin{align}
d(\pi) = d^\F \cdot \left( \frac{\pi}{\pi^\F} \right)^{\epsilon^\F + \Delta \epsilon(\pi)} = d^\F \cdot \left( \frac{\pi}{\pi^\F} \right)^{\epsilon^\F} \cdot \left( \frac{\pi}{\pi^\F} \right)^{\Delta \epsilon(\pi)} \label{eqn:elastic_demand}.
\end{align}
where, for isoelastic utility, $\left( \frac{\pi}{\pi^\F} \right)^{\Delta \epsilon(\pi)}=1$

Substituting this back into \eqref{eqn:elastic_demand}, we obtain:
\begin{align*}
d(\pi) = d^\F \cdot \left( \frac{\pi}{\pi^\F} \right)^{\epsilon^\F}.
\end{align*}
 This expression of demand reflects the demand function's response to both the baseline elasticity at the flat price $\epsilon^\F$ and the variability term $\Delta \epsilon(\pi)$.
Substituting the decomposed elasticity into the utility change expression, we obtain:
\begin{align*}
\Delta \Util_i = -\int_{\pi^\F}^{\pi_{\PE}^\V} d_{i\PE}^\F \cdot \left( \frac{\pi_{\PE}}{\pi^\F} \right)^{\epsilon_{i\PE}^\F} \cdot    d\pi_{\PE}
+ \int_{\pi_{\OP}^\V}^{\pi^\F} d_{i\OP}^\F \cdot \left( \frac{\pi_{\OP}}{\pi^\F} \right)^{\epsilon_{i\OP}^\F}    d\pi_{\OP}.
\end{align*} 
We then take this integral and find when it is less than zero:
\begin{align}
\frac{d_{i\PE}^\F \cdot \pi^\F}{1 + \epsilon_{i\PE}^\F} \left[ 1 - \left( \frac{\pi_{\PE}^\V}{\pi^\F} \right)^{1 + \epsilon_{i\PE}^\F} \right] 
+ \frac{d_{i\OP}^\F \cdot \pi^\F}{1 + \epsilon_{i\OP}^\F} \left[ 1 - \left( \frac{\pi_{\OP}^\V}{\pi^\F} \right)^{1 + \epsilon_{i\OP}^\F} \right] < 0. \nonumber\end{align}
We then can then divide by the flat price term,
\begin{align*}
\frac{d_{i\PE}^\F }{1 + \epsilon_{i\PE}^\F} \left[ 1 - \left( \frac{\pi_{\PE}^\V}{\pi^\F} \right)^{1 + \epsilon_{i\PE}^\F} \right] 
+ \frac{d_{i\OP}^\F }{1 + \epsilon_{i\OP}^\F} \left[ 1 - \left( \frac{\pi_{\OP}^\V}{\pi^\F} \right)^{1 + \epsilon_{i\OP}^\F} \right] < 0.
\end{align*}
This condition is necessary and sufficient when elasticity remains constant: a consumer $i$ loses utility if and only if  this inequality is satisfied.
\end{proof}

\textbf{Proof for Corollary \ref{cor:sufficient_change}}\\ The change in utility during peak periods is:

   \[
   \Delta \Util_{i} = -A_i\int_{\pi^\F}^{\pi_{\PE}^\V} d^*_{it}(\pi)   d\pi+A_i\int^{\pi^\F}_{\pi_{\OP}^\V} d^*_{it}(\pi)   d\pi 
   \]
   In Theorem \ref{thm:delta_util}, we presented the following relationships. 
   \begin{align*}
    \frac{\Delta \Util_{i\PE}}{A_{i}} = -\int_{\pi^{\F}}^{\pi_{\PE}^{\V}} d^*_{i\PE}(\pi)  d\pi &\leq -\int_{\pi^{\F}}^{\pi_{\PE}^{\V}} \left(d_{i\PE}^{\F} + (\pi - \pi^{\F}) \frac{\partial d^*_{i\PE}(\pi^{\F})}{\partial \pi}\right)  d\pi \nonumber
    = -|\Delta \pi_{\PE}|d_{i\PE}^{\F} - \frac{|\Delta \pi_{\PE}|^{2}}{2}\frac{\partial d^*_{i\PE}(\pi^{\F})}{\partial \pi}\\
    \frac{\Delta \Util_{i\OP}}{A_{i}} = \int_{\pi_{\OP}^{\V}}^{\pi^{\F}} d^*_{i\OP}(\pi)  d\pi &\leq \int_{\pi_{\OP}^{\V}}^{\pi^{\F}} \left(d_{i\OP}^{\F} - (\pi^{\F} - \pi) \frac{\partial d^*_{i\OP}(\pi_{\OP}^{\V})}{\partial \pi}\right)  d\pi
    = |\Delta \pi_{\OP}|d_{i\OP}^{\F} - \frac{|\Delta \pi_{\OP}|^{2}}{2}\frac{\partial d^*_{i\OP}(\pi_{\OP}^{\V})}{\partial \pi}.
\end{align*}
We can therefore say that \eqref{eqn:delta_util_suff} is sufficient for all utility functions of the form of \eqref{eqn:ConsumUtil} that fall under Assumptions \ref{Assumption:Consumer}.\\

\textbf{Proof for Proposition \ref{prop:low_high_util_compare} }

\begin{proof}
    We rearrange \eqref{eqn:util_diff_general} to find that $\Delta \Util_{\L} < \Delta \Util_{\H}$ is equivalent to the condition
    \begin{align}
        \int_{\pi^{\F}}^{\pi_{\PE}^{\V}} A_{\H}d_{\hp}(\pi)  d\pi + \int_{\pi_{\OP}^{\V}}^{\pi^{\F}} A_{\L}d_{\lo}(\pi)  d\pi < \int_{\pi_{\OP}^{\V}}^{\pi^{\F}} A_{\H}d_{\ho}(\pi)  d\pi + \int_{\pi^{\F}}^{\pi_{\PE}^{\V}} A_{\L}d_{\lp}(\pi)  d\pi. \label{eqn:delta_util_l_h}
    \end{align}

    Here, we can use the convexity of $d_{it}(\pi)$ to upper bound the left hand side and lower bound the right hand side of \eqref{eqn:delta_util_l_h}. Similar to Figure \ref{fig:delta_util_first_order_approx}, we compute the upper bound on the left hand side of \eqref{eqn:delta_util_l_h} as
    \begin{align}
        &A_{\H} \int_{\pi^{\F}}^{\pi_{\PE}^{\V}} \left(d_{\hp}^{\F} + (\pi - \pi^{\F}) \frac{\partial d_{\hp}^{\V}}{\partial \pi}\right)  d\pi + A_{\L} \int_{\pi_{\OP}^{\V}}^{\pi^{\F}} \left(d_{\lo}^{\F} - (\pi^{\F} - \pi)\frac{\partial d_{\lo}^{\V}}{\partial \pi}\right) d\pi \nonumber\\
        = &A_{\H}\left(|\Delta \pi_{\PE}|d_{\hp}^{\F} - \frac{|\Delta \pi_{\PE}|^{2}}{2}\left|\frac{\partial d_{\hp}^{\V}}{\partial \pi}\right|\right) + A_{\L}\left(|\Delta \pi_{\OP}|d_{\lo}^{\F} + \frac{|\Delta \pi_{\OP}|^{2}}{2}\left|\frac{\partial d_{\lo}^{\V}}{\partial \pi}\right|\right),
\label{eqn:delta_util_l_h_first_order_approx_upper_bound_lhs}
    \end{align}
    while we compute the lower bound on the right hand side of \eqref{eqn:delta_util_l_h} as
    \begin{align}
         &A_{\H}\int_{\pi_{\OP}^{\V}}^{\pi^{\F}} \left(d_{\ho}^{\F} - (\pi^{\F} - \pi)\frac{\partial d_{\ho}^{\F}}{\partial \pi}\right)  d\pi + A_{\L}\int_{\pi^{\F}}^{\pi_{\PE}^{\V}} \left(d_{\lp}^{\F} + (\pi - \pi^{\F})\frac{\partial d_{\lp}^{\F}}{\partial \pi}\right)  d\pi\nonumber\\
         = & A_{\H}\left(|\Delta \pi_{\OP}|d_{\ho}^{\F} + \frac{|\Delta \pi_{\OP}|^{2}}{2}\left|\frac{\partial d_{\ho}^{\F}}{\partial \pi}\right|\right) + A_{\L}\left(|\Delta \pi_{\PE}|d_{\lp}^{\F} - \frac{|\Delta \pi_{\PE}|^{2}}{2}\left|\frac{\partial d_{\lp}^{\F}}{\partial \pi}\right|\right).
         \label{eqn:delta_util_l_h_first_order_approx_lower_bound_rhs}
    \end{align}

    Combining the bounds in Equations \eqref{eqn:delta_util_l_h_first_order_approx_upper_bound_lhs} and \eqref{eqn:delta_util_l_h_first_order_approx_lower_bound_rhs},

    \begin{align}
        A_{\L}\left(-|\Delta \pi_{\PE}|d_{\lp}^{\F} + |\Delta \pi_{\OP}|d_{\lo}^{\F} + \frac{|\Delta \pi_{\PE}|^{2}}{2}\left|\frac{\partial d_{\lp}^{\F}}{\partial \pi}\right| + \frac{|\Delta \pi_{\OP}|^{2}}{2}\left|\frac{\partial d_{\lo}^{\V}}{\partial \pi}\right|\right) \nonumber\\< A_{\H}\left(-|\Delta \pi_{\PE}|d_{\hp}^{\F} + |\Delta \pi_{\OP}|d_{\ho}^{\F} + \frac{|\Delta \pi_{\PE}|^{2}}{2}\left|\frac{\partial d_{\hp}^{\V}}{\partial \pi}\right| + \frac{|\Delta \pi_{\OP}|^{2}}{2}\left|\frac{\partial d_{\ho}^{\F}}{\partial \pi}\right|\right). \label{eqn:exact_compare_quad}
    \end{align}

 In the case of quadratic electricity loss, corresponding to affine electricity demand, the approximations of $\Delta \Util_{\L}$ and $\Delta \Util_{\H}$ in \ref{eqn:exact_compare_quad} are exact. 

 We now examine the isoelastic case. To establish the condition under which \( \Delta \Util_{\L} - \Delta \Util_{\H} < 0 \), reflecting that the utility loss for a low-income consumer (\( \L \)) is smaller than that for a high-income consumer (\( \H \)), we begin with the utility change expression for each consumer. The change in utility is given by \eqref{eqn:exact_compare_quad} (adjusted for common factors).

For a low-income consumer (\( \L \)) and a high-income consumer (\( \H \)), we subtract their respective utility changes to obtain 
\begin{align*}
\Delta \Util_{\L} - \Delta \Util_{\H}\propto&A_{\L}\left( \frac{d_{\L\PE}^\F }{1 + \epsilon_{\L\PE}^\F} \left[ 1 - \left( \frac{\pi_{\PE}^\V}{\pi^\F} \right)^{1 + \epsilon_{\L\PE}^\F} \right] 
+ \frac{d_{\L\OP}^\F }{1 + \epsilon_{\L\OP}^\F} \left[ 1 - \left( \frac{\pi_{\OP}^\V}{\pi^\F} \right)^{1 + \epsilon_{\L\OP}^\F} \right]\right) \\&-A_{\H}\left(\frac{d_{\H\PE}^\F }{1 + \epsilon_{\H\PE}^\F} \left[ 1 - \left( \frac{\pi_{\PE}^\V}{\pi^\F} \right)^{1 + \epsilon_{\H\PE}^\F} \right] 
- \frac{d_{\H\OP}^\F }{1 + \epsilon_{i\OP}^\F} \left[ 1 - \left( \frac{\pi_{\OP}^\V}{\pi^\F} \right)^{1 + \epsilon_{\H\OP}^\F} \right]\right).
\end{align*}

This expression separates the differences in utility loss into peak and off-peak contributions. For \( \Delta \Util_{\L} - \Delta \Util_{\H} < 0 \) to hold, the peak-period utility difference must be dominated by the off-peak-period utility difference, such that the net utility loss for the high-income consumer exceeds that for the low-income consumer. Simplifying further, this inequality depends on the relative magnitudes of \( \frac{1}{1 + \epsilon} \) for low- and high-income consumers, as larger \( \epsilon \) values amplify the response to price changes.
\end{proof}

\textbf{Proof for Corollary \ref{cor:sufficient_change_compare}}
\begin{proof}
    
From the proof for Proposition \ref{prop:low_high_util_compare} , combining the bounds in \eqref{eqn:delta_util_l_h_first_order_approx_upper_bound_lhs} and \eqref{eqn:delta_util_l_h_first_order_approx_lower_bound_rhs} results in the sufficient condition,
    \begin{align}
        A_{\L}\left(-|\Delta \pi_{\PE}|d_{\lp}^{\F} + |\Delta \pi_{\OP}|d_{\lo}^{\F} + \frac{|\Delta \pi_{\PE}|^{2}}{2}\left|\frac{\partial d_{\lp}^{\F}}{\partial \pi}\right| + \frac{|\Delta \pi_{\OP}|^{2}}{2}\left|\frac{\partial d_{\lo}^{\V}}{\partial \pi}\right|\right) \nonumber\\< A_{\H}\left(-|\Delta \pi_{\PE}|d_{\hp}^{\F} + |\Delta \pi_{\OP}|d_{\ho}^{\F} + \frac{|\Delta \pi_{\PE}|^{2}}{2}\left|\frac{\partial d_{\hp}^{\V}}{\partial \pi}\right| + \frac{|\Delta \pi_{\OP}|^{2}}{2}\left|\frac{\partial d_{\ho}^{\F}}{\partial \pi}\right|\right). \label{eqn:suff_compare_quad}
    \end{align}
As these provide an upper bound on low-income utility change and a lower bound on high-income utility change \eqref{eqn:suff_compare_quad} is sufficient but not necessary for the low income consumer to lose utility relative to the high income consumer.

\end{proof}

\section{Proofs for Section \ref{sec:extensions}}\label{appendix:extensions}
\subsection{Proof of the Profit Constraint Condition}

In the unconstrained operator problem shown in \eqref{eqn:operator_opt}-\eqref{eqn:price_constraint}, the operator sets the period-specific price equal to marginal cost, $\pi_t = C'(d_t)$. Introducing a profit requirement alters this outcome if the welfare maximizing profit is not equal to the profit requirement. We examine how prices change given a profit constraint but do not analyze the results of Section \ref{sec:Cons_Outcomes} as the conditions there take price changes as input. To begin, suppose the operator must satisfy an aggregate profit condition,
\begin{align}
    \sum_{t\in\mathcal{T}} \Prof_t(\pi) = \bar{\Prof}, \label{app:profit_constraint}
\end{align}

where $\Prof_t(\pi) = \pi_t d_t^*(\pi_t) - C(d_t^*(\pi_t))$. Note that instead of the equality we could instead include an inequality to represent either a floor (to ensure cost recovery) or a cap (to limit rents).
To incorporate \eqref{app:profit_constraint}, the planner’s problem is written with a Lagrange multiplier $ \nu$:
$$
\mathcal{L} = \sum_t \Welf_t(\pi_t) +  \nu \left(\sum_t \Prof_t(\pi_t) - \bar{\Prof}\right),
$$
where $\Welf_t(\pi_t) = U_t(d_t(\pi_t)) - C(d_t(\pi_t))$ is by-period welfare. 
We then differentiate with respect to $\pi_t$. By the envelope theorem we have,
$$
\big(C'-\pi_t^*\big)\frac{\partial d^*_t}{\partial \pi_t}
 + \nu \left[d_t^*+(\pi^*_t-C')\frac{\partial d^*_t}{\partial \pi_t}\right]=0.
$$
Recall the price elasticities relationship to demand, flexibility and price: $\epsilon_t = \dfrac{\partial d^*_t}{\partial \pi_t}\cdot\dfrac{\pi_t}{d^*_t}$ (note $\epsilon_t<0$). Then for the optimal $\pi^*$,
$$
d^*_t=\frac{\pi_t^*}{\epsilon_t}\frac{\partial d^*_t}{\partial \pi_t}.
$$

Substitute and factor $\partial d^*_t/\partial \pi_t\neq 0$:
$$(C'-\pi_t^*) + \nu \left[\frac{\pi_t^*}{\epsilon_t}+(\pi_t^*-C')\right]=0.$$

Rearrange:
\begin{align}
(C'-\pi_t^*)(1-\nu)=\nu \frac{\pi_t^*}{\epsilon_t}.
\label{distort}
\end{align}
When $\nu>0$ binds, prices deviate from marginal cost. Solve \eqref{distort} for the distortion:
$$
\pi_t^* - C'(d_t^*)
= - \frac{\nu}{1-\nu} \frac{\pi_t^*}{\epsilon_t}
$$
because $\epsilon_t<0$. So the wedge from marginal cost is proportional to the markup factor $\dfrac{\nu}{1-\nu}$ and inversely proportional to the (absolute) demand elasticity.

If the profit floor $\bar{\Prof}$ is set below the unconstrained profit level, then the constraint binds with $ \nu>0$. Because $\tfrac{\partial \Welf_t}{\partial \pi_t}$ is increasing in price and is zero at $\pi_t^{\V}$, the binding constraint forces prices upward: $\pi_t^*>\pi_t^{\V}$. In this case, both peak and off-peak prices rise further above the flat price than in the unconstrained solution.

Conversely, if the profit cap $\bar{\Prof}$ is set above the unconstrained profit level, the constraint binds with $ \nu<0$. In this case, equilibrium requires lowering prices relative to $\pi_t^{\V}$, so that $\pi_t^*<\pi_t^{\V}$. Both peak and off-peak prices move closer to the flat price compared to the unconstrained outcome. If the cap exceeds the maximum feasible profit (i.e., monopoly profit), no feasible solution exists.

This markup (if $ \nu>0$) or markdown (if $ \nu<0$) is inversely proportional to demand elasticity; the operator adjusts prices more in periods where consumers are less responsive to price. 
Because the multiplier $ \nu$ applies to the profit constraint globally, it ties together all periods. A tighter profit requirement (larger $| \nu|$) increases the deviation in every period, but the allocation of that deviation across periods is determined by relative elasticities. Taking the ratio of distortions for two periods $t,t'$ yields
$$
\frac{\pi_t - C'(d_t)}{\pi_{t'} - C'(d_{t'})} = \frac{\pi_t/\epsilon_t}{\pi_{t'}/\epsilon_{t'}}.
$$

This condition shows that the markups are inversely proportional across periods, so that periods with smaller $|\epsilon_t|$ bear a larger share of the adjustment. The cross-period link is essential: regardless of how profits are constrained in aggregate, the optimal adjustment balances out across periods in inverse proportion to their elasticities.

\textbf{Multi Period Proofs}
    Recall from \eqref{eqn:operator_opt} the objective function of the grid operator under flat pricing:
    \begin{align*}
        \Welf = \sum_{t \in \mathcal{T}} \Welf_{t} = \sum_{t \in \mathcal{T}} \hat{\ELoss}_{t}(d_{t}(\pi^{\F})) + C(d_{t}(\pi^{\F})).
    \end{align*}
    To minimize this expression, when all consumer demand levels are in the interior of their limits $d_{it} \in (0, \bar{d}_{it})$, we can take the derivative of $\Welf$ with respect to $\pi^{\F}$, and equate the derivative to $0$. The derivative is as follows:
    \begin{align*}
        &\frac{\partial \Welf}{\partial \pi^{\F}} = 0\\
        \implies &\sum_{t \in \mathcal{T}} \left(\hat{\ELoss}_{t}'(d_{t}(\pi^{\F})) + C'(d_{t}(\pi^{\F}))\right) \left|\frac{\partial d_{t}(\pi^{\F})}{\partial \pi^{\F}} \right|= 0.
        \end{align*}
        Recalling that $\hat{\ELoss}_{t}'(d_{t}(\pi^{\F})) = -\pi^{\F}$ at equilibrium, and that $\partial d_{t}/\partial \pi^{\F}\neq 0$ is negative for interior solutions, we simplify to
        \begin{align*}
        & \sum_{t \in \mathcal{T}} \left(-\pi^{\F} + C'(d_{t}(\pi^{\F}))\right) \left|\frac{\partial d_{t}}{\partial \pi^{\F}}\right| = 0\\
        \implies & \sum_{t \in \mathcal{T}} C'(d_{t}(\pi^{\F})) \left|\frac{\partial d_{t}}{\partial \pi^{\F}}\right| = \pi^{\F} \sum_{t \in \mathcal{T}} \left|\frac{\partial d_{t}}{\partial \pi^{\F}}\right|\\
        \implies & \pi^{\F} = \frac{\sum_{t \in \mathcal{T}} C'(d_{t}(\pi^{\F})) |\partial d_{t}/\partial \pi^{\F}|}{\sum_{t \in \mathcal{T}} |\partial d_{t}/\partial \pi^{\F}|}, 
    \end{align*}
    as desired. We do not include the properties of the variable prices or demands as they need no further extension from what was presented in Section \ref{sec:equilibrium} and \ref{Appendix:equilibrium}. This is because prices and demand chosen period-by period for variable pricing rather across periods as is the case for flat pricing. 
    
    Examining consumer utility change, we first rewrite the change in utility $\Delta \Util_{i}$ and $\Delta \Util_{j}$ of consumers $i$ and $j$ due to a switch from flat to variable pricing as follows:
    \begin{align*}
        \Delta \Util_{i} = A_{i}\left[\sum_{t \in \mathcal{T}} \int_{\pi^{\F}}^{\pi_{t}^{\V}} (-d_{it}(\pi))  d\pi\right],\\
        \Delta \Util_{j} = A_{j}\left[\sum_{t \in \mathcal{T}} \int_{\pi^{\F}}^{\pi_{t}^{\V}} (-d_{jt}(\pi))  d\pi\right].
    \end{align*}
    To obtain a sufficient condition for $\Delta \Util_{i} < \Delta \Util_{j}$, we seek to upper bound $\Delta \Util_{i}$ and lower bound $\Delta \Util_{j}$. We obtain these bounds by employing the following bounds on demand, which can be verified due to convexity: for $\pi$ between $\pi^{\F}$ and $\pi_{t}^{\V}$, we have
    \begin{align*}
        d_{it}(\pi) \geq d_{it}(\pi^{\F}) + (\pi - \pi^{\F})d_{it}'(\pi^{\F}),\\
        d_{it}(\pi) \leq d_{it}(\pi^{\F}) + (\pi - \pi^{\F})d_{it}'(\pi_{t}^{\V}).
    \end{align*}

    We now use these bounds in two distinct cases:\\
    \textbf{Case 1: $\pi^{\F} < \pi_{t}^{\V}$.} Here, we can compute that
    \begin{align*}
        &-\int_{\pi^{\F}}^{\pi_{t}^{\V}} (d_{it}(\pi^{\F}) + (\pi - \pi^{\F})d_{it}'(\pi^{\F}))  d\pi \geq -\int_{\pi^{\F}}^{\pi_{t}^{\V}} d_{it}(\pi)  d\pi \geq -\int_{\pi^{\F}}^{\pi_{t}^{\V}} (d_{it}(\pi^{\F}) + (\pi - \pi^{\F})d_{it}'(\pi_{t}^{\V})).\\
        \intertext{Carrying out the integral computations, we find}
        &(\pi^{\F} - \pi_{t}^{\V})d_{it}^{\F} + \frac{|\Delta \pi_{t            }|^{2}}{2}|d_{it}'(\pi^{\F})| > -\int_{\pi^{\F}}^{\pi_{t}^{\V}} d_{it}(\pi)  d\pi > (\pi^{\F} - \pi_{t}^{\V})d_{it}^{\F} + \frac{|\Delta \pi_{t}|^{2}}{2}|d_{it}'(\pi_{t}^{\V})|.
    \end{align*}

\noindent\textbf{Case 2: $\pi^{\F} > \pi_{t}^{\V}$.} Here, we can compute that
    \begin{align*}
        &-\int_{\pi^{\F}}^{\pi_{t}^{\V}} (d_{it}(\pi^{\F}) + (\pi - \pi^{\F})d_{it}'(\pi^{\F}))  d\pi < -\int_{\pi^{\F}}^{\pi_{t}^{\V}} d_{it}(\pi)  d\pi < -\int_{\pi^{\F}}^{\pi_{t}^{\V}} (d_{it}(\pi^{\F}) + (\pi - \pi^{\F})d_{it}'(\pi_{t}^{\V})).\\
        \intertext{Carrying out the integral computations, we find}
        &(\pi^{\F} - \pi_{t}^{\V})d_{it}^{\F} + \frac{|\Delta \pi_{t}|^{2}}{2}|d_{it}'(\pi^{\F})| < -\int_{\pi^{\F}}^{\pi_{t}^{\V}} d_{it}(\pi)  d\pi < (\pi^{\F} - \pi_{t}^{\V})d_{it}^{\F} + \frac{|\Delta \pi_{t}|^{2}}{2}|d_{it}'(\pi_{t}^{\V})|.
    \end{align*}
    Therefore, for time periods $t_{1} \in \mathcal{T_{1}}$ where $\pi^{\F} < \pi_{t}^{\V}$, we can upper bound $\Delta \Util_{it}$ by
    \begin{align*}
        &A_{i}\left[(\pi^{\F} - \pi_{t}^{\V})d_{it}^{\F} + \frac{|\Delta \pi_{t}|^{2}}{2}|d_{it}'(\pi^{\F})|\right],\\
        \intertext{while we can lower bound $\Delta \Util_{jt}$ by}
        &A_{j}\left[(\pi^{\F} - \pi_{t}^{\V})d_{jt}^{\F} + \frac{|\Delta \pi_{t}|^{2}}{2}|d_{jt}'(\pi_{t}^{\V})|\right].
    \end{align*}

    Similarly, for time periods $t_{2} \in \mathcal{T_{2}}$ where $\pi^{\F} > \pi_{t}^{\V}$, we can upper bound $\Delta \Util_{it}$ by
    \begin{align*}
        &A_{i}\left[(\pi^{\F} - \pi_{t}^{\V})d_{it}^{\F} + \frac{|\Delta \pi_{t}|^{2}}{2}|d_{it}'(\pi_{t}^{\V})|\right],\\
        \intertext{while we can lower bound $\Delta \Util_{jt}$ by}
        &A_{j}\left[(\pi^{\F} - \pi_{t}^{\V})d_{jt}^{\F} + \frac{|\Delta \pi_{t}|^{2}}{2}|d_{jt}'(\pi^{\F})|\right].
    \end{align*}

    Adding up the respective upper bounds for $\Delta \Util_{it}$ and lower bounds for $\Delta \Util_{jt}$ across respective time periods $t_{1}$ in $\mathcal{T}_{1}$ and $t_{2}$ in $\mathcal{T}_{2}$, we obtain the sufficient condition for $\Delta \Util_{i}<0$ analogous to \eqref{eqn:delta_util_suff},
    \begin{align*}
        &A_{i}\left[\sum_{t \in \mathcal{T}}(\pi^{\F} - \pi_{t}^{\V})d_{it}^{\F} + \sum_{t_{1} \in \mathcal{T}_{1}} \frac{|\Delta \pi_{t_{1}}|^{2}}{2}|d_{it_{1}}'(\pi^{\F})| + \sum_{t_{2} \in \mathcal{T}_{2}} \frac{|\Delta \pi_{t_{2}}|^{2}}{2}|d_{it_{2}}'(\pi_{t_{2}}^{\V})|\right] 
        < 0 .
    \end{align*}
    We also reproduce the conditions $\Delta \Util_{i} < \Delta \Util_{j}$ shown in \eqref{eqn:linear_comparison},
    \begin{align*}
        &A_{i}\left[\sum_{t \in \mathcal{T}}(\pi^{\F} - \pi_{t}^{\V})d_{it}^{\F} + \sum_{t_{1} \in \mathcal{T}_{1}} \frac{|\Delta \pi_{t_{1}}|^{2}}{2}|d_{it_{1}}'(\pi^{\F})| + \sum_{t_{2} \in \mathcal{T}_{2}} \frac{|\Delta \pi_{t_{2}}|^{2}}{2}|d_{it_{2}}'(\pi_{t_{2}}^{\V})|\right] \nonumber\\
        < & A_{j}\left[\sum_{t \in \mathcal{T}}(\pi^{\F} - \pi_{t}^{\V})d_{jt}^{\F} + \sum_{t_{1} \in \mathcal{T}_{1}} \frac{|\Delta \pi_{t_{1}}|^{2}}{2}|d_{jt}'(\pi_{t_{1}}^{\V})| + \sum_{t_{2} \in \mathcal{T}_{2}} \frac{|\Delta \pi_{t_{2}}|^{2}}{2}|d_{jt}'(\pi^{\F})|\right].
    \end{align*}

We can also find multi-period utility change under isoelastic demand. In \eqref{clm:dec_U} have established,
$$
\frac{\partial \Util_{it}^{*}(\pi)}{\partial \pi} = -A_i  d_{it}(\pi),
$$
for interior solutions. Integrating from $\pi^{\F}$ to $\pi_t^{\V}$ gives the by-period utility change:
$$
\Delta \Util_{it}
= \Util_{it}^{\V}-\Util_{it}^{\F}
= -A_i\int_{\pi^{\F}}^{\pi_t^{\V}} d_{it}(\pi_t)  d\pi_t.
$$

Recall isoelastic demand 
$$
d^*_{it}(\pi_t)=d_{it}^{\F}\left(\frac{\pi_t}{\pi^{\F}}\right)^{\epsilon_{it}}
= d_{it}^{\F} (\pi^{\F})^{-\epsilon_{it}}\pi_t^{\epsilon_{it}}.
$$

Thus
$$
\Delta \Util_{it}
= -A_i  d_{it}^{\F}(\pi^{\F})^{-\epsilon_{it}}
\int_{\pi^{\F}}^{\pi_t^{\V}} \pi_t^{\epsilon_{it}}  d\pi
= -A_i  d_{it}^{\F}(\pi^{\F})^{-\epsilon_{it}}
\left[\frac{\pi_t^{1+\epsilon_{it}}}{1+\epsilon_{it}}\right]_{\pi^{\F}}^{\pi_t^{\V}}.
$$

Rearranging,
$$
\Delta \Util_{it}
= A_i \frac{d_{it}^{\F}}{1+\epsilon_{it}}
\left[1-\left(\frac{\pi_t^{\V}}{\pi^{\F}}\right)^{1+\epsilon_{it}}\right].
$$

Summing over $t$ yields the stated expression for $\Delta \Util_i$. Since $A_i>0$, the sign of $\Delta \Util_i$ is determined by the sign of the bracketed sum.

For consumers $\H$ and $\L$, we can compare their changes in utility using the following expression:
$$
\Delta \Util_{\H} - \Delta \Util_{\L}
= \sum_{t\in\mathcal T}
\left[
A_{\H}\frac{d_{\H t}^{\F}}{1+\epsilon_{\H t}}
\Big(1-(\tfrac{\pi_t^{\V}}{\pi^{\F}})^{1+\epsilon_{\H t}}\Big)
-
A_{\L}\frac{d_{lt}^{\F}}{1+\epsilon_{lt}}
\Big(1-(\tfrac{\pi_t^{\V}}{\pi^{\F}})^{1+\epsilon_{lt}}\Big)
\right].
$$

Hence, the low-income consumer loses more utility ($\Delta \Util_{\L}<\Delta \Util_{\H}$) iff the right-hand side is positive.

\paragraph{Declaration of generative AI and AI-assisted technologies in the writing process:}
During the preparation of this work the author(s) used generative AI in order to edit this paper. After using this tool/service, the author(s) reviewed and edited the content as needed and take(s) full responsibility for the content of the publication.

\bibliographystyle{elsarticle-num-names} 
\bibliography{References}

\end{document}